\newtheorem{conj}{Conjecture}
\newtheorem{defn}{Definition}
\newtheorem{thm}{Theorem}
\newtheorem{lem}{Lemma}
\newcommand{\su}{\mathfrak{su}}
\newcommand{\SU}{\text{SU}}
\newcommand{\SL}{\text{SL}}
\newcommand{\SCI}{\text{SC}}
\renewcommand{\S}{\text{S}}
\newcommand{\M}{\text{M}}
\newcommand{\FM}{\text{FM}}
\newcommand{\FS}{\text{FS}}
\newcommand{\U}{\text{U}}
\pgfplotsset{compat=1.17}
\tikzset{
	->-/.style args={#1rotate#2}{decoration={markings, mark=at position #1 with {\arrow[scale=1.5,rotate = #2 ]{stealth}}}, postaction={decorate}}
}
\tikzstyle{GraphNode}=[circle, draw=black, fill=black, inner sep=2pt, minimum size=5pt]
\tikzstyle{GraphEdge}=[black]
\pgfmathsetmacro{\gS}{1}
\begin{document}

\begin{titlepage}

\begin{center}

\title{On $\frac18$-BPS black holes and the chiral algebra of $\mathcal{N}=4$~SYM}

\author{Chi-Ming Chang,$^{a,b,c}$ Ying-Hsuan Lin,$^{d}$ and Jingxiang Wu$^{e}$}

\address{${}^a$Yau Mathematical Sciences Center (YMSC), Tsinghua University, Beijing 100084, China}

\address{${}^b$School of Natural Sciences, Institute for Advanced Study, Princeton, NJ 08540, USA}

\address{${}^c$Beijing Institute of Mathematical Sciences and Applications (BIMSA), \\ Beijing 101408, China}

\address{${}^d$Jefferson Physical Laboratory, Harvard University, Cambridge, MA 02138, USA}

\address{${}^e$Mathematical Institute, University of Oxford,
Andrew-Wiles Building,  Woodstock Road, Oxford, OX2 6GG, UK}

\email{cmchang@tsinghua.edu.cn,  yhlin@fas.harvard.edu, wuj1@maths.ox.ac.uk}

\end{center}

\vfill

\begin{abstract}
We investigate the existence of $\frac18$-BPS black hole microstates in the $\su(1,1|2)$ sector of Type IIB string theory on $\mathrm{AdS}_5 \times \mathrm{S}^5$. As will be explained, these states are in one-to-one correspondence with the Schur operators comprising the chiral algebra of $\mathcal{N}=4$ super-Yang-Mills, and a conjecture of Beem et al.\ implies that the Schur sector only contains graviton operators and hence $\frac18$-BPS black holes do not exist. We scrutinize this conjecture from multiple angles. 
Concerning the macroscopic counting, we rigorously prove that the flavored Schur index cannot exhibit black hole entropy growth, and provide numerical evidence that the flavored MacDonald index also does not exhibit such growth.
Next, we go beyond counting to examine the algebraic structure, beginning by presenting evidence for the well-definedness of the super-$\mathcal{W}$ algebra of Beem et al., then using modular differential equations to argue for an upper bound on the lightest non-graviton operator if existent,
and finally performing a systematic construction of cohomologies to recover only gravitons.
Along the way, we clarify key aspects of the 4d/2d correspondence using the formalism of the holomorphic topological twist.

\end{abstract}

\vfill

\end{titlepage}

\tableofcontents

\section{Introduction}

Black holes are central to unraveling the enigmas in quantum gravity.
It is highly desirable to identify the ``simplest'' black holes whose properties and dynamics can be computed exactly. The anti-de Sitter/conformal field theory (AdS/CFT) correspondence \cite{Maldacena:1997re} gives a way to map this problem onto the boundary CFT, where AdS black holes correspond to ensembles of states/operators. Finding the simplest black holes becomes finding the simplest ensemble of CFT operators that resemble black holes.

In 4d $\cN \ge 2$ superconformal field theories, 
there exists a simple sector known as the Schur sector \cite{Beem:2013sza}, whose operators are in one-to-one correspondence with the currents of a 2d chiral algebra. The correlation functions of Schur operators with the kinematics restricted to certain supercharge cohomologies are meromorphic functions fully determined by the chiral algebra. Ensembles of Schur operators are potentially dual to the simplest black holes.

The existing literature does not contain much discussion about black holes in the Schur sector, and the available studies present 
conflicting evidence for their existence.
The Schur operators preserve two supercharges and are thus $\frac14$-BPS in ${\cal N}=2$ theories and $\frac18$-BPS in ${\cal N}=4$ theories. Note that there are three different $\frac18$-BPS sectors of $\cN = 4$, as reviewed in Appendix~\ref{Sec:BPS}, and this paper exclusively considers the so-called $\mathfrak{su}(1,1|2)$ sector.
The $\frac18$-BPS states can be counted (with signs) by the MacDonald index \cite{Gadde:2011uv}, which is defined as a specific limit of the superconformal index \cite{Kinney:2005ej}. The latter enumerates BPS operators preserving a single supercharge, and the limit imposes an additional BPS condition preserving a second supercharge. In the ${\cal N}=4$ super-Yang-Mills (SYM) theory with gauge group $\SU(N)$ or $\U(N)$, it has been suggested \cite{Choi:2018hmj} based on saddle point analyses that the (flavored) MacDonald index evaluated in the Cardy limit produces a large free energy. After a Legendre transform, it gives an order $N^2$ entropy that is enough to account for macroscopically-sized AdS black holes. However, to date, the literature has not reported any $\frac18$-BPS black hole solutions of supergravity.  Recently, \cite{Dias:2022eyq} constructed black hole solutions carrying the same two charges as the $\frac18$-BPS sector without enforcing the BPS condition, but the BPS limits of these non-BPS solutions are singular \cite{Prahar}.

Let us compare our present context with the case involving one fewer supercharge, namely $\frac{1}{16}$-BPS, for which there are significant results in the literature.
Explicit AdS black hole solutions that preserve two supercharges (one supercharge and one superconformal charge from the CFT point of view) were found in \cite{Gutowski:2004ez,Gutowski:2004yv,Chong:2005hr,Chong:2005da,Kunduri:2006ek}. More recently, the entropy of these solutions was reproduced by counting the $\frac{1}{16}$-BPS operators in the ${\cal N}=4$ SYM using the superconformal index in the large $N$ limit \cite{Cabo-Bizet:2018ehj,Choi:2018hmj,Benini:2018ywd}. Going beyond the index, one could directly study the $\frac{1}{16}$-BPS operators by mapping them to the cohomologies of a supercharge $Q$ \cite{Grant:2008sk,Chang:2013fba,Chang:2022mjp}. 
Cohomologies representing genuine black hole microstates remained elusive for more than a decade, the difficulty of which is that at low energies,  
the vast majority of BPS operators are 
graviton cohomologies, which in the large $N$ limit are dual to 10d (multi-)graviton states in AdS$_5 \times \S^5$.
The first non-graviton cohomology was finally discovered in \cite{Chang:2022mjp} whose intricate properties and generalizations were further explored in \cite{Choi:2022caq,Choi:2023znd,Budzik:2023xbr,Chang:2023zqk,Budzik:2023vtr}.

In this paper, we investigate 
the existence of $\frac18$-BPS black holes from the ${\cal N}=4$ SYM point of view, leveraging constraints from the chiral algebra and new understandings of indices. Based on the explicit construction of generators in the chiral algebra at low ranks, a conjecture on the form of all generators of the chiral algebra was proposed by Beem et al. \cite{Beem:2013sza}. We observe that the normal-ordered products of these generators and their superconformal descendants represent the entire graviton cohomology restricted to the Schur sector (see Section~\ref{sec:conj_BH}). Consequently, their conjecture can be rephrased as
\begin{conj}
    \label{Conj}
The space of Schur operators is isomorphic to the Schur graviton cohomology. 
\end{conj}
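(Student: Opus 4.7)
The plan is to decompose the isomorphism into an injection and a surjection, where the injection is largely kinematic and the surjection captures the physical content of the conjecture. First I would make precise the natural map $\iota : \mathcal{H}^{\text{Schur}}_{\text{grav}} \to \mathcal{H}_{\text{Schur}}$ sending each graviton cohomology representative in the Schur sector to its image as a chiral algebra element, using the identification of the Beem et al.\ generators together with their $\mathfrak{su}(1,1|2)$ superconformal descendants. Injectivity should follow from verifying that the finite-$N$ trace relations imposed on the Beem et al.\ generators coincide precisely with those cutting out the graviton cohomology inside the free super-$\mathcal{W}$ algebra; this is essentially a bookkeeping check at each multi-trace level. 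With injectivity in hand, the conjecture reduces to surjectivity, namely that the chiral algebra contains no additional ``stringy'' or ``black hole'' generators beyond those of Beem et al.

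For surjectivity I would pursue three converging lines of attack. (i) \emph{Character matching}: compute the flavored Schur index, which equals the vacuum character of the chiral algebra, and compare it with the generating function of graviton cohomology; agreement is necessary but not sufficient because of possible sign cancellations, so this provides a consistency check rather than a proof. (ii) \emph{Modular bootstrap}: the vacuum character satisfies a finite-order modular differential equation determined by the null states of the super-$\mathcal{W}$ algebra, and I would classify its solutions with the correct asymptotics, aiming to single out the graviton character uniquely while simultaneously obtaining an upper bound on the conformal weight at which any hypothetical non-graviton generator could first appear. (iii) \emph{Explicit cohomological enumeration}: for each rank, construct representatives of Schur cohomology order by order in the conformal dimension $\Delta$ and check algorithmically that every class is a normal-ordered polynomial in the Beem et al.\ generators up to the bound from step (ii); this would complete the proof for that rank.

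The main obstacle is making the argument uniform in $N$, since a hypothetical non-graviton generator could first appear at $\Delta$ growing with $N$, and the modular bound from step (ii) may itself scale with $N$ in a way that defeats finite-rank enumeration. The cleanest resolution would be a structural identification of the Schur chiral algebra with a known algebraic object, for instance a quantum Hamiltonian reduction or a free-field realization, in which the generators are manifestly exhausted by the Beem et al.\ list for every $N$. Failing such a structural result, a weaker strategy is to combine the rigorous absence of black-hole growth in the Schur index with the MDE bound to restrict hypothetical non-graviton generators to a narrow regime, then rule them out representation-theoretically using the full $\mathfrak{psu}(2,2|4)$ symmetry of the parent theory; I expect the genuine difficulty of the conjecture to concentrate here.
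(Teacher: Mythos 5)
The statement you are addressing is Conjecture~\ref{Conj}, and the paper does not prove it --- it is presented as an open conjecture (equivalent to a conjecture of Beem et al.) for which the paper assembles several independent pieces of evidence: a rigorous log-linear bound on the microcanonical flavored Schur index, numerical evaluation of the flavored MacDonald index, OPE computations supporting closure of the graviton generators, a modular-differential-equation bound of $h \le \lfloor (N+1)^2/2\rfloor - 1$ on any hypothetical non-graviton, and brute-force cohomology enumeration for $\SU(2)$, $\SU(3)$, $\SU(4)$ up to finite level. Your proposal is essentially this same program restated as a proof outline, and its central step --- surjectivity, i.e.\ that no non-graviton Schur cohomology classes exist for \emph{every} $N$ and every weight --- is precisely the content of the conjecture and is nowhere established in your sketch. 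Steps (i)--(iii) are consistency checks and finite-rank verifications, not a proof: character matching is (as you concede) insufficient due to sign cancellations; the MDE bound scales quadratically with $N$, so finite-rank enumeration cannot be made uniform in $N$; and the paper itself stresses that the MDE argument rests on unproven assumptions (that the MDEs arise from null vectors beyond the lisse case, and that no exotic cancellation renders a non-graviton module's character invisible).

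Two further points where your sketch understates genuine difficulties that the paper treats as open. First, you describe injectivity --- the identification of the Schur graviton cohomology with the super-$\cW$ algebra of Beem et al.\ --- as ``bookkeeping,'' but the well-definedness and closure of that algebra at general $N$ is itself only conjectural; the paper's Section~\ref{Sec:Closure} provides evidence (explicit OPEs for $N\le 5$ and planar-limit computations) that multi-trace corrections are $1/N$-suppressed and that the graviton generators close, and explicitly leaves the general case open. Second, your proposal implicitly assumes the vector-space isomorphism between Schur operators and the chiral algebra survives quantum corrections; the paper devotes Section~\ref{sec:holotwist} to this, proving vanishing of loop corrections to the 4d twisted differential via the $3$-Laman graph analysis, while leaving the isomorphism between the $Q_0^{4d}$ and $Q^{2d}_{\mathrm{BRST}}$ cohomologies to future work. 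Your closing suggestion --- a structural identification (free-field realization or Hamiltonian reduction) manifestly exhausting the generators for all $N$ --- is indeed the kind of result that could close the gap, but neither you nor the paper supplies it; as written, your proposal is a research plan that reproduces the paper's evidence, not a proof of the statement.
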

\noindent In particular, it forbids the existence of $\frac18$-BPS black holes. Given the profoundness of this implication, we will scrutinize this conjecture from multiple angles.

This conjecture directly contradicts the aforementioned result on the flavored MacDonald index in the Cardy limit \cite{Choi:2018hmj}. The order $N^2$ entropy implies that the number of operators with order $N^2$ energies in the Schur sector must grow as $\exp({N^2})$.
However, the growth of the graviton states in the $\tfrac18$-BPS Schur sector is bounded by the growth $\exp({N^\frac{5}{3}})$ of $\tfrac{1}{16}$-BPS graviton states \cite{Kinney:2005ej}. To investigate this puzzle, we first consider a one-fugacity specialization of the flavored MacDonald index known as the flavored Schur (FS) index. Using the closed-form expressions \cite{Hatsuda:2022xdv,Bourdier:2015wda}, we prove rigorous upper bounds on the microcanonical flavored Schur index (see Section~\ref{sec:bound_on_flavor_schur}):
\ie
d_{\FS}<N^{1+\varepsilon}\quad\text{for all}\quad N > N_0.
\fe
for any $\varepsilon>0$ and some large enough $N_0>0$. Furthermore, in Section~\ref{sec:computation_index}, we explicitly compute the microcanonical flavored MacDonald index and find evidence that its growth is similar to that of the Schur index and much slower than the superconformal index. All this points to the opposite of \cite{Choi:2018hmj}, and supports Conjecture~\ref{Conj}. In Section~\ref{Sec:Cardy}, we revisit the saddle point analysis of \cite{Choi:2018hmj} and identify subtleties concerning the purported large $N$ saddle.

Even if the Schur sector does not host enough operators to support black hole entropy growth, it remains possible that it contains non-graviton operators invalidating Conjecture~\ref{Conj}. Assuming the contrary of the conjecture leads to the following three scenarios:
\begin{enumerate}
    \item Graviton generators do not close as a chiral algebra, and one needs to include non-graviton generators. In this case, there must be at least one non-graviton operator appearing in the operator product expansion (OPE) of two graviton generators. The dimension of the lightest non-graviton operator is thereby bounded by 
    \ie\label{eqn:no_sub_alg_bound}
    h \le N +1. 
    \fe
    Contrasting the $\cO(N)$ bound with the $\cO(N^2)$ growth of the mass of an AdS black hole, such non-graviton operators should have a bulk interpretation as other types of objects, e.g.\ giant gravitons. In Section~\ref{Sec:Closure}, we present evidence for the closure.

    \item Graviton generators close and form a chiral \emph{sub}algebra of the full chiral algebra of the ${\cal N}=4$ SYM. In this case, the non-graviton operators must assemble into modules of the smaller graviton chiral algebra.
    By analyzing the modular differential equations generated by the null states in the chiral algebra, we derive in Section~\ref{Sec:Bootstrap} an upper bound on the dimension of the lowest non-graviton operator,
    \ie\label{eqn:MDE_bound}
        h \le \lfloor \frac{(N+1)^2}{2} \rfloor - 1.
    \fe

    \item The conjectured isomorphism between the vector space of Schur operators and the vector space of the chiral algebra is broken by quantum corrections. In this case, the vector space of Schur operators might contain non-graviton operators that have no counterpart in the chiral algebra.
\end{enumerate}
In Section~\ref{sec:brute}, for low-rank gauge groups, a brute-force construction of Schur operators recovers only gravitons, lending further credence to Conjecture~\ref{Conj}.

The rest of this paper is organized as follows. Section~\ref{sec:Schur_sector} introduces the Schur sector in the ${\cal N}=4$ SYM, examines its relation to the chiral algebra and the $Q$-cohomology, and discusses Conjecture~\ref{Conj}
and its implications on black holes. Section~\ref{sec:holotwist} discusses two potential subtleties on the $Q$-cohomology and their resolutions. Section~\ref{Sec:Indices} presents various indices that count (with signs) the number of Schur operators, performs saddle point analyses on the microcanonical flavored MacDonald index in the large $N$ limit,
proves rigorous bounds on the microcanonical Schur and flavored Schur indices, and explicitly evaluates various indices. Section~\ref{Sec:OpAlg} presents evidence for the closure of the graviton algebra, analyzes the modular differential equations satisfied by the Schur index, and
presents the results from the brute-force computer search for non-graviton operators.
Section~\ref{Sec:Remarks} ends with some concluding remarks.

\section{Schur sector and black holes}
\label{sec:Schur_sector}

This section presents two perspectives on the Schur sector and chiral algebra of the $\cN=4$ SYM and discusses their implications on $\frac18$-BPS black holes.

\subsection{Two cohomological formulations of the Schur sector}

The Schur operators can be viewed as the $\frac18$-BPS operators for the supercharges $Q^4_-$ and $\overline Q_{\dot-3}$. Their commutators with their Hermitian conjugates (BPZ conjugates) $Q^{4\dagger}_-= S_4^-$ and $\overline Q^\dagger_{\dot-3} = \overline S^{\dot - 3}$ are
\ie\label{eqn:QQ_BPS_bounds}
2\{Q^4_-,S_4^-\}&=D-J_1-J_2-q_1-q_2-q_3\ge 0\,,
\\
2\{\overline Q_{\dot-3},\overline S^{\dot - 3}\}&=D-J_1+J_2-q_1-q_2+q_3\ge 0\,,
\\
\{Q^4_-,\overline Q_{\dot-3}\}&=\{Q^4_-,\overline S^{\dot - 3}\}=\{\overline Q_{\dot-3},S_4^-\}=0\,.
\fe
The Schur operators saturate the above BPS bounds, i.e.
\ie\label{eqn:BPS_cond}
D-J_1-J_2-q_1-q_2-q_3=0\,,\quad J_2+q_3=0\,.
\fe
In the following, we review the standard cohomological formulation of the Schur sector giving rise to a chiral algebra, followed by an alternative formulation that makes contact with the enumeration of $\frac{1}{16}$-BPS operators.

\paragraph{Chiral algebra as the $Q+S$-cohomolgy}

The first way to arrive at the BPS conditions \eqref{eqn:BPS_cond} is to consider the linear combinations \cite{Beem:2013sza}
\ie
{\mathbb Q}_1=Q^4_-+\overline S^{\dot-3}\,,\quad{\mathbb Q}_2=S_4^--\overline Q_{\dot-3}\,.
\fe
The anti-commutators between them and their hermitian conjugates are
\ie\label{eqn:bbQ_comm}
\{{\mathbb Q}_1,{\mathbb Q}_1^\dagger\}&=\{{\mathbb Q}_2,{\mathbb Q}_2^\dagger\}=D-J_1-q_1-q_2\ge 0\,,
\\
\{{\mathbb Q}_1,{\mathbb Q}_2\}&=-J_2-q_3\,.
\fe
The Schur operators saturate the BPS bound on the first line of \eqref{eqn:bbQ_comm}. Note that once this BPS bound is saturated, the combination $J_2+q_3$ automatically vanishes due to the BPS bounds that will be derived later in \eqref{eqn:QQ_BPS_bounds}. 
Hence, by the standard Hodge theory argument, the space of Schur operators is isomorphic to each of the ${\mathbb Q}_1$- and ${\mathbb Q}_2$-cohomologies of local operators inserted at the origin. 

Local operators away from the origin can be obtained by translating Schur operators from the origin. However, $\mathbb{Q}_{1,2}$ do not commute with general translations, but only commute with the (twisted) translations along a complex plane $\mathbb{C}\subset \mathbb{R}^4$. Therefore, local operators away from the origin cannot be in the $\mathbb{Q}_i$-cohomology unless they reside on this complex plane. In addition, the (twisted) anti-holomorphic translations turn out to be $\mathbb{Q}_i$-exact, and hence operators only depend on the insertion point holomorphically. The holomorphic OPE endows the vector space of Schur operators with the structure of a vertex operator algebra (VOA). The above procedure associates any 4d $\mathcal{N}=2$ superconformal field theory with a VOA~\cite{Beem:2013sza}. In this paper, we will focus on the $\mathrm{VOA}\left[\mathrm{SYM}_{\mathfrak{g}}\right]$ associated with the $\mathcal{N}=4$ SYM with gauge 
algebra $\mathfrak{g}$, putting special focus on $\mathfrak{g}=\mathfrak{sl}_N$.\footnote{In this paper, we are not concerned with the global structure of the gauge group.} 

\paragraph{Equivalence to the $Q$-cohomology}
\label{Sec:Q}

Given \eqref{eqn:QQ_BPS_bounds}, because the two supercharges $Q^4_-$ and $\overline Q_{\dot-3}$ anti-commute with each other, one can consider the $(Q^4_- + \overline Q_{\dot-3})$-cohomology, which can be computed using the technique of spectral sequence.
This cohomology is isomorphic as a vector space to the space of Schur operators by the standard Hodge theory argument.

In the weak coupling limit, the operators in ${\cal N}=4$ SYM are constructed from gauge invariant combinations (words) of the fundamental fields and covariant derivatives (letters). The Schur letters---the letters satisfying \eqref{eqn:BPS_cond}---are 
\ie\label{eqn:Schur_letters}
\phi^i=\Phi^{4i}\quad{\rm for}~i=1,2\,,\quad \psi=-i\Psi_{+3}\,,\quad \lambda=\overline\Psi^4_{\dot+}\,,\quad D := D_{+\dot+}\,.
\fe
They can be assembled into a superfield $\Psi$ as
\ie\label{eqn:Schur_superfield}
\Psi(z,\theta_1,\theta_2)=-i\sum_{n=0}^\infty \frac{(zD)^n}{n!}\left[\frac{1}{n+1}z\lambda+2\theta_i\phi^i+2\theta_1\theta_2\psi\right]\,,
\fe
where $z=x^{+\dot+}$ is a complex coordinate, and $\theta_i$ are two fermionic auxiliary variables.
We will refer to the Schur words as the gauge invariant combinations of the Schur superfield $\Psi$. 
The set of Schur letters \eqref{eqn:Schur_letters} is a subset of the set of BPS letters, and the Schur superfield \eqref{eqn:Schur_superfield} is a restriction of the BPS superfield of \cite{Chang:2013fba}.

In the free limit, the Schur words always satisfy the BPS conditions \eqref{eqn:BPS_cond}, and the supercharges $Q^4_-$, $\overline Q_{\dot - 3}$ act trivially on the Schur words. When the gauge coupling $g_{\rm YM}$ is turned on, some Schur words are lifted and become non-BPS. As argued in \cite{Beem:2013sza}, according to the representation theory of the superconformal algebra, the lifted Schur words always furnish quartets of the form
\ie
({\cal O}, \quad Q^4_-{\cal O},\quad \overline Q_{\dot - 3}{\cal O},\quad   Q^4_-\overline Q_{\dot - 3}{\cal O})\,,
\fe
which trivialize in each of the $Q^4_-$ and $\overline Q_{\dot -3}$-cohomologies. The unlifted Schur words form singlets (annihilated by all four supercharges) of the algebra \eqref{eqn:QQ_BPS_bounds},
and contribute to both the $Q^4_-$ and $\overline Q_{\dot -3}$-cohomologies.
In conjunction with the previous Hodge theory argument, we know that the space of Schur operators is isomorphic to each of the $Q^4_-$ and $\overline Q_{\dot -3}$-cohomologies on the space of Schur words. In other words, the spectral sequence computing the $(Q^4_-+\overline Q_{\dot-3})$-cohomology degenerates on the first page.
For convenience, we choose to study the $Q^4_-$-cohomolgy. The supercharge $Q := Q^4_-$ acts on the superfield $\Psi$ as
\ie
Q\Psi = \Psi^2\,, \label{eq:QPsi}
\fe
which takes the same form as the supercharge acting on the BPS superfield in \cite{Chang:2013fba}.
Similar to the discussion in \cite{Chang:2013fba}, we recognize that the $Q$-cohomology is nothing but the relative Lie algebra cohomology
\begin{equation}
C^{\bullet}\left(\mathfrak{g}[[z, \theta^1,\theta^2]] \mid \mathfrak{g}\right).
\label{eq:liealgebracoh}
\end{equation}
Specifically, we will focus on the case of $\mathfrak{g}=\mathfrak{sl}_N$.

\subsection{Large $N$ cohomology and Schur graviton operators}

The Schur (single-)graviton operators are defined as single-trace words of the form
\ie\label{eqn:Schur_grav_op}
\partial^p_z\partial^{q_1}_{\theta_1}\partial^{q_2}_{\theta_2}\Tr\left[(\partial_z\Psi)^k (\partial_{\theta_1}\Psi)^{m_1}(\partial_{\theta_2}\Psi)^{m_2}\right]\big|_{z=0=\theta_i}\,,
\fe
where the letters inside the trace are assumed to be symmetrized. It is easy to see that \eqref{eqn:Schur_grav_op}
are $Q$-closed and represent non-trivial cohomology classes. Their products are also $Q$-closed but do not necessarily represent non-trivial cohomology classes due to the trace relations. As will be explained later, we refer to these cohomologies as the \emph{Schur graviton cohomologies}. 
\begin{defn}[Schur graviton cohomology]
    A Schur graviton cohomology is a $Q$-cohomology that has a representative given by a linear combination of products of the single-graviton operators \eqref{eqn:Schur_grav_op}.
\end{defn}
The Schur graviton cohomologies contain the Higgs and Hall-Littlewood chiral rings \cite{Gadde:2011uv}. In terms of the Schur superfield $\Psi$, the Higgs chiral ring is generated by
\ie\label{eqn:Higgs_chiral_ring}
\Tr\left[ (\partial_{\theta_1}\Psi)^{m_1}(\partial_{\theta_2}\Psi)^{m_2}\right]\big|_{z=0=\theta_i}\,,
\fe
whereas the Hall-Littlewood chiral ring is generated by
\ie
\Tr\left[(\partial_z\Psi)^k (\partial_{\theta_1}\Psi)^{m_1}(\partial_{\theta_2}\Psi)^{m_2}\right]\big|_{z=0=\theta_i}\,.
\fe

\subsection{Central conjecture, gravitons and black holes}\label{sec:conj_BH}

The Schur graviton operators play an important role in the chiral algebra description of the Schur sector. As conjectured in \cite{Beem:2013sza}, the chiral algebra of the ${\cal N}=4$ SYM is isomorphic to an ${\cal N}=4$ super ${\cal W}$-algebra with $N-1$ generators, which are the Higgs chiral ring generators \eqref{eqn:Higgs_chiral_ring} (counted up to $\mathfrak{su}(2)$ rotations)
of dimensions equal to half of the degree of the Casimirs of the gauge group. We see that these generators are all Schur graviton operators, and hence we will call them \emph{graviton generators}.
The Schur graviton cohomologies are isomorphic as a vector space to the vertex operators of the $\cN=4$ super $\cW$-algebra above.
In other words, the conjecture of \cite{Beem:2013sza} implies that on the space of Schur words, all the $Q$-cohomologies are Schur graviton cohomologies, and the conjecture can be reformulated as Conjecture~\ref{Conj}.

Let us compare the story here with that of the $\frac{1}{16}$-BPS operators studied in \cite{Chang:2022mjp,Choi:2022caq,Choi:2023znd,Chang:2023zqk,Budzik:2023vtr}. The space of $\frac{1}{16}$-BPS operators is isomorphic to the $Q$-cohomology on the space of BPS words, which are the gauge invariant combinations of (the derivatives of) BPS superfields. The BPS superfield is a function of two bosonic and three fermionic variables with an expansion similar to \eqref{eqn:Schur_superfield}, and a class of $Q$-cohomologies called graviton cohomologies is generated by the obvious generalization of the formula \eqref{eqn:Schur_grav_op}. In the large $N$ limit, the $\frac{1}{16}$-BPS operators dual to the graviton cohomologies correspond to the graviton states in ${\rm AdS}_5\times {\rm S}^5$, as evidenced by the matching between the partition functions \cite{Chang:2013fba}. At finite $N$, there exist $Q$-cohomologies that are not graviton cohomologies. They are called non-graviton cohomologies, or simply black hole cohomologies, as they are expected to capture the ${\rm AdS}$ black hole microstates. Their representatives are thereby called {\it black hole operators}.

With this understanding, we now know that Conjecture~\ref{Conj}, in particular, forbids any nontrivial Schur black hole cohomologies, and excludes the existence of $\frac18$-BPS black holes.

\section{Quantum corrections and collision ambiguities}\label{sec:holotwist}

At this point, one may worry about the following two issues:
\begin{enumerate}
    \item Quantum corrections to $Q$: As discussed above, once the interactions are turned on, the space of Schur operators jumps discontinuously, and we need to resort to the cohomology problem defined in \eqref{eq:QPsi}. This \emph{tree level} correction is precisely what is needed to match the VOA considerations in \cite{Beem:2013sza}. What about additional quantum corrections that generically seem to be present?
    \item Operator product: To construct composite BPS words out of BPS letters, one has to make sense of the colliding limit of operators.
    In free theory, the natural choice is to use normal ordering. In the presence of interactions, when the BPS letters are gauge invariant, we can use point-splitting regularization, i.e.\ remove the singular part of the OPE. However, when the BPS letters are not gauge invariant, like the case at hand, there is no obvious recipe.
\end{enumerate}

Both concerns can be remedied by the formalism of the holomorphic topological twist \cite{Kapustin:2006hi}, which puts the discussions in Section~\ref{sec:Schur_sector} on rigorous ground. More explicitly, it has been shown in \cite{Oh:2019bgz,Jeong:2019pzg,butsonEquivariantLocalizationFactorization} (see also \cite{Bobev:2020vhe}) that the chiral algebra can be realized by this twist in the $\Omega$ background. In the following subsections, we will review the formalism and discuss the resolutions of these two problems.

\subsection{Holomorphic topological twist}

The supercharge for the holomorphic topological (HT) twist \cite{Kapustin:2006hi} is  
\begin{equation}
    Q_{\mathrm{HT}} = Q^4_- +\overline Q_{\dot-3}.
\end{equation}
We define a one-form supercharge by
\begin{equation}
\mathbf{Q} := Q_{+}^3 \mathrm{d}x^{+\dot -} + \overline{Q}_{4\dot +} \mathrm{d} x^{-\dot +} + Q^3_{-}\mathrm{d}x^{-\dot -}.
\end{equation}
The translations $\partial_{+\dot -}$, $\partial_{-\dot +}$ and $\partial_{-\dot -}$ are $Q_{\mathrm{HT}}$-exact since 
\begin{equation}
    \{Q_{\mathrm{HT}}, \mathbf{Q}\} = P_{+\dot -} \mathrm{d}x^{+\dot -} +P_{-\dot +}\mathrm{d}x^{-\dot +} + P_{-\dot -} \mathrm{d} x^{-\dot -}.
\end{equation}
Passing to the $Q_{\mathrm{HT}}$-cohomology, we obtain a theory that is topological along the $x^{+\dot -}$ and $x^{-\dot +}$ directions and holomorphic along $z:= x^{+\dot +}$. As a consequence of Hartog's theorem, which says there can be no isolated singularities in $\mathbb{C}^2$, the OPEs in this HT-twisted theory are entirely regular, and the composite operators are simply constructed as juxtapositions of fundamental fields.\footnote{More precisely, the singularities in the OPE between $Q_{\rm HT}$-closed operators are $Q_{\rm HT}$-exact.}

The $Q_{\mathrm{HT}}$-cohomology can be readily described in the Batalin–Vilkovisky (BV) formalism. More precisely, we consider a twisted BRST charge $\mathcal{Q}$ given by adding the supercharge $Q_{\mathrm{HT}}$ to the original BRST charge.
The field resides in the space \cite{costelloLecturesMathematicalAspects2015,butsonEquivariantLocalizationFactorization}
\begin{equation}
    \mathbf{C} \in \Omega_{\mathbb{C}}^{0,\bullet} \otimes \Omega_{\mathbb{R}^2}^{\bullet} \otimes \mathfrak{g}[\theta_1,\theta_2] \label{eq:calQC}
\end{equation}
with the standard BV action
\begin{equation}
    \int dz d\theta_1d\theta_2 \ \Tr \left(\frac12 \mathbf{C} d\mathbf{C} + \frac{1}{6}\mathbf{C}[\mathbf{C},\mathbf{C}]\right), \label{eq:BVactioncalC}
\end{equation}
where $d$ is the sum of the de Rham differential along the topological plane $\mathbb{R}^2$ and the Dolbeault differential along $\mathbb{C}$.

The $Q_{\mathrm{HT}}$-cohomology of local operators, isomorphic to the space of Schur operators, is given by the cohomology of the BRST differential $\mathcal{Q}$. Its action at the classical level is simply
\begin{equation}
    \mathcal{Q} \mathbf{C}=d \mathbf{C}+\frac{1}{2}[\mathbf{C}, \mathbf{C}],
\end{equation}
but it will generically receive quantum corrections.
Let us define
\ie
Q_{0} := \mathcal{Q} - d.
\fe
The computation of the $\mathcal{Q}$-cohomology could be simplified through a spectral sequence. We first pass to the $d$-cohomology, where the only contribution comes from the zero-form component
of $\mathbf{C}$,\footnote{Intuitively, first passing to the $d$-cohomology is restricting to the space of Schur operators in the free theory. We always use the corresponding un-boldface letter to denote the zero-form component.
}
\begin{equation}
    C[z,x_i,\theta^1,\theta^2]  := c(z,x_i) + \theta^i q_i(z,x_i) + \theta^1\theta^2 b(z,x_i) \in \Omega_{\mathbb{C}}^{0,0} \otimes \Omega_{\mathbb{R}^2}^{0} \otimes \mathfrak{g}[\theta_1,\theta_2]. \label{eq:Csuperfield}
\end{equation}
The tree level piece\footnote{This piece is referred to as $1$-loop term in \cite{Beem:2013sza}.} $Q_0$ is the second term of \eqref{eq:calQC}, and acts on the second page as
\begin{equation}
    Q_{0} C = \frac{1}{2}[C,C],
    \label{eq:Q0intC}
\end{equation}
which in component form becomes
\begin{equation}
    Q_{0} c = \frac12 [c,c], \quad Q_{0} q_i = [c,q_i], \quad Q_{0}b = [c,b]  + \epsilon^{ij} [q_i, q_j].
\end{equation}
Acting on words, $Q_0$ satisfies the Leibniz rule. In particular, one can verify that $Q_0$ is nilpotent and exactly reproduces \eqref{eq:QPsi}. The $Q_0$-cohomology can be neatly phrased as the relative Lie algebra cohomology 
\begin{equation}
C^{\bullet}\left(\mathfrak{g}[[z, \theta^1,\theta^2]] \mid \mathfrak{g}\right).
\end{equation}

Corrections to the BRST differential induced by the interaction terms in \eqref{eq:BVactioncalC} generally have a perturbative expansion in $\hbar$, i.e.\ $Q_{\mathrm{int}} = Q_0 + \hbar Q_1 + \hbar^2 Q_2 +\dots$, where $Q_n$ denotes the contributions from $n$-loop diagrams and acts trivially on the words of length shorter than $n+1$. Its action on longer words is nontrivial and requires explicit Feynman diagram calculations. The presence of a nontrivial $Q_n$ action for $n>0$ means that the BRST differential $Q_{\mathrm{int}}$ violates the Leibniz rule, and the $Q_{\mathrm{int}}$-cohomology is \emph{not} a Lie algebra cohomology.

Surprisingly, we will show below that all the loop diagrams naively contributing to the BRST differential vanish identically, and therefore \eqref{eq:Q0intC} holds to all orders in perturbation theory. The proof is a simple adaptation of the proof presented in section~6.1 of \cite{Budzik:2022mpd}. It was shown that for any holomorphic topological theory with $T$ number of topological directions and $H$ number of holomorphic directions, the only Feymann diagrams that can contribute are the so-called $(H+T)$-Laman graphs (more details can be found in Appendix~\ref{app:laman}), and the Feynman integrals can be reduced to the integrals of top form $\omega_{\Gamma}^{H,T}$ 
on the positive real projective space of Schwinger parameters. For any given $(H+T)$ Laman graph $\Gamma$, $\omega_{\Gamma}^{H, T}$ is fully explicit. 
When $H=0$ and $T=2$, $\omega_{\Gamma}^{H,T}$
vanishes identically for $2$-Laman graphs, which renders all relevant loop diagrams vanishing, and thus provides an alternative proof of the Kontsevich formality theorem. The example at hand corresponds to $H =1$ and $T=2$, where we can explicitly check that $\omega_{\Gamma}^{H,T}$
vanishes for $3$-Laman graphs. 
As a consequence, $Q_{\mathrm{int}}$ truncates at tree level to $Q_0$ and satisfies the Leibniz rule.\footnote{The leading loop correction, i.e.\ $Q_1$ in the notation here, was computed in \cite{Zwiebel:2005er}. It was shown that $Q_1$ takes the form $[Q_0, -]$, suggesting that $Q_1$ is trivial under an appropriate choice of basis, and is consistent with our proof here.} The space of Schur operators is then given by the $Q_0$ cohomology \eqref{eq:liealgebracoh} to all loop orders.

\subsection{$\Omega$ deformation to chiral algebra}

The discussion so far does not provide a VOA structure since intuitively, operators can move freely in topological directions and all the OPEs are regular. To get singular OPEs, we need to get rid of the topological directions. This can be done by the $\Omega$ deformation \cite{Nekrasov:2010ka}; see also \cite{Costello:2016nkh} for a nice summary from a modern perspective. Denote the vector field that generates the rotation on $\mathbb{R}^2$ by $V$. We can turn on the $\Omega$ background associated with the rotation on the topological plane by deforming the BRST differential to
\begin{equation}
    Q_{\mathrm{HT}} +  \iota_V \mathbf{Q},
\end{equation}
which is not nilpotent but squares to Lie derivative for the rotation vector field $V$.

This localizes the 4d theory to a complex plane at the origin of the topological plane, whose space of fields is simply the restriction of the bulk fields $\mathbf{C}$, 
\begin{equation}
    \mathbf{C}(z,\theta^1,\theta^2) = \mathbf{c}(z) +\theta^i \mathbf{q}_i(z) +\theta^1\theta^2 \mathbf{b}(z) \in \Omega_{\mathbb{C}}^{0,\bullet}  \otimes \mathfrak{g}[\theta_1,\theta_2].
    \label{eq:calQC2d}
\end{equation}
With slight abuse, we denote the 2d restriction of the 4d fields by the same symbols, with the caveat that after restricting operators to the complex plane $x_i=0$, the 2d OPE has singularities.
The BV action of the 2d localized theory \cite{Oh:2019bgz,Jeong:2019pzg} takes the familiar form
\begin{equation} \int_{\mathbb{C}}dz\operatorname{Tr} \left( \mathbf{b} \bar{\partial} \mathbf{c}+ \epsilon^{ij} \mathbf{q}_i \bar{\partial} \mathbf{q}_j + \epsilon^{ij} \mathbf{c}[\mathbf{q}_i,\mathbf{q}_j] + \mathbf{b}[\mathbf{c},\mathbf{c}] \right).
\end{equation}

The 4d $Q_0$-cohomology problem defined by \eqref{eq:Csuperfield} and \eqref{eq:Q0intC} descend to 2d straightforwardly. The $\bar{\partial}$-cohomology on the first page restricts to the zero-form components of $\mathbf{C}(z,\theta^1,\theta^2)$ that are holomorphic in $z$,
\begin{equation}
    C[z,\theta^1,\theta^2]  := c(z) + \theta^i q_i(z) + \theta^1\theta^2 b(z) \in \Omega_{\mathbb{C}}^{0,0} \otimes  \mathfrak{g}[\theta_1,\theta_2].
\end{equation}

The tree level BRST differential $Q_0^{2d}$ is given by a single Wick contraction with the interaction term $J_{\mathrm{BRST}} = \Tr b [c,c]+ \epsilon^{ij} c[q_i, {q}_j]$, as shown in Figure~\ref{fig:Wick}, 
\begin{figure}[htb]
    \centering
    \includegraphics[width=0.45\textwidth]{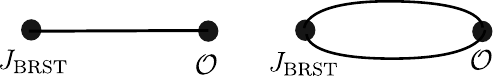}
    \caption{Wick contractions for $Q_0^{2d}\mathcal{O}$ and $Q_1^{2d}\mathcal{O}$}
    \label{fig:Wick}
\end{figure}
yielding exactly the same transformation as in \eqref{eq:Q0intC},
\begin{equation}
    Q_0^{2d} C = \frac12 [C,C]. \label{eq:Q02dC}
\end{equation}

However, unlike the holomorphic topological twist in 4d, there is no non-renormalization theorem in 2d and indeed the BRST differential receives quantum corrections. We denote by $Q_1^{2d}$ the contribution from two Wick contractions
\begin{align}
    Q_1^{2d} \Big(q^A_k q^B_l\Big) & = 2\epsilon_{kl}f^{ABC} \partial c^C, &  Q_1^{2d} \Big(b^A b^B\Big) & = -2 f^{ABC} \partial b^C, \\
    Q_1^{2d} \Big(q^A_k b^B\Big) & = 2 f^{ABC} \partial q^C_k, & Q_1^{2d} \Big(b^A c^B\Big) & = -2 f^{ABC} \partial c^C.
\end{align}
where the capital letters denote the gauge indices and $f^{ABC}$ are the structure constants. The $Q_1^{2d}$ action can be neatly packaged into
\begin{equation}
    Q_1^{2d}C^A(z,\theta)C^B(z,\theta') = -2 (\theta^1-\theta'^1)(\theta^2-\theta'^2) f^{ABC} \partial C^C.
\end{equation}
At the next order, the action of $Q_2^{2d}$ is trivial since it requires Wick-contracting all three fields in $J_{\mathrm{BRST}}$, and returns derivatives of the identity operators, which of course vanish. There are no further corrections.
Furthermore, the action of $Q_0^{2d}$ and $Q_1^{2d}$ can be combined into the standard contour integration of the BRST current
\begin{equation}
Q_{\mathrm{BRST}}^{2d}=\frac{1}{\hbar} \oint \frac{d z}{2 \pi i}\Tr\left( b [c,c]+ \epsilon^{ij} c[q_i, {q}_j]\right).
\end{equation}
This justifies the conjecture in \cite{Beem:2013sza} that $\mathrm{VOA}[\mathrm{SYM}_{\mathfrak{g}}]$ is the BRST reduction of the adjoint valued symplectic boson with the zero modes of $c$ removed.

There remains one key piece of the puzzle: Why do $Q_0^{4d}$ and $Q_{\mathrm{BRST}}^{2d}\equiv Q_0^{2d} + Q_1^{2d}$ have isomorphic cohomologies? We leave this issue for a subsequent paper, but let us spell out the problem a little bit in the remainder of this section.

The two differentials $Q_0^{4d}$ and $Q_{\mathrm{BRST}}^{2d}$ act on single-letters (derivatives of the free fields $\{b,c,q_i\}$) in the same way, but they act on composite operators very differently. To properly define composite operators, we need to specify the ways of regularizing the OPE singularities. There are two natural ways of doing this, each manifesting the 2d or 4d perspectives. The first way is to define composite operators by using the 4d OPE in the holomorphic topological twist, which are \emph{regular} thanks to Hartog's lemma. This descends to the free theory normal ordered product in the chiral algebra defined by subtracting off all possible Wick contractions. Let us denote the normal ordered product as
\ie\label{eqn:normal_order_product}
:f_1f_2\cdots f_n:\,,
\fe
where $f_r$ are derivatives of free fields $\{b,c,q_i\}$. The normal ordered product is associative and graded commutative. We use this definition for the composite operators throughout this paper unless stated otherwise.

The second way of defining composite operators is more standard in the VOA literature. We define the regularized product $(O_1O_2)$ as the first regular term in the 2d OPE of $O_1$ and $O_2$,\footnote{See also \eqref{eqn:reg_2_product} for a contour integral definition for the regularized product.} and the nested regularized product is defined by
\begin{equation}\label{eqn:regularized_product}
    (O_1O_2 \dots O_n)\ \equiv \ (O_1(O_2(O_3\dots (O_{n-1}O_n))))\,.
\end{equation}
The nested regularized product is not associative or commutative in general.
When the operators are derivatives of free fields, the normal order product is equal to the nested regularized product,
\ie
:f_1f_2\cdots f_n: \ = \ (f_1f_2\cdots f_n)\,.
\fe

$Q_0^{4d}$ satisfies the Leibniz rule when acting on the normal ordered product \eqref{eqn:normal_order_product}, while $Q_{\mathrm{BRST}}^{2d}$ satisfies the Leibniz rule when acting on the regularized product \eqref{eqn:regularized_product} following from the identity \eqref{eq:app:derivation}. 

\section{Indices}
\label{Sec:Indices}

Consider the thermal partition function
\ie
Z=\Tr\Omega\,,\quad \Omega=e^{-2\beta \{Q^4_-,S_4^-\}-\omega_1 J_1-\omega_2 J_2-\Delta_1q_1-\Delta_2q_2-\Delta_3q_3}\,.
\fe
The chemical potentials $\Delta_i$ and $\omega_i$ have $4\pi i$ periodicity due to the half-integral quantization of charges. The operator $\Omega$ anti-commutes with the supercharge $Q^4_-$ if the chemical potentials satisfy the relation
\ie\label{eqn:linear_chem_rel}
\Delta_1+\Delta_2+\Delta_3-\omega_1-\omega_2 \equiv 2\pi i\mod 4\pi i\,.
\fe
The superconformal (SC) index is defined to be the thermal partition function with this linear relation imposed,
\ie
I_\SCI=Z\big|_{\eqref{eqn:linear_chem_rel}}\,.
\fe
It is independent of the inverse temperature $\beta$ by standard arguments and can be written as
\ie
I_\SCI&=\Tr\left[(-1)^{F} e^{-\omega_1 (J_1-J_2)-\Delta_1(q_1+J_2)-\Delta_2(q_2+J_2)-\Delta_3(q_3+J_2)}\right]\,,
\fe
where we used the spin-statistics relation to write $(-1)^F=(-1)^{2J_2}$.
The states that contribute to the superconformal index must satisfy the first BPS condition in \eqref{eqn:BPS_cond}. Note that the chemical potentials $\Delta_i$ and $\omega_1$ in the superconformal index now have $2\pi i$ periodicity.

The {\it flavored MacDonald} (FM) index is given by the superconformal index in the limit
\ie\label{eqn:mac_limit}
\Delta_3\to \infty\,,\quad \Delta_1,\,\Delta_2,\,\omega_1~\text{fixed}\,,
\fe
which projects the trace to the subspace satisfying the second BPS condition in \eqref{eqn:BPS_cond}. We have
\ie
I_\FM& = \Tr_{J_2+q_3=0}\left[ (-1)^F e^{-\omega_1 (J_1-J_2)-\Delta_1(q_1+J_2)-\Delta_2(q_2+J_2)}\right]\,.
\fe
The {\it flavored Schur} (FS) index is the MacDonald index under the additional condition
\ie
\Delta_1+\Delta_2-\omega_1 \equiv 0\mod 2\pi i\,.
\fe
The (unflavored) {\it MacDonald} (M) and {\it Schur} (S) indices are given by further imposing the condition
\ie
\Delta_1 \equiv \Delta_2 \mod 2\pi i
\fe
on the corresponding flavored indices. For convenience, we solve the linear relation \eqref{eqn:linear_chem_rel} by parameterizing the chemical potentials with new fugacities $x, s, b, p$ such that
\ie
e^{-\Delta_1}=bxs\,,\quad e^{-\Delta_2}=b^{-1}xs\,,\quad e^{-\Delta_3}=p s^{-2}\,,\quad e^{-\omega_1}=x^2\,,\quad e^{-\omega_2}=p\,.
\fe
The various indices above expressed in these new fugacities become
\ie
I_\SCI&=\Tr\left[(-1)^F x^{2J_1+q_1+q_2} s^{q_1+q_2-2q_3}b^{q_1-q_2}p^{q_3+J_2}\right]\,,
\\
I_\FM&=\Tr_{q_3+J_2=0}\left[(-1)^F x^{2J_1+q_1+q_2} s^{q_1+q_2-2q_3}b^{q_1-q_2}\right]\,,
\\
I_\M&=\Tr_{q_3+J_2=0}\left[(-1)^F x^{2J_1+q_1+q_2} s^{q_1+q_2-2q_3}\right]\,,
\\
I_\FS&=\Tr_{q_3+J_2=0}\left[(-1)^F x^{2J_1+q_1+q_2} b^{q_1-q_2}\right]\,,
\\
I_\S&=\Tr_{q_3+J_2=0}\left[(-1)^F x^{2J_1+q_1+q_2}\right]\,.
\fe
The 
{\it microcanonical} indices are defined as the expansion coefficients of the corresponding indices. They are functions of the angular momenta $J_i$ and R-charges $q_i$.  We denote them by $d_\SCI$, $d_\FM$, $d_\M$, $d_\FS$, and $d_\S$.

The indices can be computed by unitary matrix integrals \cite{Kinney:2005ej},
\ie\label{Integral}
I&=\int dU\,\exp\left[\sum_{n=1}^\infty \frac{1}{n}\iota({\bf q}^n) \tr(U^n)\tr(U^{-n})\right]
\\
&=\frac{1}{N!}\int^\frac12_{-\frac12}\prod^N_{a=1}du_a\prod_{a<b}\left(2\sin \pi u_{ab}\right)^2\exp\left[\sum_{n=1}^\infty \frac{1}{n}\iota({\bf q}^n)\sum_{a,b=1}^N e^{2\pi i n u_{ab}}\right]
\\
&=\frac{1}{N!}\int^\frac12_{-\frac12}\prod^N_{a=1}du_a\exp\bigg[N\sum_{n=1}^\infty\frac{\iota({\bf q}^n)}{n}-\sum_{n=1}^\infty \frac{1}{n}\left[1-\iota({\bf q}^n)\right]\sum_{\substack{a,b=1\\a\neq b}}^N e^{2\pi i n u_{ab}}\bigg]\,,
\fe
where in the first line the integral is over the unitary matrices, and in the second line the off-diagonal matrix components are integrated out, leaving us with the chemical potentials $u_a$ corresponding to the Cartan of the U(N) gauge group. The function $\iota({\bf q})$ is the single-letter index, where ${\bf q}$ collectively denotes all the fugacities. The single-letter superconformal index and flavored MacDonald index are 
\ie
\iota_\SCI &=1- \frac{(1-bxs)(1-b^{-1}xs)(1-ps^{-2})}{(1-x^2)(1-p)}\,,
\\
\iota_\FM &=1- \frac{(1-bxs)(1-b^{-1}xs)}{(1-x^2)}\,,
\fe
and restrictions of the latter give the other single-letter indices.

\subsection{Flavored MacDonald index in the Cardy limit}
\label{Sec:Cardy}

Let us consider the Cardy limit of the flavored MacDonald index. The integral formula for the flavored MacDonald index is
\ie
I_\FM&\sim\frac{1}{N!}\int^\frac12_{-\frac12}\prod^N_{a=1}du_a\exp\bigg[-\sum_{n=1}^\infty \frac{1}{n}\frac{(1-e^{-n\Delta_1})(1-e^{-n\Delta_2})}{1-e^{-n\omega_1}}\sum_{\substack{a,b=1\\a\neq b}}^N e^{2\pi i n u_{ab}}\bigg]\,,
\fe
where we only keep the order $N^2$ term in the exponent of the integrated in \eqref{Integral}.

In the Cardy limit $\omega_1\to 0$, the infinite sum over $n$ can be computed exactly,
\ie
I_\FM&\sim\frac{1}{N!}\int^\frac12_{-\frac12}\prod^N_{a=1}du_a\exp\bigg[-\frac{1}{\omega_1}\sum_{\substack{a,b=1\\a> b}}^N V(u_{ab})\bigg]\,,
\fe
where
\ie
V(u)=\sum_{s_1,s_2=0}^1(-1)^{s_1 +s_2}\left[{\rm Li}_2(e^{2\pi i u-s_1 \Delta_1 - s_2\Delta_2})+{\rm Li}_2(e^{-2\pi i u-s_1 \Delta_1 - s_2\Delta_2})\right]\,.
\fe
In \cite{Choi:2018hmj}, the authors assumed the existence of a saddle point at 
\ie\label{eqn:not_valid_saddle}
u_1\approx u_2\approx\cdots\approx u_N\,.
\fe
However, the potential $V(u)$ is not smooth at $u=0$,
as we demonstrate by plotting $V(u)$ at $\Delta_1=\Delta_2=\frac{\pi}{2}(1+i)$ in Figure~\ref{Fig:cardy}. The non-smoothness at $u=0$ casts doubt on the saddle point \eqref{eqn:not_valid_saddle}. It is possible that the higher order terms in the small $\omega_1$ expansion smooth out the kink at $u=0$ and remedy the saddle point, but this certainly requires a careful analysis, which we presently do not pursue.\footnote{We thank Sunjin Choi and Seok Kim for correspondence regarding this point.} 

\begin{figure}[H]
	\centering
	\includegraphics[width=0.45\textwidth]{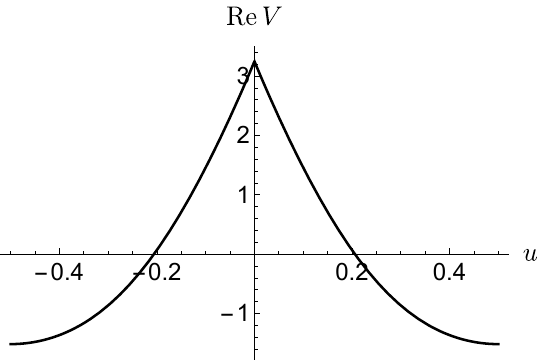} \qquad
	\includegraphics[width=0.45\textwidth]{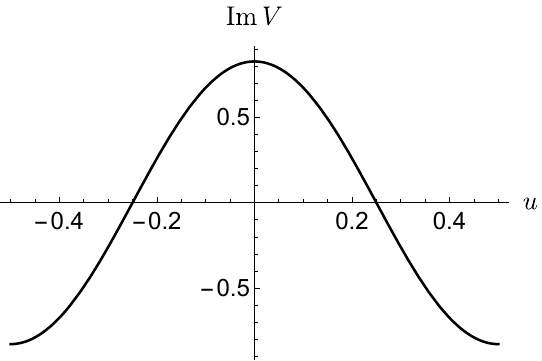}
	\caption{The real and imaginary parts of the potential $V(u)$ at $\Delta_1=\Delta_2=\frac{\pi}{2}(1+i)$.}
	\label{Fig:cardy}
\end{figure}

\subsection{Canonical and microcanonical saddles}

In the large $N$ limit, the integral formula \eqref{Integral} for the superconformal index can be evaluated using a saddle point approximation \cite{Cabo-Bizet:2019eaf,Cabo-Bizet:2020nkr,Cabo-Bizet:2020ewf,Choi:2021rxi}.
For unrestricted fugacities, a series of saddle points were found in \cite{Choi:2021rxi}, with the dominant contribution to the superconformal index being
\ie\label{eqn:large_N_index}
\log I_\SCI = \frac{N^2\Delta_1\Delta_2\Delta_3}{2\omega_1\omega_2}\Big|_{\eqref{eqn:linear_chem_rel}}\,.
\fe
The microcanonical superconformal index $d_\SCI$ is obtained by an inverse Laplace transform, which could be performed by yet another saddle point approximation. The result is given by the maximization
\ie
S=\log d_\SCI=\max\left(\frac{N^2\Delta_1\Delta_2\Delta_3}{2\omega_1\omega_2}+\Delta_1 q_1+\Delta_2 q_2+\Delta_3 q_3+\omega_1 J_1+\omega_2 J_2\Big|_{\eqref{eqn:linear_chem_rel}}\right)\,,
\fe
and was found to agree with the entropy of $\frac{1}{16}$-BPS black holes \cite{Hosseini:2017mds,Cabo-Bizet:2018ehj,Choi:2018hmj}.

Suppose the saddle point giving \eqref{eqn:large_N_index} continues to dominate in the MacDonald limit \eqref{eqn:mac_limit}, then we find
\ie
\log I_\FM = \frac{N^2\Delta_1\Delta_2}{2\omega_1}\,.
\fe
We compute the microcanonical MacDonald index by taking the inverse Laplace transform,
\ie
d_\FM = \frac{1}{(2\pi i)^3}\int d\Delta_1 d\Delta_2 d\omega_1 \,\exp\left(\frac{N^2\Delta_1\Delta_2}{2\omega_1} +\Delta_1 q_1+\Delta_2 q_2 +\omega_1 J_1\right)\,,
\fe
where the $\Delta_1,\,\Delta_2,\,\omega_1$ integrals are along the line $\gamma + i\bR$ with $\gamma>0$. The integrand has a flat direction parametrized by the variable $2\pi ix=\Delta_1+\Delta_2-\omega_1$. We compute the integral along the non-flat directions by saddle point approximation and perform the integral along the flat direction directly to find
\ie
d_\FM &\sim \int_{-\infty}^\infty dx\,\Big(e^{\frac{\pi i}{2}x \left(N^2+2q_1+2q_2+\sqrt{8N^2 J_1+N^4+4(q_1-q_2)^2+4N^2(q_1+q_2)}\right)}
\\
&\qquad+e^{\frac{\pi i}{2}x\left(N^2+2q_1+2q_2-\sqrt{8N^2 J_1+N^4+4(q_1-q_2)^2+4N^2(q_1+q_2)}\right)}\Big)
\sim \delta(J_1N^2-2q_1q_2)\,,
\fe
where we have used the inequality
\ie
2J_1+q_1+q_2\ge 0
\fe
valid for BPS operators. We see that the microcanonical flavored MacDonald index does not exhibit the order $\exp(N^2)$ black hole entropy growth.

\subsection{Rigorous asymptotics}
\label{sec:bound_on_flavor_schur}

It is possible to rigorously prove the absence of black hole entropy growth in some of the indices leveraging alternative closed-form formulae.  To study asymptotics, the difference between $\U(N)$ and $\SU(N)$ gauge groups is inconsequential, and we will consider $\U(N)$ below.

We present several definitions and lemmas that will facilitate the statements and proofs of this section.

\begin{defn}\label{Def:PartitionSeries}
    A partition series $\cA$ with one-sided fugacities $x_1, \dotsc, x_n$ and two-sided fugacities $y_1, \dotsc, y_m$ is an element
    \[
        \cA \in \bZ[y_1, \dotsc, y_m, y_1^{-1}, \dotsc, y_m^{-1}][[x_1, \dotsc, x_n]]
    \]  
    with
    \[
        \cA(x_1, \dotsc, x_n; y_1, \dotsc, y_m) = \sum_{p_1, \dotsc, p_n \ge 0; q_1, \dotsc, q_m} a(p_1, \dotsc, p_n; q_1, \dotsc, q_m) x_1^{p_1} \dotsb x_n^{p_n} y_1^{q_1} \dotsb y_m^{q_m},
    \]
    \[
        A(P_1, \dotsb, P_n) = \sum_{p_1 \le P_1, \dotsc, p_n \le P_n; q_1, \dotsc, q_m} |a(p_1, \dotsc, p_n; q_1, \dotsc, q_m)|.
    \]
    We call $a$ the coefficients (signed degeneracies) of $\cA$ and call $A$ the cumulative absolute degeneracies of $\cA$.  We further refer to $(\cA, a, A)$ as a partition triple.
\end{defn}

\begin{defn}\label{Def:LogLinear}
    A partition series
    \[ 
        \cA \in \bZ[y_1, \dotsc, y_m, y_1^{-1}, \dotsc, y_m^{-1}][[\mu, x]]
    \]
    is said to be log-linearly bounded if the cumulative absolute degeneracies $A$ are such that for any pair $\alpha, \varepsilon > 0$, there exists $N_0 \in \bZ_{>0}$ such that
    \[
        \log A(N, \lfloor\alpha N^2\rfloor) < N^{1+\varepsilon} \quad\text{for all}\quad N > N_0.
    \]
    Likewise, a family of partition series
    \[ 
        \{ \cA_N \in \bZ[y_1, \dotsc, y_m, y_1^{-1}, \dotsc, y_m^{-1}][[x]] : N \in \bZ_{>0} \}
    \]
    is said to be log-linearly bounded if for any pair $\alpha, \varepsilon > 0$, there exists $N_0 \in \bZ_{>0}$ such that
    \[
        \log A_N(\lfloor\alpha N^2\rfloor) < N^{1+\varepsilon} \quad\text{for all}\quad N > N_0.
    \]
\end{defn}

\begin{lem}\label{Lemma:Product}
    Let $(\cA, a, A)$, $(\cB, b, B)$, and $(\cC, c, C)$ be three parition triples.
    If $\cA \, \cB = \cC$, then
    \[
        A(P_1, \dotsb, P_n) \, B(P_1, \dotsb, P_n) \ge C(P_1, \dotsb, P_n)
    \]
    for all $P_1, \dotsc, P_n$.
\end{lem}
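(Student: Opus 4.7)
The statement is a standard majorization estimate for coefficients of products of formal power series, so I would proceed directly from the definitions. The plan is to expand the coefficients $c$ of $\cC = \cA \cdot \cB$ as a convolution of $a$ and $b$, apply the triangle inequality to bound $|c|$ by a sum of products $|a| \cdot |b|$, sum over the index ranges defining $C$, and finally relax the convolutional constraint so that the resulting double sum factorizes into $A \cdot B$.

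More concretely, first I would write
\[
    c(p_1,\dotsc,p_n;q_1,\dotsc,q_m) \;=\; \sum_{\substack{p'_i + p''_i = p_i \\ q'_j + q''_j = q_j}} a(\vec{p}\,';\vec{q}\,') \, b(\vec{p}\,'';\vec{q}\,''),
\]
which is just the statement that the coefficient of $x_1^{p_1}\dotsb y_m^{q_m}$ in $\cA\cB$ comes from all ways of splitting the exponents between $\cA$ and $\cB$. Applying the triangle inequality term-by-term gives $|c(\vec{p};\vec{q})| \le \sum |a(\vec{p}\,';\vec{q}\,')| \, |b(\vec{p}\,'';\vec{q}\,'')|$, with the sum over the same index set.

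Next I would sum the inequality over $p_i \le P_i$ and over all $q_j \in \bZ$ to obtain a bound on $C(P_1,\dotsc,P_n)$. The key observation, and the only nontrivial step, is that the nonnegativity of the $p'_i$ and $p''_i$ (because $x_1,\dotsc,x_n$ are one-sided fugacities) implies that the constraint $p'_i + p''_i = p_i \le P_i$ forces each of $p'_i, p''_i$ to lie in $[0, P_i]$. Hence the constrained sum is majorized by the unconstrained double sum in which $p'_i \le P_i$ and $p''_i \le P_i$ are imposed independently:
\[
    C(P_1,\dotsc,P_n) \;\le\; \sum_{\substack{p'_i \le P_i \\ q'_j}} |a(\vec{p}\,';\vec{q}\,')| \, \sum_{\substack{p''_i \le P_i \\ q''_j}} |b(\vec{p}\,'';\vec{q}\,'')| \;=\; A(P_1,\dotsc,P_n) \, B(P_1,\dotsc,P_n).
\]
This would complete the proof. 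There is no serious obstacle; the only point meriting care is that the relaxation of the convolutional constraint to two independent constraints relies crucially on the one-sided fugacities being nonnegative in their exponents---the two-sided fugacities $y_1,\dotsc,y_m$ play no role because the $y$-summation in the definition of $A$, $B$, $C$ is already unrestricted.
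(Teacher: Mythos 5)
Your argument is correct and is precisely the reasoning the paper has in mind: the paper offers no explicit proof, declaring Lemma~\ref{Lemma:Product} ``self-evident,'' and your convolution--triangle-inequality--relaxation chain is the standard way to fill that in, with the key point (that one-sidedness of the $x$-exponents lets you replace $p_i'+p_i''\le P_i$ by the independent constraints $p_i'\le P_i$, $p_i''\le P_i$) correctly identified, and finiteness of the $y$-sums guaranteed because the coefficients are Laurent \emph{polynomials} in the two-sided fugacities. Nothing further is needed.
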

In prose, the lemma states that the cumulative absolute degeneracies of a product $\cA\cB$ are upper-bounded by the product of the cumulative absolute degeneracies of its factors $\cA, \cB$.

\begin{lem}\label{Lemma:Unflavor}
    Given a partition series
    \[ 
        \cA \in \bZ[y_1, \dotsc, y_m, y_1^{-1}, \dotsc, y_m^{-1}][[x_1, \dotsc, x_n]]
    \]
    with non-negative coefficients, the derived series 
    \[
        \cB \in \bZ[[x_1, \dotsc, x_n]]
    \]
    defined by 
    \[
        \cB(x_1, \dotsc, x_n) := \cA(x_1, \dotsc, x_n; 1, \dotsc, 1)
    \]
    has the same cumulative absolute degeneracies as $\cA$.
\end{lem}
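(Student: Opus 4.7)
The plan is to unfold the definitions and invoke the hypothesis of non-negative coefficients to remove the absolute-value signs. By Definition~\ref{Def:PartitionSeries}, the series $\cA$ has coefficients $a(p_1, \dotsc, p_n; q_1, \dotsc, q_m) \ge 0$. Setting $y_1 = \dotsb = y_m = 1$ in the defining expansion of $\cA$ immediately yields the coefficients of $\cB$,
\[
b(p_1, \dotsc, p_n) \;=\; \sum_{q_1, \dotsc, q_m} a(p_1, \dotsc, p_n; q_1, \dotsc, q_m),
\]
where the inner sum is finite because, for each fixed $(p_1, \dotsc, p_n)$, $\cA$ is a Laurent polynomial in $y_1, \dotsc, y_m$.

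Since every $a$-coefficient is non-negative, so is each $b$-coefficient, and hence
\[
|b(p_1, \dotsc, p_n)| \;=\; b(p_1, \dotsc, p_n) \;=\; \sum_{q_1, \dotsc, q_m} |a(p_1, \dotsc, p_n; q_1, \dotsc, q_m)|.
\]
Summing over $p_i \le P_i$ for $i = 1, \dotsc, n$ and interchanging the order of summation (permissible because all summands are non-negative) gives $B(P_1, \dotsc, P_n) = A(P_1, \dotsc, P_n)$, as claimed.

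There is no real obstacle here: the lemma exists to flag that specializing a signed character to $y_j = 1$ can in general introduce cancellations that strictly decrease the cumulative absolute degeneracies, but the non-negativity hypothesis rules this out. In the application to follow, this will allow us to upgrade bounds on a flavored partition series, where positivity is manifest, into bounds on the corresponding unflavored series without losing any coarseness in the asymptotics.
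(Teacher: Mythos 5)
Your proof is correct and is exactly the argument the paper has in mind: the paper declares Lemma~\ref{Lemma:Unflavor} self-evident and gives no further detail, and your unfolding of the definitions (setting $y_j=1$ sums the coefficients over the $q$-exponents, and non-negativity lets absolute values pass through the finite inner sum) is the intended justification. No gap; nothing further is needed.
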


\begin{lem}\label{Lemma:Pochhammer}
    The partition series
    \[
        \frac{1}{(x^p; x^q)_\infty} \quad\text{for all}\quad p, q \in \bZ_{>0}
    \]
    are log-linearly bounded.
\end{lem}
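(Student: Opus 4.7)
The plan is to give the series a combinatorial interpretation and then bound it by the classical partition function. Expanding
\[
\frac{1}{(x^p; x^q)_\infty} = \prod_{k=0}^{\infty} \frac{1}{1 - x^{p+kq}} = \sum_{n \geq 0} c_{p,q}(n)\, x^n,
\]
we recognize $c_{p,q}(n)$ as the number of partitions of $n$ whose parts all lie in the arithmetic progression $\{p + kq : k \in \bZ_{\geq 0}\}$. In particular, every coefficient is non-negative, so the cumulative absolute degeneracy is simply the partial sum $A(P) = \sum_{n \leq P} c_{p,q}(n)$.

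The key inequality is then the trivial domination $c_{p,q}(n) \leq p(n)$, where $p(n)$ denotes the unrestricted partition function: any partition of $n$ into parts drawn from the restricted arithmetic progression is \emph{a fortiori} a partition of $n$. I would then invoke the Hardy--Ramanujan bound $p(n) \leq \exp(C \sqrt{n}\,)$ valid for all $n$ and some absolute constant $C$ (one may take $C = \pi\sqrt{2/3}$); this is classical and admits a short elementary proof \emph{à la} Erd\H{o}s if a self-contained argument is preferred. Chaining these estimates yields
\[
A(P) \leq (P+1)\, p(P) \leq (P+1)\,\exp\!\bigl(C\sqrt{P}\,\bigr).
\]

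Finally, setting $P = \lfloor \alpha N^2 \rfloor$ gives
\[
\log A\bigl(\lfloor \alpha N^2 \rfloor\bigr) \leq \log(\alpha N^2 + 1) + C\sqrt{\alpha}\, N,
\]
which is $O(N)$ and therefore strictly less than $N^{1+\varepsilon}$ for any fixed $\varepsilon > 0$ as soon as $N$ is sufficiently large. In the terminology of Definition~\ref{Def:LogLinear}, this verifies log-linear boundedness for the constant family indexed by $N$. There is no serious obstacle here: the argument reduces to a standard asymptotic for $p(n)$, and since the target bound is much weaker than $\exp(C\sqrt{n}\,)$---we merely need the ``$x$-side'' degeneracies to grow subexponentially in $N$---even a crude Erd\H{o}s-style count of Young diagrams would suffice in place of the sharp Hardy--Ramanujan estimate.
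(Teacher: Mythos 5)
Your proof is correct and follows essentially the same route as the paper, which simply asserts that the lemma is ``a consequence of (a slight generalization of) the Hardy--Ramanujan theorem'': the coefficients count partitions into parts from the progression $\{p+kq\}$, and their $\exp(C\sqrt{n})$ growth gives $\log A(\lfloor\alpha N^2\rfloor)=O(N)<N^{1+\varepsilon}$. Your one refinement --- dominating the restricted count by the unrestricted partition function $p(n)$ so that only the classical Hardy--Ramanujan upper bound is needed, with no generalization to arithmetic progressions --- is a tidy way to make the cited step self-contained, since only an upper bound is required.
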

Whereas Lemma~\ref{Lemma:Product} and Lemma~\ref{Lemma:Unflavor} are self-evident, Lemma~\ref{Lemma:Pochhammer} is a consequence of (a slight generalization of) the Hardy-Ramanujan theorem \cite{hardy1918asymptotic}.

\paragraph{Unflavored Schur index} 

\begin{thm}\label{SBound}
    The unflavored Schur index is log-linearly bounded.
\end{thm}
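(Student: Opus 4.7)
The plan is to leverage the closed-form expressions for the unflavored $\U(N)$ $\mathcal{N}=4$ Schur index established in \cite{Bourdier:2015wda,Hatsuda:2022xdv}. These formulas rewrite $I_\S^{\U(N)}(x)$ as a finite sum, whose summand count and internal degrees grow at most polynomially in $N$, of products of polynomial prefactors and $q$-Pochhammer symbols $(x^p;x^q)_\infty^{\pm 1}$. The strategy is then to combine Lemma~\ref{Lemma:Product} and Lemma~\ref{Lemma:Pochhammer} term by term, exploiting the substantial slack between the Hardy-Ramanujan scale $\exp(O(N))$ and the target $\exp(N^{1+\varepsilon})$.

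Concretely, I would schematically express the closed form as
\ie
I_\S^{\U(N)}(x) \;=\; \sum_{k=1}^{T(N)} P_k(x)\prod_j (x^{p_{k,j}};x^{q_{k,j}})_\infty^{\epsilon_{k,j}}, \quad \epsilon_{k,j}\in\{\pm 1\},
\fe
and verify, from the explicit closed form, that $T(N)$, the degrees of $P_k$, and the orders $p_{k,j}$, $q_{k,j}$ are all polynomially bounded in $N$. Numerator Pochhammers (those with $\epsilon=+1$) contribute coefficients of modulus $0$ or $1$ by Euler's pentagonal number identity, so their cumulative absolute degeneracies are trivially log-linearly bounded. Denominator Pochhammers (those with $\epsilon=-1$) are log-linearly bounded directly by Lemma~\ref{Lemma:Pochhammer}, contributing only $\exp(O(N))$ at order $\lfloor\alpha N^2\rfloor$.

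For each fixed summand, Lemma~\ref{Lemma:Product} bounds the cumulative absolute degeneracies at order $\lfloor\alpha N^2\rfloor$ by the product of bounds on its individual factors. Each Pochhammer factor contributes at most $\exp(N^{1+\varepsilon/3})$ for large $N$, and the polynomial prefactor $P_k$, being of polynomial degree, contributes a factor of at most $N^{O(N)} \le \exp(N^{1+\varepsilon/3})$. Multiplying, each summand's cumulative absolute degeneracy is at most $\exp(N^{1+\varepsilon/2})$. Subadditivity of cumulative absolute degeneracies under addition (immediate from the triangle inequality) then yields
\ie
A_N(\lfloor\alpha N^2\rfloor) \;\leq\; T(N)\cdot \exp(N^{1+\varepsilon/2}) \;<\; \exp(N^{1+\varepsilon})
\fe
for all $N > N_0$, as required.

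The main obstacle is to extract and control an appropriate closed form: some presentations of the $\mathcal{N}=4$ Schur index involve $N$-fold sums, derivatives of Jacobi theta functions, or Eisenstein-like series at modular parameter $\tau$ with $x = e^{\pi i \tau}$, and one must repackage these into a manifestly polynomial-in-$N$ combination of Pochhammer symbols for the above bounds to apply cleanly. Once this repackaging is carried out, the argument is robust: since the Hardy-Ramanujan output $O(N)$ supplied by Lemma~\ref{Lemma:Pochhammer} lies far below $N^{1+\varepsilon}$, there is ample room to absorb polynomial or even mildly exponential overhead in $T(N)$ without altering the conclusion.
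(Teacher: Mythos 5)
You defer exactly the step where the content of the theorem lies. Your schematic assumes the closed form can be written as a \emph{finite} sum of polynomial prefactors times Pochhammer symbols with all data polynomially bounded in $N$, and you yourself flag this ``repackaging'' as the main obstacle without carrying it out. The closed form of Bourdier et al.\ that one actually has is $I_\S(N,x) = \vartheta_4(x^2)^{-1}\sum_{n=0}^{\infty}(-1)^n\bigl[\binom{N+n}{N}+\binom{N+n-1}{N}\bigr]x^{nN+n^2}$, which is \emph{not} of your assumed shape: the non-Pochhammer factor is an infinite series whose coefficients $\binom{N+n}{N}$ grow without bound in $n$, so it is neither a polynomial nor something whose cumulative absolute degeneracy can be waved through as ``$N^{O(N)}$'' without further argument (nothing in your write-up bounds the coefficients of your $P_k$ at all). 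The missing idea is the one the paper's proof turns on: because the exponents are $nN+n^2$, only terms with $n \le n_{\max} = \lfloor(\sqrt{N^2+4M}-N)/2\rfloor \sim \beta N$, $\beta = \tfrac{1}{2}(\sqrt{1+4\alpha}-1)$, contribute to the cumulative degeneracy up to order $M=\lfloor\alpha N^2\rfloor$, and the partial sum of binomials telescopes to $\binom{N+\lfloor\beta N\rfloor+1}{N+1}$, whose logarithm is $O(N\log N) < N^{1+\varepsilon}$. Without this quadratic-exponent truncation plus explicit control of the binomial sums, your argument does not go through; with it, the rest is exactly the paper's route (split off $\vartheta_4(x^2)^{-1} = 1/[(x^2;x^2)_\infty (x;x^2)_\infty^2]$ via Lemma~\ref{Lemma:Product} and bound it via Lemma~\ref{Lemma:Pochhammer}).

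Two smaller problems in the parts you did spell out. First, the appeal to Euler's pentagonal number theorem to claim numerator Pochhammers have coefficients of modulus $0$ or $1$ is only valid for $(x^p;x^p)_\infty$; a general $(x^{p};x^{q})_\infty$ need not have that property (in the unflavored case this is moot, since the prefactor contains only denominator Pochhammers, but as stated the claim is wrong). Second, your bookkeeping multiplies per-factor bounds of $\exp(N^{1+\varepsilon/3})$ to get $\exp(N^{1+\varepsilon/2})$ per summand; if the number of Pochhammer factors per summand were itself allowed to grow polynomially in $N$ (which your hypotheses permit), this product would not stay below $\exp(N^{1+\varepsilon})$, so you would need to additionally assume $O(1)$ factors per summand or track constants more carefully. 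Both issues disappear once you commit to the explicit closed form and run the binomial estimate above, which is precisely what the paper does.
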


\begin{proof}

The integral formula \eqref{Integral} for the unflavored Schur index was explicitly evaluated to give a closed-form expression \cite{Bourdier:2015wda},
\ie
I_\S(N, x) = \frac{1}{\vartheta_4(x^2)}\sum_{n=0}^\infty (-1)^n\left[ \binom{N+n}{N} + \binom{N+n-1}{N} \right]x^{nN+n^2},
\fe
where
\ie 
    \vartheta_4(q) = \sum_{n\in\bZ} (-1)^n q^{\frac{n^2}{2}} = (q; q)_\infty (q^{\frac12}; q)_\infty^2.
\fe 
By Lemma~\ref{Lemma:Product} and Lemma~\ref{Lemma:Pochhammer}, it suffices to prove that
\ie
    \cA_N(x) = \sum_{n=0}^\infty a(nN+n^2) x^{nN+n^2}, \quad 
    a(nN+n^2) = \binom{N+n}{N}
\fe
is log-linearly bounded.  The cumulative absolute degeneracy is 
\ie 
    A_N(M) = \sum_{n=0}^{n_\mathrm{max}} \binom{N+n}{N} = \binom{N+n_\text{max}+1}{N+1}, 
    \quad 
    n_\mathrm{max} = \lfloor \frac{\sqrt{N^2+4M}-N}{2} \rfloor,
\fe 
which for $M = \lfloor\alpha N^2\rfloor$ becomes
\ie 
    A_N(M) &= \binom{N+\lfloor\beta N\rfloor+1}{N+1}, 
    \quad 
    \beta = \frac{\sqrt{1+4\alpha}-1}{2},
\fe 
and satisfies
\ie 
    A_N(M) < 2^{N+\lfloor\beta N\rfloor+1}.
\fe 

\end{proof}

The Cardy limit and other asymptotics of the unflavored Schur index were studied in \cite{Eleftheriou:2022kkv} and no black hole entropy growth was found.

\paragraph{Flavored Schur index} 

\begin{thm}\label{FSBound}
    The flavored Schur index is log-linearly bounded.
\end{thm}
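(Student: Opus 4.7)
The strategy will closely parallel that of Theorem~\ref{SBound}, using the closed-form expression for the flavored Schur index derived in \cite{Hatsuda:2022xdv,Bourdier:2015wda}. The plan is to write
\[
I_{\FS}(N; x, b) \;=\; \frac{1}{\cP(x, b)} \, \cN_N(x, b),
\]
where $\cP(x, b)$ is an $N$-independent product of Pochhammer-type factors with arguments drawn from $\{x^2, bx, b^{-1}x, \dots\}$, and $\cN_N(x, b)$ is a finite $\vartheta$-like sum whose $n$-th term lives at $x$-order quadratic in $n$, directly analogous to the binomial sum $\sum_n (-1)^n[\binom{N+n}{N}+\binom{N+n-1}{N}]x^{nN+n^2}$ that appeared in the unflavored case.

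By Lemma~\ref{Lemma:Product}, it suffices to treat the two factors separately. For the Pochhammer prefactor, the purely $x$-dependent factors are covered directly by Lemma~\ref{Lemma:Pochhammer}. A mild extension handles the $b$-shifted factors $1/(bx^p; x^q)_\infty = \sum_n b^n x^{pn}/(x^q; x^q)_n$: these have manifestly non-negative coefficients in $(x, b)$, so Lemma~\ref{Lemma:Unflavor} identifies their cumulative absolute degeneracies with those of $1/(x^p; x^q)_\infty$, which are log-linearly bounded.

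For the numerator family $\{\cN_N\}$, at most $n_\mathrm{max} = O(\sqrt{M}) = O(N)$ summands contribute below the cutoff $M = \lfloor\alpha N^2\rfloor$, by the same quadratic-scaling argument as in Theorem~\ref{SBound}. The goal is to show that each summand is a Laurent polynomial in $b$ of degree $O(N)$ whose $b$-coefficients are integers of absolute value bounded by a binomial of the form $\binom{N+n}{N}$, reflecting the $U(N)$ character content of the matrix integral \eqref{Integral} evaluated via a Cauchy-like expansion. Granting this, summing $|b|$-weights costs a factor polynomial in $N$ per summand, and combining $O(N)$ such contributions yields
\[
\log A_N(\lfloor \alpha N^2 \rfloor) \;=\; O(N \log N),
\]
comfortably below the log-linear threshold $N^{1+\varepsilon}$.

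The principal obstacle is establishing the uniform polynomial-in-$N$ absolute bound on the $b$-coefficients of each $\cN_N$ summand. Because the flavored numerator generically carries signs in $b$, one cannot invoke Lemma~\ref{Lemma:Unflavor} to reduce to the unflavored case, and the triangle inequality must be applied to the explicit closed form. Making this step rigorous requires unpacking the precise combinatorial structure of the formula of \cite{Hatsuda:2022xdv}---specifically, extracting the $b$-degree and an integer bound on the $b$-Laurent coefficients from the $\vartheta$-function identities underlying that closed form. This is the main technical content of the argument beyond Theorem~\ref{SBound}; once it is in hand, the $N \log N$ estimate above completes the proof.
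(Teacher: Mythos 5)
There is a genuine gap, and it sits exactly where you flag it: the ``uniform polynomial-in-$N$ bound on the $b$-coefficients of each numerator summand'' is not an incidental technicality to be unpacked later---it is the entire content of the theorem beyond Theorem~\ref{SBound}, and your sketch neither proves it nor correctly identifies the structure of the closed form it would have to be extracted from. The formula of Hatsuda--Okuyama is not of the shape $\cN_N(x,b)/\cP(x,b)$ with an $N$-independent Pochhammer prefactor and a finite $\vartheta$-like numerator analogous to the Bourdier et al.\ sum: it involves an auxiliary fugacity $u$, an $N$-\emph{dependent} prefactor $\Lambda(N,x;b,u)\propto b^{N^2/2}x^{-N^2/2}\,\theta(u,x^2)/\theta(ub^{-N}x^N,x^2)$, and a factor $Z(N,x;b,u)$ that is only accessible as the $\mu^N$ coefficient of a grand canonical infinite product $\prod_{p\in\bZ}\bigl(1-\frac{\mu b^{-p}x^p}{1-ux^{2p}}\bigr)$, not as a finite theta sum. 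Even making the prefactor a legitimate partition series requires choosing $u$ differently for odd and even $N$ ($u\to1$ versus $u=x$) so that each $x$-order is a finite Laurent polynomial in $b$; none of this is addressed in your decomposition, so the ``plan'' does not yet attach to an actual formula.

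The missing idea that the paper uses to resolve precisely the obstacle you name is \emph{majorization before unflavoring}: one does not need to control the signed $b$-coefficients term by term via the triangle inequality. Instead, replace each factor $\bigl(1-\frac{\mu b^{-p}x^{p}}{1-x^{2p}}\bigr)$ of the grand canonical product (and the analogous factors in the prefactor) by its sign-flipped version $\bigl(1+\frac{\mu b^{-p}x^{p}}{1-x^{2p}}\bigr)$; the resulting series has manifestly non-negative coefficients and coefficientwise dominates the absolute values of the original ones, so its cumulative absolute degeneracies bound those of $I_\FS$. Only \emph{then} does Lemma~\ref{Lemma:Unflavor} apply, allowing one to set $b=1$ and reduce, after one more elementary majorization, to products of inverse $q$-Pochhammer symbols controlled by Lemma~\ref{Lemma:Pochhammer}, with the large negative prefactor power $x^{-\frac{N^2-1}{4}}$ cancelled by the minimal $x$-power $\sim N^2/4$ of the $\mu^N$ coefficient. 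Without this step (or a genuine substitute for it), your argument stops at an unproven assertion---the claimed binomial bound on $b$-Laurent coefficients of a numerator whose existence in the assumed form has not been established---so the proof as proposed does not go through.
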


\noindent\emph{Sketch of proof.}
The key ingredient for proving Theorem~\ref{FSBound} is a closed-form expression for the flavored Schur index due to \cite{Hatsuda:2022xdv}, which we review below.\footnote{We thank Yiwen Pan for directing us to \cite{Hatsuda:2022xdv}. The fugacities in \cite{Hatsuda:2022xdv} are related to those defined here by
\ie 
    q = x^2, \quad \xi = b x^{-1}.
\fe 
}
First, define an auxiliary function
\ie 
    \theta(u, q) &:= i q^{-\frac18} 
    \vartheta_1(u, q)
    = \sum_{n\in\bZ} (-1)^n u^{n+\frac12} q^{\frac{n^2+n}{2}}
    = - u^{-\frac12} (q;q)_\infty (\tfrac{q}{u};q)_\infty (u;q)_\infty
\fe 
that satisfies
\ie\label{dtheta}
    \frac{\partial}{\partial u} \theta(u, q) |_{u=1} = (q;q)_\infty^3.
\fe 
We then write the flavored Schur index as a product
\ie 
    I_\FS(N, x; b) = \Lambda(N, x; b, u) Z(N, x; b, u),
\fe
where $u$ is an auxiliary fugacity (the dependence on $u$ cancels between the $\Lambda$ and $Z$ factors in nontrivial fashion), and
\ie 
    \Lambda(N, x; b, u)& := (-1)^N (b x^{-1})^{\frac{N^2}{2}} \frac{\theta(u, x^2)}{\theta(u b^{-N} x^N, x^2)}.
\fe 
Then $Z$ can be assembled into a grand canonical index that admits a product formula:
\ie 
    \Xi(\mu, x; b, u) &:= 1 + \sum_{N=1}^\infty Z(N, x; b, u) \mu^N
    = \prod_{p\in\bZ} \left(1-\frac{\mu b^{-p} x^p}{1-u x^{2p}}\right).
\fe

To proceed, we consider two specializations of the auxiliary fugacity, $u \to 1$ and $u = x$, which are suitable for odd and even $N$, respectively.  Suitability here means that when expanding $\Lambda$ in $x$, each series coefficient is a \emph{finite} Laurent series in $b$.  

Our strategy is to consider the grand canonical index as a series expansion in $\mu, x, b$, and \emph{majorize} it by other series with manifestly non-negative coefficients.  
We can then safely turn off the flavor fugacity $b$ in the majorizing series, which when cleverly chosen gives simple expressions in terms of $q$-Pochhammer symbols with well-established asymptotics.  Said asymptotics then rigorously bound the asymptotics of the flavored Schur index.

We now present the proof.  The reader content with the explanation above can safely skip ahead.

\begin{proof}  We treat odd and even ranks separately.

\noindent\emph{Odd rank.}  For odd $N$, consider the $u \to 1$ limit. Using \eqref{dtheta}, we have
\ie 
    \lim_{u\to1} \theta(u, x^2) \, \Xi(\mu, x; b) = \mu \times (x^2;x^2)_\infty^3 \prod_{p\neq0} \left(1-\frac{\mu b^{-p} x^p}{1-x^{2p}}\right),
\fe 
and hence we can write
\ie 
    I_\FS(N, x; b) = \tilde\Lambda(N, x; b) \tilde Z(N, x; b),
\fe 
where
\ie 
    \tilde\Lambda(N, x; b) &:= 
    x^{\frac{N^2-1}{4}} \times
    (-1)^N (bx^{-1})^{\frac{N^2}{2}} \frac{(x^2;x^2)_\infty^3}{\theta(b^{-N} x^N, x^2)},
    \\
    \tilde Z(N, x; b) &:= x^{-\frac{N^2-1}{4}} \times \frac{\mu}{(x^2;x^2)_\infty^3} \lim_{u\to1} \theta(u, x^2) Z(N, x; b, u).
\fe 
Note that we have inserted factors of $x^{\frac{N^2-1}{4}}$ such that both $\tilde\Lambda$ and $\tilde Z$ are partition series, i.e.\ their expansions in $x$ start from $x^0$.
A grand canonical index is given by
\ie\label{OddXiForm}
    \tilde\Xi(\mu, x; b) &:= \sum_{N=1}^\infty x^{\frac{N^2-1}{4}} \tilde Z(N, x; b) \mu^N
    \\
    &= \mu \prod_{p\neq0} \left(1-\frac{\mu b^{-p} x^p}{1-x^{2p}}\right) 
    = \mu \prod_{p>0} \left(1-\frac{\mu b^{-p} x^{p}}{1-x^{2p}}\right)
    \left(1+\frac{\mu b^{p} x^{p}}{1-x^{2p}}\right).
\fe 
By Lemma~\ref{Lemma:Product}, we would have proven Theorem~\ref{FSBound} if we could prove that $\tilde\Lambda$ and $\tilde Z$ are each log-linearly bounded.

First, we deal with the prefactor $\tilde\Lambda$, which upon some manipulations becomes
\ie 
    \tilde\Lambda(N, x; b) &= 
    (-1)^{\frac{N-1}{2}} 
    \frac{(x^2;x^2)_\infty^2}{(b^{-N} x^N;x^2)_\infty (b^N x; x^2)_\infty} 
    \prod_{n=0}^{\frac{N-3}{2}} \frac{1}{1-b^{-N} x^{2n+1}}.
\fe 
It is obvious that
\ie 
    \tilde\Lambda(N, x; b) \preceq \frac{
    1
    }{(b^{-N} x^N;x^2)_\infty (b^N x; x^2)_\infty (x^2;x^2)_\infty^2 (b^{-N} x;x^2)_\infty},
\fe 
where the right side has non-negative series coefficients. 
A combination of Lemma~\ref{Lemma:Product},~\ref{Lemma:Unflavor},~\ref{Lemma:Pochhammer} then shows that $\tilde\Lambda$ is log-linearly bounded.

Next, we deal with $\tilde Z$.  It is obvious from \eqref{OddXiForm} that
\ie  
    \tilde\Xi(\mu, x; b) \preceq \tilde\Omega(\mu, x; b) := \mu \prod_{p>0} \left(1+\frac{\mu b^{-p} x^{p}}{1-x^{2p}}\right) \left(1+\frac{\mu b^{p} x^{p}}{1-x^{2p}}\right),
\fe
where $\tilde\Omega$ has manifestly non-negative coefficients.  By Lemma~\ref{Lemma:Unflavor}, to show that
\ie
    \tilde Z(N, x; b) = x^{-\frac{N^2-1}{4}} \tilde\Xi(\mu, x; b)|_{\mu^N}
\fe
is log-linearly bounded, it suffices to show that $x^{-\frac{N^2-1}{4}} \tilde\Omega(\mu,x;1)|_{\mu^N}$ is log-linearly bounded, where
\ie 
    \tilde\Omega(\mu,x;1) = \mu \, \tilde\omega(\mu,x)^2, \quad \tilde\omega(\mu,x) &:= \prod_{p>0} \left(1+\frac{\mu x^{p}}{1-x^{2p}}\right).
\fe 
This can be achieved by another sequence of majorizations,
\ie 
    \tilde\omega(\mu;x)
    \preceq
    \tilde\nu(\mu;x) &:=
    \prod_{p>0} \frac{1 + \mu x^{p}}{1-x^{2p}} = \frac{(-\mu x,x)_\infty}{(x^2;x^2)_\infty}
    \\
    &
    = \frac{1}{(x^2;x^2)_\infty} \sum_{N=0}^\infty \frac{x^{\frac{N(N+1)}{2}}}{(x; x)_N} \mu^N
    \preceq
    \frac{1}{(x^2;x^2)_\infty} \sum_{N=0}^\infty \frac{x^{\frac{N(N+1)}{2}}}{(x; x)_\infty} \mu^N,
\fe 
resulting in
\ie 
    x^{-\frac{N^2-1}{4}} \tilde\Omega(\mu,x;1)|_{\mu^N} 
    &\preceq 
    \frac{x^{-\frac{N^2-1}{4}}}{(x^2;x^2)_\infty^2 (x;x)_\infty^2} 
    \sum_{p+q=N-1} x^{\frac{p(p+1)}{2}+\frac{q(q+1)}{2}} 
    \preceq 
    \frac{N}{(x^2;x^2)_\infty^2 (x;x)_\infty^2},
\fe 
and finally invoking Lemma~\ref{Lemma:Pochhammer}.

\noindent\emph{Even rank.}  For even $N$, set $u = x$, such that 
\ie 
    \tilde\Lambda(N, x; b) &:= x^{\frac{N(N-2)}{4}} \times \Lambda(N, x; b, x)
    \\
    &= x^{\frac{N(N-2)}{4}} \times (-1)^N b^{\frac{N^2}{2}} x^{-\frac{N^2}{2}} \frac{\theta(x, x^2)}{\theta(b^{-N} x^{N+1}, x^2)}
    \\
    &= (-1)^{\frac{N}{2}} \frac{b^{\frac{N}{2}} (x; x^2)_\infty^2}{(b^N x; x^2)_\infty (b^{-N} x^{N+1}; x^2)_\infty} \prod_{n=0}^{\frac{N}{2}-1} \frac{1}{(1-b^{-N} x^{2n+1})},
    \\
    \tilde Z(N, x; b) &:= x^{-\frac{N(N-2)}{4}} \times Z(N, x; b, x),
    \\
    \tilde\Xi(\mu, x; b) &:= \sum_{N=0}^\infty x^{\frac{N(N-2)}{4}} \tilde Z(N, x; b) \mu^N
    \\
    &= \prod_{p\in\bZ} \left(1-\frac{\mu b^{-p} x^p}{1-x^{2p+1}}\right)
    = 
    \prod_{p \ge 0} \left(1-\frac{\mu b^{-p} x^{p}}{1-x^{2p+1}}\right) \left(1+\frac{\mu b^{p+1} x^{p}}{1-x^{2p+1}}\right).
\fe
The proof then proceeds in the same way as for odd rank.
\end{proof}

\subsection{Explicit evaluation}
\label{sec:computation_index}

In \cite{Murthy:2020rbd, Agarwal:2020zwm}, the unrefined SCI capturing $\frac{1}{16}$-BPS states in $\cN=4$ SYM for gauge group $\U(N)$ was explicitly evaluated.  In particular, \cite{Murthy:2020rbd} used symmetric group theory and a formula derived from the integral formula, which is well-suited for evaluating the various other indices introduced in Section~\ref{Sec:Indices}.\footnote{The required data for $N=1,2,\dotsc,10$ were extracted from the character tables of $\S_n$ up to $n=50$ using GAP \cite{GAP4}, and are publicly available in Mathematica \texttt{.m} format in the \texttt{sn/} folder of \href{https://github.com/yinhslin/bps-counting/}{\texttt{https://github.com/yinhslin/bps-counting/}}. Also provided is a Mathematica notebook \texttt{index.nb} illustrating how the symmetric group data is used in computing the microcanonical indices.}
We evaluate the absolute degeneracies, defined in the notation of Definition~\ref{Def:PartitionSeries} as
\ie 
    |d(2h)| := \sum_{q_1, \dotsc, q_m} |a(2h; q_1, \dotsc, q_m)|,
\fe 
for the unflavored and flavored Schur indices, the flavored MacDonald index, as well as a specialized version of the SCI,
\ie\label{SpecializedSCI}
    I_\SCI(x, s, b=1, p=xs^{-1}).
\fe 
Note that the unrefined SCI capturing $\frac{1}{16}$-BPS states and studied in \cite{Murthy:2020rbd, Agarwal:2020zwm} was
\ie\label{UnrefinedSCI}
    I_\SCI(x=t^\frac{3}{2},s=t^\frac{1}{2},b=1,p=t^3).
\fe 
The fact that \eqref{SpecializedSCI} when specialized to the Schur sector with $q_3 + J_2 = 0$ has the same exponent $2J_1 + q_1 + q_2$ of $x$ makes its comparison with the Schur and MacDonald indices more natural than \eqref{UnrefinedSCI}.

In Tables~\ref{Tab:U},~\ref{Tab:SUEven},~\ref{Tab:SUOdd} in Appendix~\ref{App:Tables}, we compare the absolute degeneracies of the unflavored and flavored Schur indices.  For $\U(N)$ gauge groups they are extremely similar at low levels.  In fact, up to $2h=30$, only $\U(3)$ exhibits a difference starting at $2h=18$.  We expect that such differences should appear at higher levels for $N>4$, but this is out of the scope of our explicit evaluation.  These explicit degeneracies at low levels complement the previous asymptotic analyses, confirming that the entropy growth exhibited by the unflavored and flavored Schur indices are very similar.

Figures~\ref{Fig:U} and~\ref{Fig:SU} present the comparison of the flavored Schur index, the flavored MacDonald index, and the specialized SCI.
As was also observed in \cite{Murthy:2020rbd, Agarwal:2020zwm} for the unrefined SCI \eqref{UnrefinedSCI}, for fixed charges the degeneracies exhibit little growth with increasing $N$, and the main source of black hole entropy comes from the log-linear growth in the charges, e.g.\ $h$, for fixed $N$, resulting in log-quadratic growth in $N$ as we take charges to scale as $N^2$.
It can be seen that the growth of the flavored MacDonald index is on par with that of the flavored Schur, especially for $\SU(N)$ gauge group, whereas the specialized SCI exhibits exponentially faster growth.
In particular, the specialized SCI shows approximately log-linear growth in $h$ for fixed $N$, whereas both the flavored Schur and MacDonald indices exhibit log-sublinear (supposedly log-square-root) growth.

\newpage

\begin{figure}[H]
	\centering
	\includegraphics[width=0.35\textwidth]{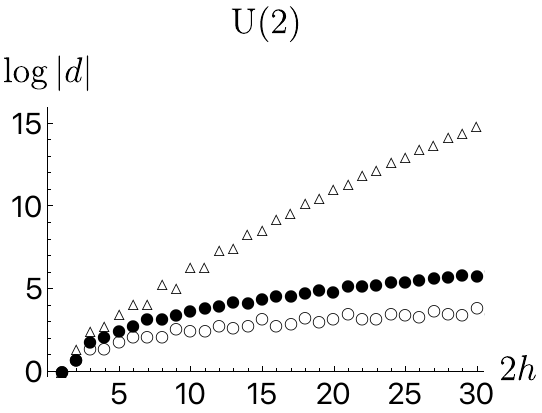} \qquad
	\includegraphics[width=0.35\textwidth]{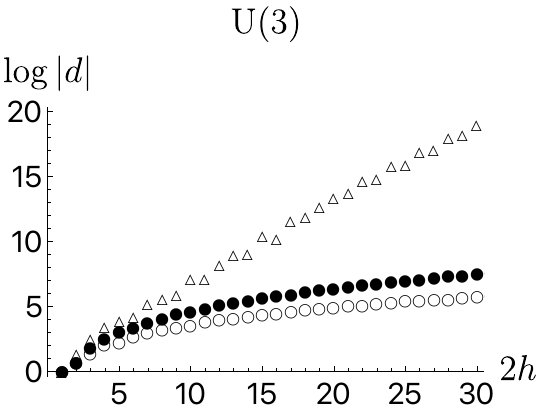}
	\\
	~\vspace{-0.1in}
	\\
	\includegraphics[width=0.35\textwidth]{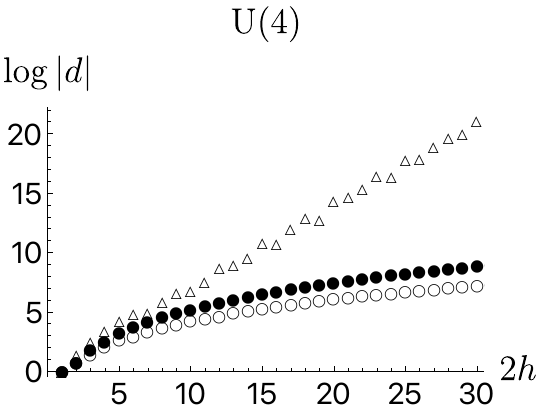} \qquad
	\includegraphics[width=0.35\textwidth]{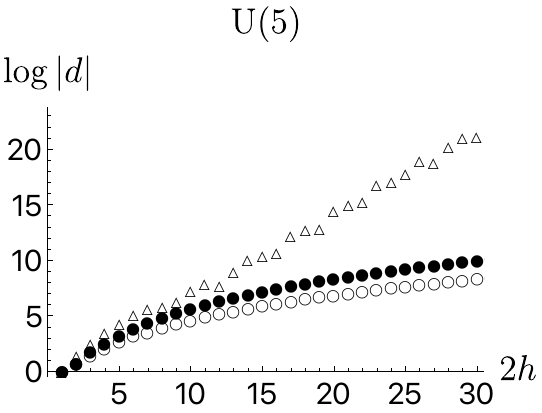}
	\\
	~\vspace{-0.1in}
	\\
	\includegraphics[width=0.35\textwidth]{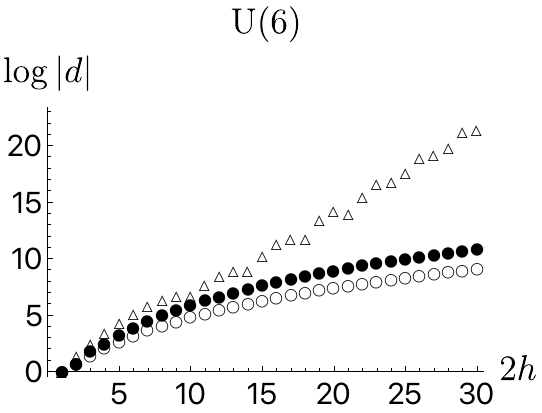} \qquad
	\includegraphics[width=0.35\textwidth]{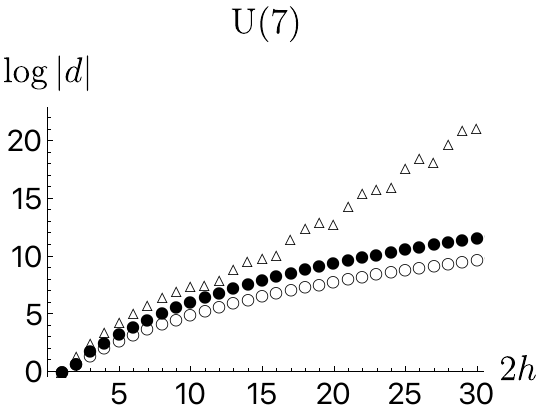}
	\\
	~\vspace{-0.1in}
	\\
	\includegraphics[width=0.35\textwidth]{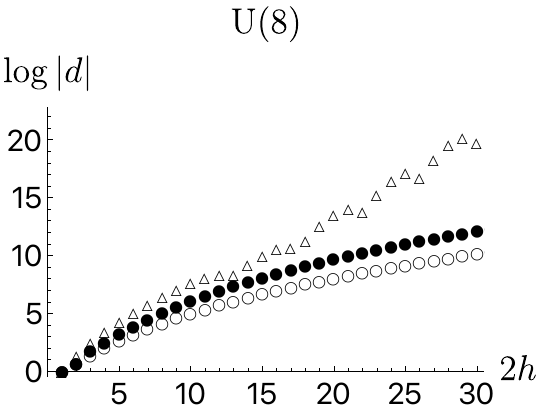} \qquad
	\includegraphics[width=0.35\textwidth]{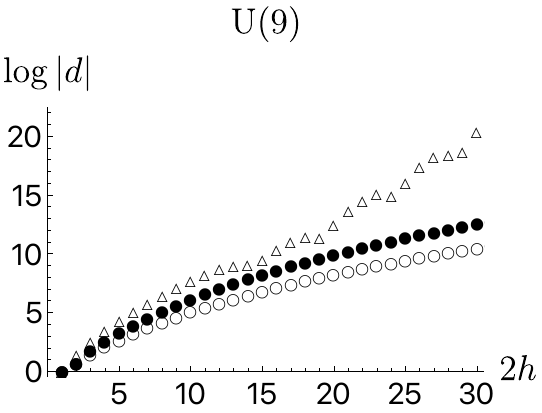}
	\\
    ~\vspace{-0.1in}
	\\
	\caption{Absolute degeneracies of the flavored Schur index (empty circle), the flavored MacDonald index (solid circle), and a specialized SCI (empty triangle), up to $2h = 30$ for $\U(N)$ gauge groups with $N = 2, \dotsc, 9$.}
	\label{Fig:U}
\end{figure}

\newpage

\begin{figure}[H]
	\centering
	\includegraphics[width=0.35\textwidth]{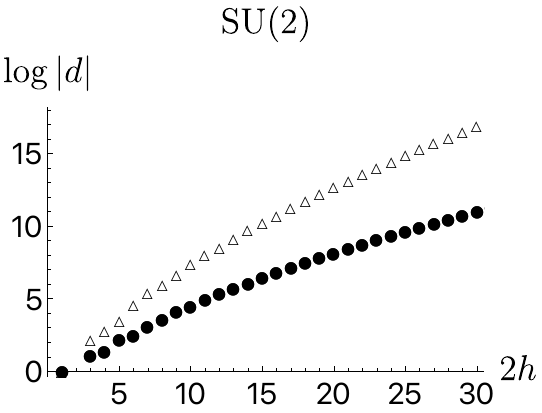} \qquad
	\includegraphics[width=0.35\textwidth]{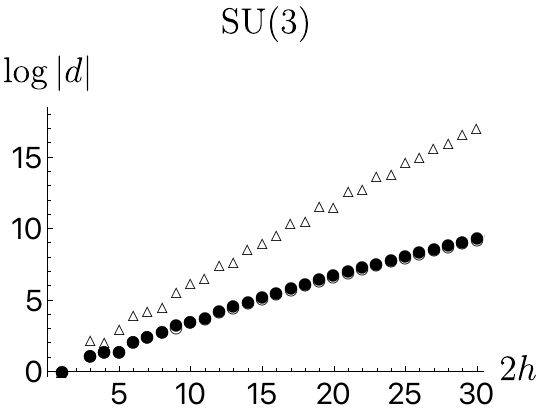}
	\\
	~\vspace{-0.1in}
	\\
	\includegraphics[width=0.35\textwidth]{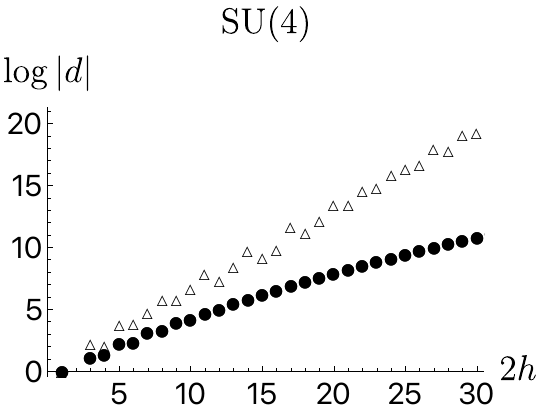} \qquad
	\includegraphics[width=0.35\textwidth]{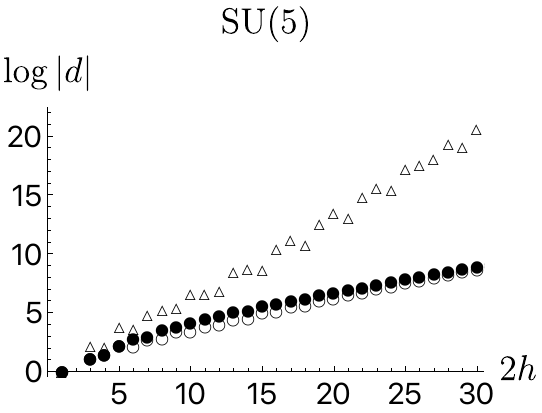}
	\\
	~\vspace{-0.1in}
	\\
	\includegraphics[width=0.35\textwidth]{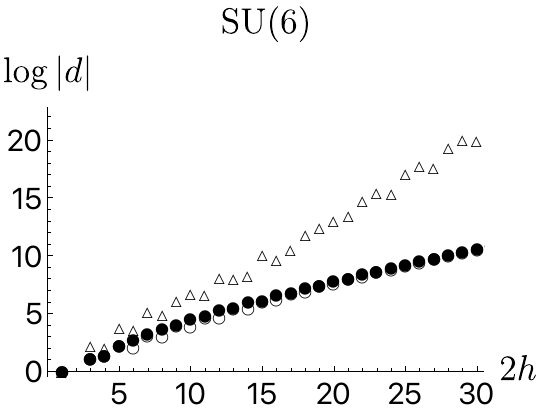} \qquad
	\includegraphics[width=0.35\textwidth]{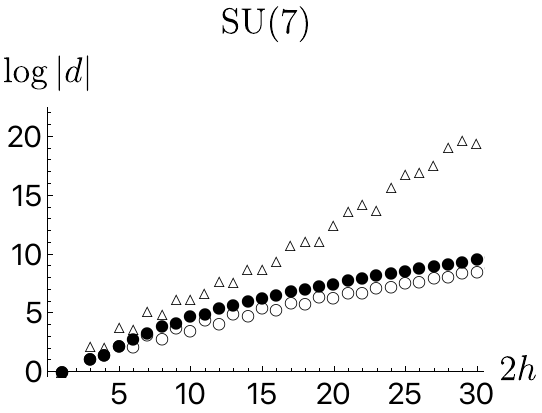}
	\\
	~\vspace{-0.1in}
	\\
	\includegraphics[width=0.35\textwidth]{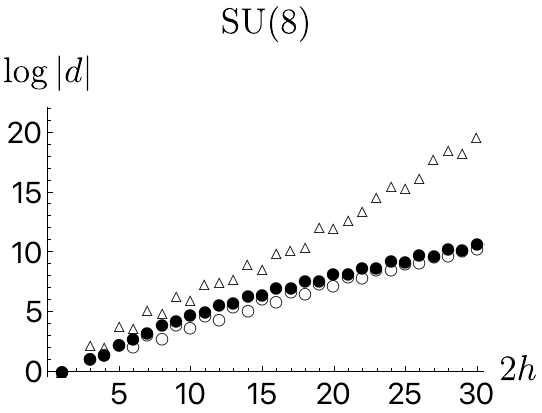} \qquad
	\includegraphics[width=0.35\textwidth]{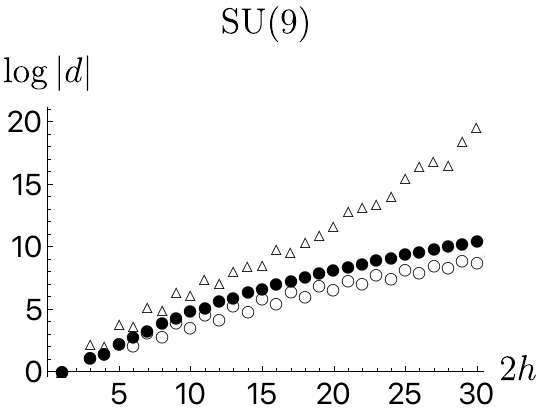}
	\\
	~\vspace{-0.1in}
	\\
	\caption{Absolute degeneracies of the flavored Schur index (empty circle), the flavored MacDonald index (solid circle), and a specialized SCI (empty triangle), up to $2h = 30$ for $\SU(N)$ gauge groups with $N = 2, \dotsc, 9$.}
	\label{Fig:SU}
\end{figure}

\section{Operators and algebras}
\label{Sec:OpAlg}

Having established strong evidence against black hole entropy growth in the Schur sector by studying indices in the previous section, we now turn to analyzing the algebraic structure of the chiral algebra, to assess the full validity of Conjecture~\ref{Conj} and the non-existence of non-gravitons.

\subsection{Correlation functions and the operator product expansion}
\label{Sec:Closure}

We begin by studying the correlation functions and the OPEs of the multi-graviton operators and focus on the conjectured \cite{Beem:2013sza} strong generators of the chiral algebra of the $\cN=4$ SYM.
In light of the discussion at the end of Section~\ref{sec:holotwist}, 
composite operators must be defined by
the regularized product $(\cdot,\cdot)$, which is the leading regular term in the OPE. The chiral algebra generated is conjectured to be an $\mathcal{N} = 4$ generalization of the $\mathcal{W}$-algebra. While it is widely believed to exist, little is known about this $\mathcal{N} = 4$ $\mathcal{W}$-algebra for gauge rank $N>3$.\footnote{See \cite{Bonetti:2018fqz,Arakawa:2023cki} for an interesting alternative free field construction.}
Here, we intend to make some progress on understanding the OPE of the strong generators at generic $N$. For the reasons that will become clear below, we find it more convenient to work with gauge group $\U(N)$.

The symplectic bosons $q_i^A$ form a doublet of the $\SL(2,\mathbb{R})_f$ global symmetry, and have the OPE
\begin{equation}
	q_i^A(z_1) q_j^B(z_2) \sim -\epsilon_{ij} \delta^{AB} \frac{1}{z_{12}}, \quad i,j = 1,2,
\end{equation}
where the capital letters $A,B = 1, \dots, \mathrm{dim}\ \mathfrak{g}$ are gauge indices, and $i,j = 1,2$ are $\SL(2,\mathbb{R})_f$ indices. Following \cite{Bonetti:2018fqz}, let us introduce an index-free notation
\begin{equation}
	Z(u ; z) := q_1(z)+u q_2(z) \label{eq:Z=q+uq}
\end{equation}
using an auxiliary parameter $u$, such that the OPE becomes
\begin{equation}
	Z^A(u_1;z_1) Z^B(u_2;z_2) \sim \frac{u_{12}}{z_{12}}\delta^{AB},
\end{equation}
where $u_{12} := u_1-u_2$ and $z_{12} := z_1-z_2$. More generally, suppose $O_{0}(z), O_{1}(z), \dots, O_{2j}(z)$ furnish a spin $j$ representation, we construct
\begin{equation}
    O(u;z) := \sum_{i=0}^{2j} u^{i} O_i(z),
\end{equation}
which transforms as
\begin{equation}
{O}^{\prime}\left(u^{\prime};z\right)=\left(\frac{\partial u^{\prime}}{\partial u}\right)^j {O}(u;z),
\end{equation}
and $u^\prime$ is the M\"obius transformation of $u$,
\ie
u^{\prime}=\frac{\hat{a} u+\hat{b}}{\hat{c} u+\hat{d}}, \quad\left(\begin{array}{cc}
	\hat{a} & \hat{b} \\
	\hat{c} & \hat{d}
\end{array}\right) \in \mathrm{SL}(2).
\fe

The Lie algebra cohomology \eqref{eq:liealgebracoh} can be explicitly evaluated in the large $N$ limit with the help of the Loday-Quillen-Tsygan theorem \cite{Costello:2018zrm}. This nice result can be succinctly summarized as follows. Define a differential on $\mathbb{C}^{1|2}$ by $\mathrm{d} = \mathrm{d}z\partial_z + \mathrm{d}\theta^i \partial_{\theta^i}$, acting on the superfield
\ie
\label{C}
C[z,x_i,\theta^1,\theta^2]  := c(z,x_i) + \theta^i q_i(z,x_i) + \theta^1\theta^2 b(z,x_i) \in \Omega_{\mathbb{C}}^{0,0} \otimes \Omega_{\mathbb{R}^2}^{0} \otimes \mathfrak{g}[\theta_1,\theta_2].
\fe
defined earlier in \eqref{eq:Csuperfield}. Single-graviton operators are the $z$-derivatives of expansion coefficients of (with $\epsilon_{21} = 1$) 
\begin{equation}
\begin{aligned}
    \mathrm{Tr} (\mathrm{d} C)^n & =  \mathrm{d}\theta^{(i_1}\mathrm{d}\theta^{i_2}\dots \mathrm{d}\theta^{i_n)} \left(A_n\right)_{i_1\dots i_n}\\
    &+ n(\epsilon_{ij}\theta^{i}\mathrm{d}\theta^{j}  )\mathrm{d}\theta^{(i_1}\mathrm{d}\theta^{i_2}\dots \mathrm{d}\theta^{i_{n-1)}} \left( B_{n} \right)_{i_1 \dots i_{n-1}}\\
    &+ n \mathrm{d}z \mathrm{d}\theta^{(i_1}\mathrm{d}\theta^{i_2}\dots \mathrm{d}\theta^{i_{n-1})} \left( C_{n} \right)_{i_1 \dots i_{n-1}} \\
    &+ n\mathrm{d}z  \epsilon_{ij}\theta^i \mathrm{d}\theta^{(j}   \mathrm{d}\theta^{i_1}\dots \mathrm{d}\theta^{i_{n-2})} \left(D_{n} \right)_{i_1\dots i_{n-2}} \\
    & +n \mathrm{d}z \theta^{(i_1} d\theta^{i_2}  \mathrm{d}\theta^{i_3}\dots \mathrm{d}\theta^{i_{n})} \left( \partial A_n \right)_{i_1\dots i_n} \\
    & + n( \mathrm{d}z \theta^1\theta^2) \mathrm{d}\theta^{i_1}\dots \mathrm{d}\theta^{i_{n-1}} \left( \partial B_{n} \right)_{i_1\dots  i_{n-1}},
\end{aligned}
\end{equation}
which includes
\begin{equation}
    A_n=\Tr q_{\left(i_1\right.} q_{i_2} \cdots q_{\left.i_n\right)} \label{eq:Antower}
\end{equation}
and the superconformal partners
\begin{equation}
	\begin{aligned}
		& B_{n}=\operatorname{Tr} b q_{\left(i_1\right.} q_{i_2} \cdots q_{\left.i_{n-1}\right)}, \\
		& C_{n}=\operatorname{Tr} \partial c q_{\left(i_1\right.} q_{i_2} \cdots q_{\left.i_{n-1}\right)}, \\
		& D_{n}= \epsilon^{i j} \operatorname{Tr} \partial q_i q_{(j} q_{i_1}\cdots q_{\left.i_{n-2}\right)}+\Tr b\partial c q_{(i_1} \dots q_{i_{n-2})}+\Tr b q_{(i_1} \partial c q_{i_2} \dots q_{i_{n-2})}+ \dots \\
  & \phantom{asdfas} + \Tr b q_{(i_1}  q_{i_2} \dots q_{i_{n-2})} \partial c.
	\end{aligned}\label{eq:BCDtower}
\end{equation}
It is easy to see that this recovers \eqref{eqn:Schur_grav_op} upon identifying $C$ with $\Psi$.

From \eqref{eq:Antower}, the A tower operators $\left(A_n\right)_{i_1i_2\dots i_n}$ have scaling dimensions $h = \frac{n}{2}$, and form a spin $j = \frac{n}{2}$ multiplet of $\SL(2,\mathbb{R})_f$, so we have
\begin{equation}
	A_n(u ; z)= \frac{1}{N^{n/2}}\operatorname{Tr} Z(u ; z)^n\label{eq:Anuz},
\end{equation}
where a normalization factor is introduced for later convenience. 
Similarly, the B and C tower operators $B_n$ and $C_n$ have $h = \frac{n}{2} +\frac12$ and $j = \frac{n}{2} - \frac12$, and give rise to
\begin{equation}
    \begin{aligned}
    B_n(u;z) &= \frac{1}{N^{n/2}} \Tr b Z(u;z)^{n-1},\\
    C_n(u;z) &= \frac{1}{N^{n/2}} \Tr \partial c Z(u;z)^{n-1}.
     \end{aligned}\label{eq:BCnuz}
\end{equation}
Finally the D tower operators $D_n$ have $h = \frac{n}{2} + 1$ and $j= \frac{n}{2} - 1$, and can be found in the expansion of 
\ie
    D_n(u;z)
    & = \frac{1}{N^{n/2}} \epsilon^{ij} \sum_{k = 0}^{n-2}\Tr \partial q_i Z(u;z)^k q_j Z(u;z)^{n-2-k} 
    \\
    & \qquad + \frac{1}{N^{n/2}} \sum_{k = 0}^{n-2}\Tr b Z(u;z)^{k} \partial c Z(u;z)^{n-2-k}. \label{eq:Dnuz}
\fe

\paragraph{Correlation functions in the planar limit} We first extend the calculation of \cite{Costello:2018zrm} to superfields. Generalizing \eqref{eq:Z=q+uq} to the whole superfield $C$ of \eqref{C}, we define
\begin{equation}
     \mathrm{d}C  = q_1 + uq_2 + (\theta^2-\theta^1u)b +\mathrm{d}z\partial c + \mathrm{d}z \theta^i \partial q_i + \mathrm{d}z \theta^1\theta^2\partial b,
\end{equation}
where we replaced $d\theta^1$ and $d\theta^2$ by $1$ and $u$, respectively. 
The OPE of the free fields can be organized into the form
\begin{equation}
	\begin{aligned}
	\mathrm{d}C(u;z,\theta) \mathrm{d}C({u}^\prime;{z}^\prime, {\theta}^\prime) &\sim  \frac{u-{u^\prime}}{z-{z^\prime}}
	 - \left(\mathrm{d}z-\mathrm{d}{z^\prime}\right) \frac{(\theta^2-{\theta^\prime}^2)}{(z-{z^\prime})^2}+ \left({u^\prime}\mathrm{d}z-u\mathrm{d}{z^\prime}\right) \frac{(\theta^1-{\theta^\prime}^1)}{(z-{z^\prime})^2}\\
	& +2 \mathrm{d}z\mathrm{d} {z^\prime} \left( \frac{(\theta^1-{\theta^\prime}^1)(\theta^2-{\theta^\prime}^2)}{(z-{z^\prime})^3} \right). \end{aligned}
    \label{eq:dCdCOPE}
\end{equation}
To reduce clutter, we use the shorthand notation $\underline{u}:= (u;z,\theta)$ and write the right-hand side of \eqref{eq:dCdCOPE} as $\mathcal{P}(\underline{u};\underline{u}')$. The correlation functions of $\Tr (dC)^n$ can be simply obtained by Wick contractions, and become particularly simple in the planar limit.

The two-point functions $\langle \Tr dC(\underline{u})^n \Tr dC(\underline{u}')^n\rangle$ are computed by planar pairwise Wick contractions, each of which is a closed loop and contributes a factor of $N$ by the contraction of delta functions. There are $n$ different ways to do so, so we have
\begin{equation}
     \langle \Tr (dC(\underline{u}))^n \Tr (dC(\underline{u}'))^n\rangle =N^n n \mathcal{P}(\underline{u},\underline{u}')^n.
\end{equation}
Notice the scaling power of $N^n$. 
Similarly, three-point functions
\begin{equation}
    \langle \Tr (dC(\underline{u}))^n \Tr (dC(\underline{u}'))^{n'} \Tr (dC(\underline{u}''))^{n''}\rangle
\end{equation}
involve $\frac{n+n'-n''}{2}$ Wick contractions between the first two operators, $\frac{n+n''-n'}{2}$ Wick contractions between the first, and the third operators and $\frac{n'+n''-n}{2}$ Wick contractions between the last two operators. Hence,
\begin{equation}
\begin{aligned}
     \langle \Tr & (dC(\underline{u}))^n \Tr (dC(\underline{u}'))^{n'} \Tr (dC(\underline{u}'') )^{n''}\rangle \\
     = & \frac{nn'n''}{N} N^{\frac{n+n'+n''}{2}}\mathcal{P}(\underline{u},\underline{u}')^{\frac{n+n'-n''}{2}}\mathcal{P}(\underline{u}',\underline{u}'')^{\frac{n'+n''-n}{2}}\mathcal{P}(\underline{u},\underline{u}'')^{\frac{n+n''-n'}{2}}.
     \end{aligned}
\end{equation}
Expanding in $dz$, $\theta^i$, we can read off the two- and three-point functions of the ABCD towers. For example, we find the two-point functions
\begin{equation}
\left\langle A_{n}(u ; z) A_{n}\left(u^{\prime} ; z^{\prime}\right)\right\rangle=n \frac{\left(u-u^{\prime}\right)^n}{\left(z-z^{\prime}\right)^n}, \label{eq:AnAn2pt}
\end{equation}
and
\begin{equation}
    \left\langle D_{n}(u ; z) D_{n}\left(u^{\prime} ; z^{\prime}\right)\right\rangle =  \frac{\left(u-u^{\prime}\right)^{n-2}}{\left(z-z^{\prime}\right)^{n-2}} \frac{n(n-1)}{(z-z')^4},
\end{equation}
as well as the three-point functions
\ie
& \left\langle A_{n}(u; z) A_{n^{\prime}}\left(u^{\prime} ; z^{\prime}\right) A_{n^{\prime \prime}}\left(u^{\prime \prime} ; z^{\prime \prime}\right)\right\rangle
\\
&= \frac{n n^{\prime} n^{\prime \prime}}{N} 
 \frac{\left(u-u^{\prime}\right)^{\frac{n+n^{\prime}-n^{\prime \prime}}{2}}\left(u^{\prime}-u^{\prime \prime}\right)^{\frac{n^{\prime}+n^{\prime \prime}-n}{2}}\left(u^{\prime \prime}- u\right)^{\frac{n^{\prime \prime}+n-n^{\prime}}{2}}}{\left(z-z^{\prime}\right)^{\frac{n+n^{\prime}-n^{\prime \prime}}{2}}\left(z^{\prime}-z^{\prime \prime}\right)^{\frac{n^{\prime}+n^{\prime \prime}-n}{2}}\left(z^{\prime \prime}-z\right)^{\frac{n^{\prime \prime}+n-n^{\prime}}{2}}}, \label{eq:3ptAAA}
\fe
and 
\ie
& \left\langle A_{n}(u; z) A_{n^{\prime}}\left(u^{\prime} ; z^{\prime}\right) D_{n^{\prime \prime}}\left(u^{\prime \prime} ; z^{\prime \prime}\right)\right\rangle 
\\
&=
\frac{n n^{\prime} }{N} 
\frac{ \left(u-u^{\prime}\right)^{\frac{n+n^{\prime}-n^{\prime \prime}}{2}+1} }{\left(z-z^{\prime}\right)^{\frac{n+n^{\prime}-n^{\prime \prime}}{2}-1}}
\frac{\left(u^{\prime}-u^{\prime \prime}\right)^{\frac{n^{\prime}+n^{\prime \prime}-n}{2}-1}}{\left(z^{\prime}-z^{\prime \prime}\right)^{\frac{n^{\prime}+n^{\prime \prime}-n}{2}+1}}
\frac{ \left(u^{\prime \prime}- u\right)^{\frac{n^{\prime \prime}+n-n^{\prime}}{2}-1} }{\left(z^{\prime \prime}-z\right)^{\frac{n^{\prime \prime}+n-n^{\prime}}{2}+1}}. \label{eq:3ptAAD}
\fe
In particular, the A tower correlation functions agree with the computation \cite{Costello:2018zrm}.

\paragraph{OPE in the planar limit} The OPE between operators of definite spins and scaling weight are heavily constrained by $\SL(2,\mathbb{R})_f$ covariance and dimensional analysis, and in general takes the following form
\begin{equation}
\mathcal{O}_1\left(u_1;z_1\right) \mathcal{O}_2\left(u_2;z_2\right)=\sum_{\mathcal{O}} \lambda_{\mathcal{O}_1 \mathcal{O}_2}^{\mathcal{O}} \frac{u_{12}^{j_1+j_2-j} }{z_{12}^{h_1+h_2-h}} \mathcal{D}_{h_1, h_2 ; h}\left(z_{12}, \partial_{z_2}\right) \widehat{\mathcal{D}}_{j_1, j_2 ; j}\left(y_{12}, \partial_{y_2}\right) \mathcal{O}_j\left(z_2\right),
\end{equation}
where $\mathcal{O}_i$ has spin $j_i$ and dimension $h_i$, whereas $\mathcal{D}$ and $\widehat{\mathcal{D}}$ are the so-called weight-lifting operators \cite{Bonetti:2018fqz}
\ie
\mathcal{D}_{h_1, h_2 ; h}\left(z_{12}, \partial_{z_2}\right) &= \sum_{k=0}^{\infty} \frac{\left(h+h_1-h_2\right)_k}{k !(2 h)_k} z_{12}^k \partial_{z_2}^k,
\\
\widehat{\mathcal{D}}_{j_1, j_2 ; j}\left(y_{12}, \partial_{y_2}\right) &= \sum_{k=0}^{2j} \frac{\left(-j-j_1+j_2\right)_k}{k !(-2 j)_k} y_{12}^k \partial_{y_2}^k.
\fe

In the planar limit, the OPE of single trace operators only produces single trace operators, i.e.\ the ABCD towers and their derivatives, as the multi-trace operators are suppressed by powers of $1/N$. Let's focus on the OPE between the A tower operators. Since the B and C tower operators are fermionic, only the A and D tower operators can contribute. Their OPE coefficients can be obtained from three-point functions. Following the reduced notation where we omit $\mathcal{D}$, $\widehat{\mathcal{D}}$, powers of $z_{12}$ and $y_{12}$, and keep only the OPE coefficients, we find
\begin{equation}
    A_n  A_{n'} \sim \sum_{n''}
    \frac{n n^{\prime} }{N} A_{n''} +\sum_{n'''} 
    \frac{nn'}{N}\frac{1}{n'''(n'''-1)} D_{n'''},
\end{equation}
where the sum ranges over
\ie
   n'' &= |n-n'|, |n-n'|+2, \dots, |n+n'|-2,
   \\
   n''' &= |n-n'|+2,|n-n'|+4 \dots, |n+n'|-2.
\fe
In particular, we see that
\begin{equation}
	\begin{aligned}
		A_2 A_n &\sim \frac{2n}{N} \Big(  A_{n-2} +  A_n \Big), \\
		A_3 A_n &\sim \frac{3n}{N}\Big(  A_{n-3} +  A_{n-1} + A_{n+1} + \frac{1}{(n-1)(n-2)} D_{n-1}  \Big), \quad n> 3.
	\end{aligned}
\end{equation}
A remark is in order. Unlike $N=2$, for $N\geq 3$, the A tower operators do not close among themselves and one finds D tower operators in the OPE of A tower operators. On the other hand, the canonical representatives \eqref{eq:Dnuz} of the D tower operators contain $b \partial c$. Therefore, the closure of the OPE suggests one can always find representatives of the D tower operators (in the cohomology) that only contain $q_i$ and derivatives. We will explore this interesting prediction in the future. 

\paragraph{OPE at finite $N$} It is instructive to compare the above with the OPE at finite $N$. According to the conjecture of \cite{Beem:2013sza}, the strong generators are given by a finite number of ABCD towers with $n\leq N$, while operators with $n>N$ are generated in the OPE due to the trace relations. It is however much more difficult to calculate correlation functions analytically at generic $N$ since we need to consider all non-planar diagrams. In addition, the OPE of strong generators receives contributions from multi-trace operators built out of normal-ordered single-trace operators. We will leave this study to the future. Instead, we will perform explicit calculations at small $n$ and small ranks ($N=2$, 3, 4, 5) using the $\texttt{OPEdefs}$ package\cite{Thielemans:1991uw}. 

Let us start by discussing the two-point functions. The $\langle A_1A_1\rangle$, $\langle A_2 A_2 \rangle$ correlators are still given by \eqref{eq:AnAn2pt}, but all other correlators receive $1/N$ corrections. For example, we find
\begin{equation}
\left\langle A_{3}(u ; z) A_{3}\left(u^{\prime} ; z^{\prime}\right)\right\rangle=3 \left(1+ \frac{1}{N^2}\right)  \frac{\left(u-u^{\prime}\right)^n}{\left(z-z^{\prime}\right)^n}.
\end{equation}
The $A_1 A_n$ OPE does not receive multi-trace corrections, and we have
\begin{equation}
    A_1A_n \sim  \frac{n}{N} A_{n-1}.
\end{equation}
Similarly, the $A_2A_2$ and $A_2A_3$ OPEs have the same $N$ dependence as the large $N$ limit
\ie
    A_2 A_2 \sim \frac{4}{N}\left(A_0+A_2 \right),
    \quad
    A_2A_3 \sim \frac{6}{N} \left(A_1+A_3\right).
\fe
However, $A_2A_4$ is corrected by multi-trace operators
\begin{equation}
    A_2 A_4 \sim \frac{8}{N} \left( A_2 + \frac{1}{2N} (A_1A_1)_{j=1} +A_4\right).
\end{equation}
As we increase $n$ and $n'$, the OPE quickly gets more complicated. For example,
\begin{equation}
\begin{aligned}
    A_3A_3 \sim \frac{9}{N} &\left[\left(1+\frac{1}{N^2} \right)A_0 + \left( A_2 +  \frac{1}{N} (A_1A_1)_{j=1}  \right) \right.\\
    & +\left( A_4 + \frac{1}{2N}(A_2A_2)_{j=0}  +\frac{1}{4N}(A_1A_1A_1A_1)_{j=0}  \right)
   +\left. \left(\frac12- \frac{1}{N^2}\right) D_2
    \right].
    \end{aligned}
    \label{eq:A3A3OPE}
\end{equation}
Note that the multi-trace operators arising as regularized products of $A_n$ do not have definite spins under $\SL(2,\mathbb{R})_f$. Therefore we use $({O}_1\dots O_m)_j$ to denote the projection of the multi-trace operator $({O}_1\dots O_m)$ to its irreduble multiplet of spin $j$. Given two operators in the index-free notation,
\begin{equation}
    {O}_1(u)=\sum_{k_1=0}^{2 j_1} {O}_{1, k_1} u^{k_1}, \quad {O}_2(u)=\sum_{k_2=0}^{2 j_2} {O}_{1, k_2} u^{k_2}.
\end{equation}
The projection of their normal ordered product to spin $j$ is given by \cite{Bonetti:2018fqz}\footnote{Notice the typos in \cite{Bonetti:2018fqz}. We thank Federico Bonetti for useful discussions on this.}
\begin{equation}
    \left({O}_1, {O}_2\right)_j\left(u\right)=\sum_{k_1=0}^{2 j_1} \sum_{k_2=0}^{2 j_2} \mathcal{C}_{j_1, j_2, j, k_1, k_2} \left({O}_{1, k_1}{O}_{2, k_2}\right) u^{k_1+k_2+j-j_1-j_2},
\end{equation}
where the coefficients $\mathcal{C}$ are
\ie
&\mathcal{C}_{j_1, j_2, j, k_1, k_2}
\\
&=\sum_{r=0}^{j_1+j_2-j} \frac{(-)^r\left(j_1-j_2+j+r\right)_r^{\downarrow}}{r !(2 j+r+1)_r^{\downarrow}} \frac{1}{\left(j_1+j_2-j-r\right) !} \sum_{s=0}^r\left(\begin{array}{c}
r \\
s
\end{array}\right)\left(k_1\right)_{j_1+j_2-j-s}^{\downarrow}\left(k_2\right)_s^{\downarrow},
\fe
and the descending Pochhammer symbol is $(x)_k^{\downarrow}=\prod_{i=0}^{k-1}(x-i)$.

We close this discussion with several remarks
\begin{itemize}
    \item Our computations are only for $N=2, 3, 4, 5$, but based on the patterns we anticipate the above general formula for all $N$.    
    \item All the OPE coefficients in the above formula are \emph{analytic} in $N$. 
    This is reminiscent of the interesting relation between the (non-supersymmetric) $\mathcal{W}_N$ algebra and the $\mathcal{W}_{\infty}$ algebra, whose continuous parameter is the analytic continuation of $N$.
    \item For the conjecture of \cite{Beem:2013sza} to be consistent, we need the multi-trace corrections to always be suppressed by higher powers of $1/N$. So far, as we see above this is always the case, providing evidence for the conjecture.
    \item One salient feature of the non-supersymmetric $\mathcal{W}_{\infty}$ and $\mathcal{W}_N$ algebras is that $\mathcal{W}_{\infty}$ unifies all $\mathcal{W}_N$ algebras. The former depends on two parameters $c$ and $N$, and exhibits the nontrivial triality property, furnishing a much richer structure than the latter. It is a natural question to ask whether the $\mathcal{N}=4$ $\mathcal{W}$-algebra has similar features. We hope to come back to this in the future.
\end{itemize}

\subsection{Bootstrapping Schur black holes: Modular differential equations}
\label{Sec:Bootstrap}

Modular differential equations (MDEs) arose in the study of 2d CFT through two lines of thought---first as a consequence of null-vectors and Ward identities \cite{Eguchi:1986sb, Anderson:1987ge}, and second as a consequence of meromorphy and rationality \cite{Mathur:1988gt, Mathur:1988na}.  The two perspectives were unified in \cite{Gaberdiel:2008pr} in the rational case.  In the context of 4d $\cN = 2$ theories, it was realized \cite{Beem:2017ooy} through the discovery \cite{Beem:2013sza} of 2d chiral algebras that Schur indices obey MDEs; this observation was later generalized \cite{Pan:2021ulr} to flavored MDEs for flavored Schur indices.
We will use MDEs for (unflavored) Schur indices, which are (unflavored) vacuum characters of chiral algebras, to bootstrap a bound on the lightest $\frac18$-BPS black hole if it exists.

To simplify the discussion, in the following, we focus on odd rank $N$ and only mention the generalization to even $N$ toward the end.

\paragraph{Review of MDEs} 
We present a brief introduction to MDEs following the notation of \cite{Beem:2017ooy} and restricting ourselves to the modular group $\Gamma = \mathrm{PSL}(2,\bZ)$, which applies to $G=\SU(N)$ $\cN=4$ SYM with odd rank $N$.
A modular differential operator $\cD$ of order $d$ takes the form
\ie 
    \cD_q^{(d)} = \sum_{k=0}^d f_k(q) D_q^{(d-k)},
\fe 
where
\ie
    D_q^{(k)} := \partial_{(2k-2)} \circ \partial_{(2k-4)} \circ \dotsb \circ \partial_{(0)}
\fe 
and
\ie 
    \partial_{(k)} := q \frac{d}{dq} + k \mathbb{E}_2(q)
\fe
is the Serre derivative which turns a modular form of weight $k$ into one of weight $k+2$.\footnote{The precise definition of the Eisenstein series $\mathbb{E}_{2k}$ is not important for understanding our arguments.}
Each coefficient function $f_k(q)$ is a weakly holomorphic modular form of weight $2k$.  An MDE is called \emph{monic} if $f^{(0)}$ = 1.  It is well-known that weakly holomorphic modular forms of a fixed weight reside in a finite-dimensional vector space,\footnote{The ring of weakly holomorphic $\mathrm{PSL}(2,\bZ)$ modular forms is generated by $\mathbb{E}_4$ and $\mathbb{E}_6$.} and therefore, the space of modular differential operators of a fixed order $d$ is finite-dimensional.  
Here comes a key point: Since the null vector is a property of the chiral algebra, the character of any module must also satisfy the same MDE. 

\paragraph{MDEs for $\cN=4$ SYM} By considering the general ansatz of monic MDEs and explicitly acting them on the Schur indices of $\cN = 4$ SYM up to rank $N=7$, \cite{Beem:2017ooy} observed that these Schur indices satisfy monic MDEs of order $d = \left( \frac{N+1}{2} \right)^2$, not monic MDEs of any smaller order, and this degree formula was further supported in \cite{Beem:2021zvt}.  Moreover, the MDEs they found did not admit any solution that could serve as a candidate (unflavored) character for a chiral algebra module of positive weight.  It will be assumed that these observations hold for all $N$, and we have explicitly checked odd $N$ up to $N=23$.\footnote{We believe that a proof for all $N$ is possible using the closed-form formulae \cite{Bourdier:2015wda} for Schur indices, but we did not pursue this. The Mathematica file \texttt{mde.m} used for computing the MDEs and solving the indicial equations for odd $N$ is publicly available on \href{https://github.com/yinhslin/bps-counting/}{\texttt{https://github.com/yinhslin/bps-counting/}}.}
They further explained how a monic MDE could be constructed from a null vector of the form $(L_{-2}^d + \dotsc)\ket1$, which we review in Appendix~\ref{app:NSE}, but to our knowledge, it is unknown whether this MDE is the one of minimal order.

\paragraph{Bounding black holes} 
Suppose Conjecture~\ref{Conj} $=$\cite[Conjecture~3]{Beem:2013sza} were false, i.e.\ $\frac18$-BPS non-graviton operators exist, and suppose that the graviton operators do close as a chiral algebra. 
Then $\frac18$-BPS non-graviton operators, which we simply call black hole operators, 
must reside in positive-weight modules of the graviton super $\cW$-algebra.
We also assume the result of \cite{Gaberdiel:2008pr}, that MDEs must come from null vectors, holds beyond the rational (lisse) case to all quasi-lisse \cite{Arakawa:2016hkg} chiral algebras.
Under the above premises, there are the following two logical possibilities:
\begin{enumerate}
    \item The null vector giving rise to the MDE of minimal order $d = \left(\frac{N+1}{2}\right)^2$ is not contained in the graviton super $\cW$-algebra.  It must then contain a composite of a black hole operator with another operator; it must be a composite because otherwise, the null vector implies that the black hole operator is part of the graviton super $\cW$-algebra, violating its definition.  Since the lightest generator of the graviton super $\cW$-algebra has weight 1, we deduce that a black hole operator exists at level $h \le 2d-1$.
    \item The null vector is contained in the graviton super $\cW$-algebra. The unflavored character of the black hole module must vanish due to the lack of positive-weight solutions to the MDE. Such a cancellation suggests an exotic ghost parity in the black hole modules.
\end{enumerate}

Since we do not know any explanation of a ghost parity, we interpret the above logical exercise as suggesting that if black holes exist for odd rank $N$, they should be found at level $h \le 2d-1 = \frac{(N+1)^2}{2} - 1$.  
For even rank, one considers MDEs for modular group $\Gamma^0(2)$, and the observed minimal order is $d = \lfloor \left(\frac{N+1}{2}\right)^2 \rfloor = \frac{N(N+2)}{4}$.  

To summarize, 
the lightest $\frac18$-BPS black holes if existent in $\cN=4$ SYM of rank $N$ should have 
\ie\label{BootstrapBound}
    h \le
    \lfloor \frac{(N+1)^2}{2} \rfloor - 1,
\fe 
which provides a natural target and cutoff in our brute-force search of $\frac18$-BPS black hole operators.
The reader is cautioned that the above bound relied on several reasonable, yet not rigorously proven, assumptions.

\subsection{Brute-force construction and the graviton index method}
\label{sec:brute}

In \cite{Chang:2013fba}, a criterion for distinguishing $\frac{1}{16}$-BPS gravitons from non-gravitons as operators in $\cN=4$ SYM was formulated, and an algorithm to search for non-gravitons was proposed.  Upon improvements of the idea and algorithm, \cite{Chang:2022mjp} constructed the first non-graviton operator.  A shortcut to establishing the existence of non-gravitons was then proposed by \cite{Choi:2023znd}, which is to conduct the same brute-force search as in \cite{Chang:2022mjp} but restricted to gravitons for smaller computational complexity, and then compare the resulting graviton index with the full index to identify discrepancies.  
While a discrepancy implies the definite existence of a non-graviton, an agreement does not logically imply the non-existence of non-gravitons.

As explained in Section~\ref{Sec:Q}, the characterization of $\frac18$-BPS non-gravitons compared to $\frac{1}{16}$-BPS only requires imposing the charge relation $J_1 = J_2 = -q_3$, or equivalently, replacing the BPS superfield by the Schur superfield \eqref{eqn:Schur_superfield}.  The computational framework of \cite{Chang:2022mjp} is therefore immediately applicable to a $\frac18$-BPS brute-force search and the graviton index method.
A comprehensive search was performed for gauge groups $\SU(2)$, $\SU(3)$, $\SU(4)$ up to the levels $2h$ specified in Table~\ref{tab:n}, and no non-graviton was found, in support of Conjecture~\ref{Conj}.\footnote{The fully-refined enumeration data can be accessed on \href{https://github.com/yinhslin/bps-counting/}{\texttt{https://github.com/yinhslin/bps-counting/}}.
}

In light of the bootstrap arguments of Section~\ref{Sec:Bootstrap} and the bound \eqref{BootstrapBound}, which evaluates to
\ie 
    2h \le 6, 14, 22, \quad\text{for}\quad N = 2, 3, 4,
\fe 
it is plausible that no non-gravitons exist for arbitrarily large $h$ for $\SU(2)$ and $\SU(3)$.  For $\SU(3)$, the brute-force operator construction falls short of $2h = 14$, but the graviton index method is in strong support of the absence of non-gravitons. For $\SU(4)$, unfortunately, we do not have the resources to 
compute the graviton index up to $2h = 22$.

\begin{table}[t]
    \centering
    \begin{tabular}{|c|c|c|}
        \hline
        $N$ & All operators & Graviton operators
        \\\hline\hline
        2 & 16 & 24
        \\
        3 & 12 & 20
        \\
        4 & 10 & 16
        \\
        \hline
    \end{tabular}
    \caption{The maximal $2h$ (power of $x$ in SCI) of our comprehensive cohomological construction for $\SU(N)$ gauge groups. }
    \label{tab:n}
\end{table}

\section{Concluding remarks}
\label{Sec:Remarks}

Enticed by the prospects of capturing black hole dynamics in a structure as rigid as a chiral algebra, we embarked on a series of explorations for their existence in the $\cN=4$ SYM.  A priori, there are three scenarios, in decreasing order of optimism:
\begin{enumerate}
    \item Conjecture~\ref{Conj} is false, and the degeneracies of $\frac18$-BPS states exhibit black hole entropy growth, i.e.\ the black holes are large with macroscopic horizons.
    \item Conjecture~\ref{Conj} is false, but the degeneracies of $\frac18$-BPS states do not exhibit black hole entropy growth, i.e.\ the non-gravitons are not black holes, or the black holes are small without macroscopic horizons.
    \item  Conjecture~\ref{Conj} is true, that all $\frac18$-BPS states are gravitons, and no $\frac18$-BPS black hole or non-graviton object exists whatsoever.
\end{enumerate}
The various pieces of evidence we gathered could be summarized as follows:
\begin{itemize}
    \item We evaluated the large $N$ limit of the MacDonald index by the saddle point method. While the canonical index exhibited log-quadratic growth in $N$, the microcanonical index did not. This was against 1.
    \item The microcanonical flavored Schur index was rigorously proven to exhibit log-linear growth in $N$, which was also against Scenario~1.
    \item Considerations of modular differential equations suggested that if non-gravitons were to exist, i.e.\ Scenario~1 or 2 is true, then the lightest non-graviton should be bounded by \eqref{BootstrapBound}.
    \item A computerized construction of the full BPS cohomologies in the Schur sector agreed with the graviton cohomologies for $N=2$ beyond \eqref{BootstrapBound} and for $N=3$ close to \eqref{BootstrapBound}. The graviton index computed from the explicit graviton cohomologies agreed with the full index for $N=2,3,4$ up to the highest levels reached. Both strongly favored Scenario~3.
    \item An explicit evaluation of various microcanonical indices showed that the flavored MacDonald index grows at a similar pace as the flavored Schur index, both log-linear in $N$, while more slowly than a specialized SCI, which exhibited log-quadratic growth indicative of $\frac{1}{16}$-BPS black holes.
\end{itemize}

Overall, we find overwhelming evidence against Scenario~1 and strong evidence for Scenario~3.  The evidence against Scenario~1 suggests that $\frac18$-BPS AdS black hole solutions do not exist in supergravity, which is consistent with our long-standing failure to construct such solutions despite extensive efforts.
Even if such solutions existed, our results suggest that they should be unstable or lifted due to higher derivative corrections and quantum effects.  However, non-BPS black hole solutions in the Schur charge sector (relaxing the constraint on $D$ to an inequality) have recently been constructed \cite{Dias:2022eyq}, and it would be interesting to investigate the properties of near-BPS black holes in this charge sector.  On the field theory side, one could compute the anomalous dimensions \emph{en masse} following \cite{Chang:2023zqk} restricted to the Schur charge sector.  

Our evidence for Scenario~3, i.e.\ Conjecture~\ref{Conj} = \cite[Conjecture~3]{Beem:2013sza} of Beem and Rastelli, is strong but not fully conclusive.  It is logically possible that non-graviton modules exist but for mysterious reasons have vanishing (flavored) characters, or that the modular differential equations are accidental and do not arise from null vectors, thereby invalidating the bootstrap bound \eqref{BootstrapBound}.  It is also possible that while non-gravitons are absent for $N=2,3$, they appear for larger values of $N$.  Furthermore, it would be worth investigating whether the flavored MacDonald index admits alternative closed-form formulae, or if it could be majorized by a series that does, and rigorously prove the absence of black hole entropy growth in its coefficients.

What about black holes in other holographic $\cN = 2$ theories?  The holographic condition of $a_{\rm 4d} = c_{\rm 4d}$ at large $N$ on the 4d anomaly coefficients dictates \cite{Beem:2013sza,Beem:2017ooy} that the effective 2d central charge $c_{\rm 2d, eff} := c_{\rm 2d} - 24 \min_i(h_i) = 48 (c_{\rm 4d} - a_{\rm 4d}) = o(N^2)$.
When $h$ is scaled as $h = \alpha N^2$, the Cardy formula has a leading exponent proportional to $\sqrt{c_{\rm 2d, eff} h} = o(N^2)$, and hence the microcanonical Schur index does not exhibit macroscopic black hole growth.
Nevertheless, there could still be non-gravitons, and this certainly happens if the internal manifold has cycles giving rise to flavor symmetries under which gravitons are neutral.\footnote{
    For instance, such is the case in the $\cN = 1$ theories of \cite{Klebanov:1998hh}. 
}
{Indeed, for a large class of $\mathcal{N}=1$ theories, there is a good notion of graviton cohomologies (products of single-trace cohomologies) at finite $N$, and the single-trace cohomologies can be readily evaluated at infinite $N$
thanks to the Loday-Quillen-Tsygan theorem \cite{Budzik:2023xbr}, so it would be of great interest to investigate the nature of non-graviton operators.}
The formalism of the holomorphic twist \cite{Costello:2018zrm, Budzik:2023xbr} might be a promising step forward toward bringing black holes under strong theoretical control.

\section*{Acknowledgments}

We thank Christopher Beem, Pieter Bomans, Federico Bonetti, Sunjin Choi, Davide Gaiotto, Prahar Mitra, Yiwen Pan, Tom\'{a}\v{s} Proch\'{a}zka, Wenbin Yan, Xi Yin, and Keyou Zeng for the inspiring discussions. 
JW is grateful to the organizers of the Pollica Summer Workshop supported by the Regione Campania, Università degli Studi di Salerno, Università degli Studi di Napoli ``Federico II'', the Physics Department ``Ettore Pancini'' and ``E.R. Caianiello'', and Istituto Nazionale di Fisica Nucleare. CC is partly supported by National Key R\&D Program of China (NO. 2020YFA0713000). YL is supported by the Simons Collaboration Grant on the Non-Perturbative Bootstrap. The work of JW is supported by the UKRI Frontier Research Grant, underwriting the ERC Advanced Grant ``Generalized Symmetries in Quantum Field Theory and Quantum Gravity'' and the Simons Foundation Collaboration on
``Special Holonomy in Geometry, Analysis, and Physics'', Award ID: 724073, Schafer-Nameki.

\appendix

\section{An assortment of $\frac18$-BPS sectors}
\label{Sec:BPS}

In ${\cal N}=4$ SYM, there are three different $\frac18$-BPS sectors: the $\su(2|3)$ sector, the $\su(1,2|2)$ sector, and the $\su(1,1|2)$ sector. They preserve different sets of supercharges. Let us review the BPS conditions satisfied by the operators in these sectors and the corresponding BPS letters. 

We follow the convention in \cite{Chang:2023zqk} for the ${\cal N}=4$ superconformal algebra and the fundamental fields of ${\cal N}=4$ SYM. The $\frac{1}{16}$-BPS operators in ${\cal N}=4$ SYM are defined to be the operators annihilated by a single supercharge $Q^4_-$ and its Hermitian conjugates (BPZ conjugates) $Q_-^{4\dagger} = S_4^-$. The anti-commutator of them gives the BPS condition
\ie\label{eqn:1/16_comm}
&2\{Q^4_-,S_4^-\}=D-2(J_L)^{-}_{ -}-2R^4_4=D-J_1-J_2-q_1-q_2-q_3=0\,.
\fe
The $\frac{1}{16}$-BPS letters are the fundamental fields and covariant derivatives satisfying the BPS condition \eqref{eqn:1/16_comm},
\ie
&\phi^i=\Phi^{4i}\,,\quad \psi_i=-i\Psi_{+i}\,,\quad \lambda_{\dot\A}=\overline\Psi^4_{\dot\A}\,,
\\
&f=-iF_{++}=-i(\sigma^{\mu\nu})_{++}F_{\mu\nu}\,,\quad D_{\dot\A}=D_{+\dot\A}=\frac{1}{4}(\sigma^\mu)_{+\dot\A}D_\mu\,.
\fe

The $\frac18$-BPS operators are defined to be the operators that are annihilated by $Q^4_-$, $S_4^-$, an additional supercharge, and its Hermitian conjugate. Depending on the choice of the additional supercharge, there are different types of $\frac18$-BPS operators. Up to R-symmetry and Lorentz symmetry rotations, there are five inequivalent choices of the additional supercharge,
\ie
\overline Q_{4\dot -}\,,\quad Q^3_+\,,\quad Q^4_+\,,\quad Q^3_-\,,\quad \overline Q_{3\dot -}\,.
\fe
Let us analyze them one by one.

\begin{enumerate}

\item Consider the operators annihilated by
\ie
Q^4_-\,,\quad Q_-^{4\dagger} = S_4^-\,,\quad \overline Q_{4\dot -}\,,\quad \overline Q_{4\dot -}^\dagger = \overline S^{4\dot -}\,.
\fe
Because of the commutator
\ie
\{Q^4_-, \overline Q_{4\dot -}\}=P_{-\dot -}\,,
\fe
the operator must be annihilated by $P_{-\dot -}$. The only possibility of such an operator is the identity operator.

\item
Consider the operators annihilated by
\ie
Q^4_-\,,\quad Q_-^{4\dagger} = S_4^-\,,\quad  Q^3_+\,,\quad  Q^{3\dagger}_+ =  S^+_3\,.
\fe
Besides \eqref{eqn:1/16_comm}, there is only one extra nontrivial anti-commutator, which gives a new BPS condition
\ie
\{Q^3_+,S^+_3\}&=D+J_1+J_2-2R^3_3=D+J_1+J_2+q_1+q_2-q_3=0\,.
\fe
There is only one BPS letter
\ie
\phi^3\,.
\fe
Hence, the operators in this sector are given by products of traces of $\phi^3$'s and are actually half-BPS.

\item Consider the operators annihilated by
\ie
Q^4_-\,,\quad Q_-^{4\dagger} = S_4^-\,,\quad  Q^4_+\,,\quad  Q^{4\dagger}_+ =  S^+_4\,.
\fe
Besides \eqref{eqn:1/16_comm}, the nontrivial commutators are
\ie
\{Q^4_+,S^+_4\}&=D-2(J_L)^+_+-2R^4_4=D-2(J_L)^+_+-q_1-q_2-q_3\,,
\\
\{Q^4_+,S^-_4\}&=-2(J_L)^-_+\,,
\\
\{Q^4_-,S^+_4\}&=-2(J_L)^+_-\,,
\fe
where $(J_L)^\A_\B$ are the generators of the ${\rm SU}(2)_L\subset {\rm SO}(4)$.
The BPS conditions are
\ie
D=q_1+q_2+q_3\,,\quad  J_1+J_2=0\,.
\fe
The BPS letters are
\ie
\lambda_{\dot \alpha}\,,\quad \phi^i\,.
\fe
The space of the BPS words is called the $\mathfrak{su}(2|3)$ sector.

\item Consider the operators annihilated by
\ie
Q^4_-\,,\quad Q_-^{4\dagger} = S_4^-\,,\quad  Q^3_-\,,\quad  Q^{3\dagger}_- =  S^-_3\,.
\fe
Besides \eqref{eqn:1/16_comm}, the nontrivial commutators are
\ie
\{Q^3_-,S^-_3\}&=D-J_1-J_2-2R^3_3=D-J_1-J_2+q_1+q_2-q_3\,,
\\
\{Q^4_-,S^-_3\}&=-2R^4_3\,,
\\
\{Q^3_-,S^-_4\}&=-2R^3_4\,.
\fe
Hence, the BPS conditions are
\ie
D=J_1+J_2+q_3\,,\quad q_1+q_2=0\,.
\fe
The BPS letters are
\ie
\phi^3\,,\quad \psi_1\,,\quad \psi_2\,,\quad  f\,,\quad D_{\dot \alpha}\,.
\fe
The space of the BPS words is called the $\mathfrak{su}(1,2|2)$ sector.

\item Finally, consider the operators annihilated by
\ie
Q^4_-\,,\quad 
Q_-^{4\dagger} = S_4^-\,,\quad\overline Q_{\dot-3}\,,\quad  \overline Q_{\dot-3}^\dagger =\overline S^{\dot-3}\,.
\fe
This sector was discussed in Section~\ref{Sec:Q}, and it is called the $\mathfrak{su}(1,1|2)$ sector.

\end{enumerate}

\section{Null state equations from Zhu's $C_2$-algebra}
\label{app:NSE}

In \cite{Beem:2017ooy}, the authors argued that the chiral algebras of ${\cal N}=2$ superconformal field theories must admit a null state of the form $(L_{-2}^d+\cdots)\ket1$, where $\ket 1$ is the vacuum state and $d\in\bZ
_{>1}$. Let us briefly review the argument. 

Let ${\cal V}$ be the underlying vector space of a chiral algebra. For two currents $a,b\in{\cal V}$, the regularized product $(ab)$ and the secondary bracket $\{a,b\}$ are defined by the contour integrals
\ie\label{eqn:reg_2_product}
(ab)(z) & := \oint_{{\cal C}_1} \frac{dz_1}{2\pi i}\oint_{{\cal C}_2} \frac{dz_2}{2\pi i}\frac{a(z_1)b(z_2)}{(z_1-z)( z_2-z)}\,,
\\
\{a,b\}(z) & := \oint_{{\cal C}} \frac{dz'}{2\pi i}a(z')b(z)\,,
\fe
where ${\cal C}_1$ and ${\cal C}_2$ are counterclockwise  concentric circles around $z$ with the radius of ${\cal C}_1$ larger than that of ${\cal C}_2$, and ${\cal C}$ is a counterclockwise circle contour around $z$. The secondary bracket acts as a derivation on the regularized product
\ie
\{a,(bc)\}=(\{a,b\}c)+(b\{a,c\})\,.\label{eq:app:derivation}
\fe

The regularized product is not commutative and associative. The violations of them are captured by the commutator and the associator,
\ie
\relax [a,b]& :=(ab) - (ba)\,,
\\
[a,b,c]& :=((ab)c) - (a(bc))\,.
\fe
The secondary bracket is not a Lie bracket. The violations of it being a Lie bracket are captured by the symmetrizer and the jacobiator
\ie
\{a,b\}& :=\{a,b\}+\{b,a\}\,,
\\
\{a,b,c\}& :=\{a,\{b,c\}\}+\{b,\{c,a\}\}+\{c,\{a,b\}\}\,.
\fe
It is not hard to see that the commutator, associator, symmetrizer, and jacobiator are always inside a subspace $C_2({\cal V})$ in ${\cal V}$, which is roughly the space of regularized composite operators with at least one derivative. The Zhu's $C_2$ algebra ${\cal R}_{\cal V}$ of a chiral algebra is defined by the quotient
\ie
{\cal R}_{\cal V} := {\cal V} / C_2({\cal V})\,,
\fe
in which the regularized product becomes a commutative and associative product and the secondary bracket becomes a Lie bracket. In other words, the $C_2$ algebra ${\cal R}_{\cal V}$ is a commutative Poisson algebra.

There is another commutative Poisson algebra that shows up in a ${\cal N}=2$ superconformal field theory, namely the Higgs chiral ring ${\cal R}_H$. The main difference between ${\cal R}_H$ and ${\cal R}_{\cal V}$ is that ${\cal R}_H$ is reductive, i.e. it has no nonzero nilpotent elements. Based on these observations, the authors of \cite{Beem:2017ooy} conjectured that the Higgs chiral ring ${\cal R}_H$ is equal to the quotient of ${\cal R}_{\cal V}$ by its nilradical,
\ie
{\cal R}_H=({\cal R}_{\cal V})_{\rm red}\,.
\fe
This conjecture implies that every element inside ${\cal R}_{\cal V}$ but not ${\cal R}_H$ must be nilpotent in ${\cal R}_{\cal V}$, and hence raising it to a high enough power must give an element in $C_2({\cal V})$. In particular, the stress tensor $T$ is one of such elements, so for some positive integer $d$, we have $:T^d: = (T(T\cdots (TT)))$ equals to some regularized composite operators with at least one derivative. This gives the desired null state $(L_{-2}^d+\cdots)\ket1$.

\section{3-Laman graph and Feynman diagrams}\label{app:laman}

A graph $\Gamma$ is a pair $\Gamma=(\Gamma_0, \Gamma_1)$, where $\Gamma_0$ is the set of vertices, and $\Gamma_1$ is the set of edges specified by pairs of vertices. Let $S \subseteq \Gamma_0$ be any subset of the vertices of $\Gamma$. Then the induced subgraph $\Gamma[S]$ is the graph whose vertex set is $S$ and whose edge set consists of all of the edges in $\Gamma_1$ that have both endpoints in $S$. 

A graph $\Gamma$ is called $3$-Laman if
\begin{equation}
    3 |\Gamma_0| = 2|\Gamma_1|+4\,,
\end{equation}
and for every induced subgraph $S$,
\begin{equation}
    3 |\Gamma_0[S]| \geq  2 |\Gamma_1[S]| + 4.
\end{equation}
Figure~\ref{fig:3laman} lists the ten 3-Laman graphs with at most $8$ vertices.\footnote{It is a longstanding open question in graph theory how to efficiently enumerate all $n$-Laman graphs. The 2-Laman graphs can be generated recursively by the Henneberg moves. It is unknown if analogs of the Henneberg moves exist for general $n$-Laman graphs.}
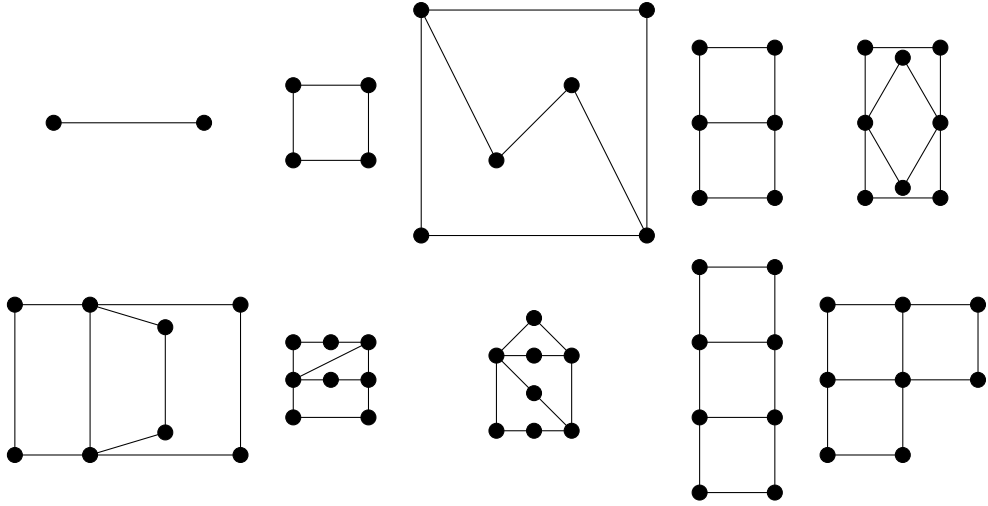
\begin{figure}[htb]
\centering
\setlength\tabcolsep{4pt} 
\renewcommand{\arraystretch}{3} 
\begin{tabular}{ccccc}
   \begin{tikzpicture}
	[
	baseline={(current bounding box.center)},
	line join=round
	]

	\coordinate (pd1) at (1.*\gS,0.*\gS);
	\coordinate (pd2) at (-1.*\gS,0.*\gS);

	\draw (pd1) node[GraphNode] {} node[left] {};
	\draw (pd2) node[GraphNode] {} node[left] {};

	\draw[GraphEdge] (pd1) -- (pd2) node[midway, above] {};

\end{tikzpicture}
&
\begin{tikzpicture}
	[
	baseline={(current bounding box.center)},
	line join=round
	]

	\coordinate (pd1) at (1.*\gS,2.*\gS);
	\coordinate (pd2) at (1.*\gS,1.*\gS);
	\coordinate (pd3) at (2.*\gS,1.*\gS);
	\coordinate (pd4) at (2.*\gS,2.*\gS);

	\draw (pd1) node[GraphNode] {} node[left] {};
	\draw (pd2) node[GraphNode] {} node[left] {};
	\draw (pd3) node[GraphNode] {} node[left] {};
	\draw (pd4) node[GraphNode] {} node[left] {};

	\draw[GraphEdge] (pd1) -- (pd2) node[midway, above] {};
	\draw[GraphEdge] (pd1) -- (pd4) node[midway, above] {};
	\draw[GraphEdge] (pd2) -- (pd3) node[midway, above] {};
	\draw[GraphEdge] (pd3) -- (pd4) node[midway, above] {};

\end{tikzpicture}
&
\begin{tikzpicture}
	[
	baseline={(current bounding box.center)},
	line join=round
	]

	\coordinate (pd1) at (3.*\gS,3.*\gS);
	\coordinate (pd2) at (1.*\gS,1.*\gS);
	\coordinate (pd3) at (4.*\gS,4.*\gS);
	\coordinate (pd4) at (2.*\gS,2.*\gS);
	\coordinate (pd5) at (4.*\gS,1.*\gS);
	\coordinate (pd6) at (1.*\gS,4.*\gS);

	\draw (pd1) node[GraphNode] {} node[left] {};
	\draw (pd2) node[GraphNode] {} node[left] {};
	\draw (pd3) node[GraphNode] {} node[left] {};
	\draw (pd4) node[GraphNode] {} node[left] {};
	\draw (pd5) node[GraphNode] {} node[left] {};
	\draw (pd6) node[GraphNode] {} node[left] {};

	\draw[GraphEdge] (pd1) -- (pd4) node[midway, above] {};
	\draw[GraphEdge] (pd1) -- (pd5) node[midway, above] {};
	\draw[GraphEdge] (pd2) -- (pd5) node[midway, above] {};
	\draw[GraphEdge] (pd2) -- (pd6) node[midway, above] {};
	\draw[GraphEdge] (pd3) -- (pd5) node[midway, above] {};
	\draw[GraphEdge] (pd3) -- (pd6) node[midway, above] {};
	\draw[GraphEdge] (pd4) -- (pd6) node[midway, above] {};

\end{tikzpicture}
&
\begin{tikzpicture}
	[
	baseline={(current bounding box.center)},
	line join=round
	]

	\coordinate (pd1) at (1.*\gS,1.*\gS);
	\coordinate (pd2) at (2.*\gS,1.*\gS);
	\coordinate (pd3) at (1.*\gS,2.*\gS);
	\coordinate (pd4) at (2.*\gS,2.*\gS);
	\coordinate (pd5) at (1.*\gS,3.*\gS);
	\coordinate (pd6) at (2.*\gS,3.*\gS);

	\draw (pd1) node[GraphNode] {} node[left] {};
	\draw (pd2) node[GraphNode] {} node[left] {};
	\draw (pd3) node[GraphNode] {} node[left] {};
	\draw (pd4) node[GraphNode] {} node[left] {};
	\draw (pd5) node[GraphNode] {} node[left] {};
	\draw (pd6) node[GraphNode] {} node[left] {};

	\draw[GraphEdge] (pd1) -- (pd2) node[midway, above] {};
	\draw[GraphEdge] (pd1) -- (pd3) node[midway, above] {};
	\draw[GraphEdge] (pd2) -- (pd4) node[midway, above] {};
	\draw[GraphEdge] (pd3) -- (pd4) node[midway, above] {};
	\draw[GraphEdge] (pd3) -- (pd5) node[midway, above] {};
	\draw[GraphEdge] (pd4) -- (pd6) node[midway, above] {};
	\draw[GraphEdge] (pd5) -- (pd6) node[midway, above] {};

\end{tikzpicture}
&
\begin{tikzpicture}
	[
	baseline={(current bounding box.center)},
	line join=round
	]

	\coordinate (pd1) at (0.5*\gS,1.*\gS);
	\coordinate (pd2) at (0.5*\gS,-1.*\gS);
	\coordinate (pd3) at (0.*\gS,0.866*\gS);
	\coordinate (pd4) at (0.*\gS,-0.866*\gS);
	\coordinate (pd5) at (-0.5*\gS,1.*\gS);
	\coordinate (pd6) at (-0.5*\gS,-1.*\gS);
	\coordinate (pd7) at (0.5*\gS,0.*\gS);
	\coordinate (pd8) at (-0.5*\gS,0.*\gS);

	\draw (pd1) node[GraphNode] {} node[left] {};
	\draw (pd2) node[GraphNode] {} node[left] {};
	\draw (pd3) node[GraphNode] {} node[left] {};
	\draw (pd4) node[GraphNode] {} node[left] {};
	\draw (pd5) node[GraphNode] {} node[left] {};
	\draw (pd6) node[GraphNode] {} node[left] {};
	\draw (pd7) node[GraphNode] {} node[left] {};
	\draw (pd8) node[GraphNode] {} node[left] {};

	\draw[GraphEdge] (pd1) -- (pd5) node[midway, above] {};
	\draw[GraphEdge] (pd1) -- (pd7) node[midway, above] {};
	\draw[GraphEdge] (pd2) -- (pd6) node[midway, above] {};
	\draw[GraphEdge] (pd2) -- (pd7) node[midway, above] {};
	\draw[GraphEdge] (pd3) -- (pd7) node[midway, above] {};
	\draw[GraphEdge] (pd3) -- (pd8) node[midway, above] {};
	\draw[GraphEdge] (pd4) -- (pd7) node[midway, above] {};
	\draw[GraphEdge] (pd4) -- (pd8) node[midway, above] {};
	\draw[GraphEdge] (pd5) -- (pd8) node[midway, above] {};
	\draw[GraphEdge] (pd6) -- (pd8) node[midway, above] {};

\end{tikzpicture}
\\
\begin{tikzpicture}
	[
	baseline={(current bounding box.center)},
	line join=round
	]

	\coordinate (pd1) at (-1.*\gS,1.*\gS);
	\coordinate (pd2) at (1.*\gS,0.7*\gS);
	\coordinate (pd3) at (2.*\gS,1.*\gS);
	\coordinate (pd4) at (0.*\gS,1.*\gS);
	\coordinate (pd5) at (-1.*\gS,-1.*\gS);
	\coordinate (pd6) at (1.*\gS,-0.7*\gS);
	\coordinate (pd7) at (2.*\gS,-1.*\gS);
	\coordinate (pd8) at (0.*\gS,-1.*\gS);

	\draw (pd1) node[GraphNode] {} node[left] {};
	\draw (pd2) node[GraphNode] {} node[left] {};
	\draw (pd3) node[GraphNode] {} node[left] {};
	\draw (pd4) node[GraphNode] {} node[left] {};
	\draw (pd5) node[GraphNode] {} node[left] {};
	\draw (pd6) node[GraphNode] {} node[left] {};
	\draw (pd7) node[GraphNode] {} node[left] {};
	\draw (pd8) node[GraphNode] {} node[left] {};

	\draw[GraphEdge] (pd1) -- (pd4) node[midway, above] {};
	\draw[GraphEdge] (pd1) -- (pd5) node[midway, above] {};
	\draw[GraphEdge] (pd2) -- (pd4) node[midway, above] {};
	\draw[GraphEdge] (pd2) -- (pd6) node[midway, above] {};
	\draw[GraphEdge] (pd3) -- (pd4) node[midway, above] {};
	\draw[GraphEdge] (pd3) -- (pd7) node[midway, above] {};
	\draw[GraphEdge] (pd4) -- (pd8) node[midway, above] {};
	\draw[GraphEdge] (pd5) -- (pd8) node[midway, above] {};
	\draw[GraphEdge] (pd6) -- (pd8) node[midway, above] {};
	\draw[GraphEdge] (pd7) -- (pd8) node[midway, above] {};

\end{tikzpicture}
&
\begin{tikzpicture}
	[
	baseline={(current bounding box.center)},
	line join=round
	]

	\coordinate (pd1) at (0.5*\gS,1.*\gS);
	\coordinate (pd2) at (1.*\gS,0.*\gS);
	\coordinate (pd3) at (0.*\gS,1.*\gS);
	\coordinate (pd4) at (0.*\gS,0.*\gS);
	\coordinate (pd5) at (0.5*\gS,0.5*\gS);
	\coordinate (pd6) at (1.*\gS,0.5*\gS);
	\coordinate (pd7) at (1.*\gS,1.*\gS);
	\coordinate (pd8) at (0.*\gS,0.5*\gS);

	\draw (pd1) node[GraphNode] {} node[left] {};
	\draw (pd2) node[GraphNode] {} node[left] {};
	\draw (pd3) node[GraphNode] {} node[left] {};
	\draw (pd4) node[GraphNode] {} node[left] {};
	\draw (pd5) node[GraphNode] {} node[left] {};
	\draw (pd6) node[GraphNode] {} node[left] {};
	\draw (pd7) node[GraphNode] {} node[left] {};
	\draw (pd8) node[GraphNode] {} node[left] {};

	\draw[GraphEdge] (pd1) -- (pd3) node[midway, above] {};
	\draw[GraphEdge] (pd1) -- (pd7) node[midway, above] {};
	\draw[GraphEdge] (pd2) -- (pd4) node[midway, above] {};
	\draw[GraphEdge] (pd2) -- (pd6) node[midway, above] {};
	\draw[GraphEdge] (pd3) -- (pd8) node[midway, above] {};
	\draw[GraphEdge] (pd4) -- (pd8) node[midway, above] {};
	\draw[GraphEdge] (pd5) -- (pd6) node[midway, above] {};
	\draw[GraphEdge] (pd5) -- (pd8) node[midway, above] {};
	\draw[GraphEdge] (pd6) -- (pd7) node[midway, above] {};
	\draw[GraphEdge] (pd7) -- (pd8) node[midway, above] {};

\end{tikzpicture}
&
\begin{tikzpicture}
	[
	baseline={(current bounding box.center)},
	line join=round
	]

	\coordinate (pd1) at (0.5*\gS,0.*\gS);
	\coordinate (pd2) at (0.5*\gS,0.5*\gS);
	\coordinate (pd3) at (0.5*\gS,1.*\gS);
	\coordinate (pd4) at (0.5*\gS,1.5*\gS);
	\coordinate (pd5) at (0.*\gS,0.*\gS);
	\coordinate (pd6) at (1.*\gS,1.*\gS);
	\coordinate (pd7) at (1.*\gS,0.*\gS);
	\coordinate (pd8) at (0.*\gS,1.*\gS);

	\draw (pd1) node[GraphNode] {} node[left] {};
	\draw (pd2) node[GraphNode] {} node[left] {};
	\draw (pd3) node[GraphNode] {} node[left] {};
	\draw (pd4) node[GraphNode] {} node[left] {};
	\draw (pd5) node[GraphNode] {} node[left] {};
	\draw (pd6) node[GraphNode] {} node[left] {};
	\draw (pd7) node[GraphNode] {} node[left] {};
	\draw (pd8) node[GraphNode] {} node[left] {};

	\draw[GraphEdge] (pd1) -- (pd5) node[midway, above] {};
	\draw[GraphEdge] (pd1) -- (pd7) node[midway, above] {};
	\draw[GraphEdge] (pd2) -- (pd7) node[midway, above] {};
	\draw[GraphEdge] (pd2) -- (pd8) node[midway, above] {};
	\draw[GraphEdge] (pd3) -- (pd6) node[midway, above] {};
	\draw[GraphEdge] (pd3) -- (pd8) node[midway, above] {};
	\draw[GraphEdge] (pd4) -- (pd6) node[midway, above] {};
	\draw[GraphEdge] (pd4) -- (pd8) node[midway, above] {};
	\draw[GraphEdge] (pd5) -- (pd8) node[midway, above] {};
	\draw[GraphEdge] (pd6) -- (pd7) node[midway, above] {};

\end{tikzpicture}
&
\begin{tikzpicture}
	[
	baseline={(current bounding box.center)},
	line join=round
	]

	\coordinate (pd1) at (1.*\gS,1.*\gS);
	\coordinate (pd2) at (2.*\gS,1.*\gS);
	\coordinate (pd3) at (1.*\gS,2.*\gS);
	\coordinate (pd4) at (2.*\gS,2.*\gS);
	\coordinate (pd5) at (1.*\gS,3.*\gS);
	\coordinate (pd6) at (2.*\gS,3.*\gS);
	\coordinate (pd7) at (1.*\gS,4.*\gS);
	\coordinate (pd8) at (2.*\gS,4.*\gS);

	\draw (pd1) node[GraphNode] {} node[left] {};
	\draw (pd2) node[GraphNode] {} node[left] {};
	\draw (pd3) node[GraphNode] {} node[left] {};
	\draw (pd4) node[GraphNode] {} node[left] {};
	\draw (pd5) node[GraphNode] {} node[left] {};
	\draw (pd6) node[GraphNode] {} node[left] {};
	\draw (pd7) node[GraphNode] {} node[left] {};
	\draw (pd8) node[GraphNode] {} node[left] {};

	\draw[GraphEdge] (pd1) -- (pd2) node[midway, above] {};
	\draw[GraphEdge] (pd1) -- (pd3) node[midway, above] {};
	\draw[GraphEdge] (pd2) -- (pd4) node[midway, above] {};
	\draw[GraphEdge] (pd3) -- (pd4) node[midway, above] {};
	\draw[GraphEdge] (pd3) -- (pd5) node[midway, above] {};
	\draw[GraphEdge] (pd4) -- (pd6) node[midway, above] {};
	\draw[GraphEdge] (pd5) -- (pd6) node[midway, above] {};
	\draw[GraphEdge] (pd5) -- (pd7) node[midway, above] {};
	\draw[GraphEdge] (pd6) -- (pd8) node[midway, above] {};
	\draw[GraphEdge] (pd7) -- (pd8) node[midway, above] {};

\end{tikzpicture}
&
\begin{tikzpicture}
	[
	baseline={(current bounding box.center)},
	line join=round
	]

	\coordinate (pd1) at (0.*\gS,0.*\gS);
	\coordinate (pd2) at (0.*\gS,-1.*\gS);
	\coordinate (pd3) at (0.*\gS,-2.*\gS);
	\coordinate (pd4) at (1.*\gS,0.*\gS);
	\coordinate (pd5) at (1.*\gS,-1.*\gS);
	\coordinate (pd6) at (1.*\gS,-2.*\gS);
	\coordinate (pd7) at (2.*\gS,0.*\gS);
	\coordinate (pd8) at (2.*\gS,-1.*\gS);

	\draw (pd1) node[GraphNode] {} node[left] {};
	\draw (pd2) node[GraphNode] {} node[left] {};
	\draw (pd3) node[GraphNode] {} node[left] {};
	\draw (pd4) node[GraphNode] {} node[left] {};
	\draw (pd5) node[GraphNode] {} node[left] {};
	\draw (pd6) node[GraphNode] {} node[left] {};
	\draw (pd7) node[GraphNode] {} node[left] {};
	\draw (pd8) node[GraphNode] {} node[left] {};

	\draw[GraphEdge] (pd1) -- (pd2) node[midway, above] {};
	\draw[GraphEdge] (pd1) -- (pd4) node[midway, above] {};
	\draw[GraphEdge] (pd2) -- (pd3) node[midway, above] {};
	\draw[GraphEdge] (pd2) -- (pd5) node[midway, above] {};
	\draw[GraphEdge] (pd3) -- (pd6) node[midway, above] {};
	\draw[GraphEdge] (pd4) -- (pd5) node[midway, above] {};
	\draw[GraphEdge] (pd4) -- (pd7) node[midway, above] {};
	\draw[GraphEdge] (pd5) -- (pd6) node[midway, above] {};
	\draw[GraphEdge] (pd5) -- (pd8) node[midway, above] {};
	\draw[GraphEdge] (pd7) -- (pd8) node[midway, above] {};

\end{tikzpicture}
\end{tabular}
    \caption{List of all 3-Laman graphs with at most $8$ vertices.}
    \label{fig:3laman}
\end{figure}

It was shown in \cite{Budzik:2022mpd} that Feynman integrals in the HT-twisted theory take the following form in the Schwinger formalism:
\begin{equation}
{\cal I}_\Gamma[\lambda;\delta] := \int_{\mathbb{R}^{4 |\Gamma_0|-4} \times \partial_\epsilon \left[[\epsilon,L]^{|\Gamma_1|}\right]} \left[ \prod_{v \in \Gamma_0|v \neq v_0} e^{\lambda_v  x_v} \frac{d z_v}{(2\pi i)} \right] \bar{\partial}\left[\prod_{e \in \Gamma_1} {\cal P}(x_{e,i}, z_{e} +\delta_e, \bar z_{e} ,t_e)\right] \,,
\end{equation}
where to each vertex $v \neq v_0$ we associate  a position $(x_{v,1},x_{v,2}, z_v, \bar{z}_v) \in \mathbb{R}^4$ and a holomorphic $\lambda$-parameter $\lambda_v \in \mathbb{C}$.
To each edge $e$ we associate a propagator, 
\begin{equation}
    {\cal P}(x_{e,i},z_e, \bar{z}_{e},t_e) = \frac{1}{\sqrt{\pi}} e^{- s_{e,1}^2 -s_{e,2}^2- z_e y_e}  dy_e ds_{e,1}ds_{e,2}\,, \qquad s_{e,i} = \frac{x_{e,i}}{\sqrt{t_e}}\,, \qquad y_e = \frac{\bar z_e}{t_e}\, . \label{eq:superpropagator}
\end{equation}
with  Schwinger time $t_e$ and an extra holomorphic shift $\delta_e \in \mathbb{C}$. The evaluation of the integral, although not impossible, is generally difficult. However, a simple form-degree counting shows that $\mathcal{I}_{\Gamma}$ is trivial unless $\Gamma$ is a $3$-Laman graph. For more details see \cite{Budzik:2022mpd}.

The integral over the spacetime is Gaussian and can be easily performed, leaving us with an integral 
over the Schwinger parameters $t_e$,
\begin{equation}
   {\cal I}_\Gamma[\lambda;\delta] := \int_{ \partial_\epsilon \left[[\epsilon,L]^{|\Gamma_1|}\right]} \omega_{\Gamma}.
\end{equation}
Furthermore, since \eqref{eq:superpropagator} is fully factorized into the holomorphic and topological coordinates, $\omega_{\Gamma}$ must also take the factorized form
\ie
\omega_{\Gamma}=\rho_{\Gamma}\left(\lambda_*^i\right) \wedge\left(\eta_{\Gamma}\right)^2,
\fe
where $\rho_{\Gamma}$ and $\eta_{\Gamma}$ denote the contributions from the holomorphic and topological directions, respectively. Computing the explicit expressions for $\omega_{\Gamma}$ is straightforward but tedious. With the help of Mathematica, one can easily compute $\omega_{\Gamma}$ for the $3$-Laman graphs with at most 8 vertices, listed in Figure~\ref{fig:3laman}, and interestingly, they all vanish. We conjecture that this is true for all 3-Laman graphs. It would be interesting to pursue a general combinatorial proof of this vanishing, and we plan to come back to this in the future.

\section{Tables of microcanonical indices}
\label{App:Tables}

\begin{table}[H]
    \centering
    \begin{tabular}{|c||c|c|cc|c|c|c|c|c|c|}
        \hline
        $2h$ & $\U(1)$ & $\U(2)$ & $\U(3)_\S$ & $\U(3)_\FS$ & $\U(4)$ & $\U(5)$ & $\U(6)$ & $\U(7)$ & $\U(8)$ & $\U(9)$ 
        \\\hline\hline
        0 & 1 & 1 & 1 & 1 & 1 & 1 & 1 & 1 & 1 & 1 \\
        1 & 2 & 2 & 2 & 2 & 2 & 2 & 2 & 2 & 2 & 2 \\
        2 & 1 & 4 & 4 & 4 & 4 & 4 & 4 & 4 & 4 & 4 \\
        3 & 2 & 4 & 8 & 8 & 8 & 8 & 8 & 8 & 8 & 8 \\
        4 & 2 & 6 & 9 & 9 & 14 & 14 & 14 & 14 & 14 & 14 \\
        5 & 0 & 8 & 14 & 14 & 18 & 24 & 24 & 24 & 24 & 24 \\
        6 & 3 & 8 & 20 & 20 & 28 & 33 & 40 & 40 & 40 & 40 \\
        7 & 2 & 8 & 24 & 24 & 40 & 50 & 56 & 64 & 64 & 64 \\
        8 & 0 & 13 & 30 & 30 & 52 & 72 & 84 & 91 & 100 & 100 \\
        9 & 2 & 12 & 34 & 34 & 70 & 98 & 122 & 136 & 144 & 154 \\
        10 & 2 & 12 & 46 & 46 & 88 & 134 & 168 & 196 & 212 & 221 \\
        11 & 2 & 16 & 52 & 52 & 104 & 176 & 232 & 272 & 304 & 322 \\
        12 & 1 & 14 & 60 & 60 & 140 & 224 & 312 & 378 & 424 & 460 \\
        13 & 2 & 16 & 70 & 70 & 168 & 280 & 408 & 512 & 588 & 640 \\
        14 & 0 & 24 & 76 & 76 & 196 & 367 & 528 & 680 & 800 & 886 \\
        15 & 2 & 16 & 88 & 88 & 240 & 448 & 672 & 896 & 1072 & 1208 \\
        16 & 4 & 18 & 102 & 102 & 278 & 546 & 865 & 1162 & 1422 & 1622 \\
        17 & 0 & 26 & 120 & 120 & 320 & 672 & 1078 & 1478 & 1864 & 2160 \\
        18 & 2 & 20 & 118 & 126 & 380 & 798 & 1336 & 1904 & 2408 & 2848 \\
        19 & 0 & 24 & 136 & 136 & 440 & 952 & 1648 & 2392 & 3080 & 3706 \\
        20 & 1 & 32 & 154 & 154 & 504 & 1136 & 2002 & 2976 & 3950 & 4784 \\
        21 & 4 & 24 & 160 & 160 & 562 & 1328 & 2424 & 3704 & 4972 & 6120 \\
        22 & 2 & 24 & 184 & 184 & 644 & 1554 & 2912 & 4544 & 6224 & 7809 \\
        23 & 0 & 32 & 200 & 200 & 720 & 1838 & 3472 & 5536 & 7760 & 9850 \\
        24 & 2 & 31 & 208 & 224 & 808 & 2079 & 4116 & 6730 & 9564 & 12348 \\
        25 & 2 & 28 & 234 & 234 & 910 & 2400 & 4872 & 8106 & 11742 & 15394 \\
        26 & 0 & 40 & 252 & 252 & 1000 & 2772 & 5744 & 9692 & 14344 & 19052 \\
        27 & 2 & 32 & 244 & 244 & 1120 & 3120 & 6648 & 11584 & 17384 & 23464 \\
        28 & 2 & 30 & 295 & 295 & 1240 & 3554 & 7752 & 13752 & 20968 & 28740 \\
        29 & 2 & 48 & 308 & 308 & 1360 & 4048 & 8976 & 16248 & 25204 & 35008 \\
        30 & 1 & 32 & 308 & 340 & 1488 & 4522 & 10304 & 19052 & 30112 & 42428 \\\hline
    \end{tabular}
    \caption{The absolute degeneracies of Schur indices for $\U(N)$ gauge groups with $N=1, \dotsc, 9$.  Up to $2h=30$, the unflavored and flavored indices have identical absolute degeneracies except for $\U(3)$.}
    \label{Tab:U}
\end{table}

\newpage 

\begin{table}[H]
    \centering
    \begin{tabular}{|c||cc|cc|cc|cc|}
        \hline
        $2h$ & $\SU(2)_\S$ & $\SU(2)_\FS$ & $\SU(4)_\S$ & $\SU(4)_\FS$ & $\SU(6)_\S$ & $\SU(6)_\FS$ & $\SU(8)_\S$ & $\SU(8)_\FS$ 
        \\\hline\hline
        0 & 1 & 1 & 1 & 1 & 1 & 1 & 1 & 1 \\
        1 & 0 & 0 & 0 & 0 & 0 & 0 & 0 & 0 \\
        2 & 3 & 3 & 3 & 3 & 3 & 3 & 3 & 3 \\
        3 & 4 & 4 & 0 & 4 & 0 & 4 & 0 & 4 \\
        4 & 9 & 9 & 9 & 9 & 9 & 9 & 9 & 9 \\
        5 & 12 & 12 & 6 & 10 & 0 & 8 & 0 & 8 \\
        6 & 22 & 22 & 22 & 22 & 22 & 22 & 22 & 22 \\
        7 & 36 & 36 & 18 & 26 & 8 & 20 & 0 & 16 \\
        8 & 60 & 60 & 51 & 51 & 51 & 51 & 51 & 51 \\
        9 & 88 & 88 & 54 & 66 & 24 & 48 & 10 & 38 \\
        10 & 135 & 135 & 108 & 108 & 108 & 108 & 108 & 108 \\
        11 & 204 & 204 & 132 & 148 & 72 & 108 & 30 & 78 \\
        12 & 302 & 302 & 241 & 241 & 221 & 221 & 221 & 221 \\
        13 & 432 & 432 & 306 & 326 & 176 & 228 & 90 & 166 \\
        14 & 627 & 627 & 489 & 489 & 429 & 429 & 429 & 429 \\
        15 & 900 & 900 & 648 & 676 & 408 & 488 & 220 & 340 \\
        16 & 1269 & 1269 & 990 & 990 & 845 & 845 & 810 & 810 \\
        17 & 1764 & 1764 & 1326 & 1358 & 864 & 972 & 510 & 686 \\
        18 & 2451 & 2451 & 1919 & 1919 & 1584 & 1584 & 1479 & 1479 \\
        19 & 3384 & 3384 & 2574 & 2614 & 1768 & 1912 & 1080 & 1336 \\
        20 & 4629 & 4629 & 3660 & 3660 & 2955 & 2955 & 2694 & 2694 \\
        21 & 6268 & 6268 & 4910 & 4958 & 3432 & 3628 & 2210 & 2574 \\
        22 & 8460 & 8460 & 6759 & 6759 & 5369 & 5369 & 4761 & 4761 \\
        23 & 11376 & 11376 & 9024 & 9080 & 6480 & 6732 & 4290 & 4806 \\
        24 & 15183 & 15183 & 12288 & 12288 & 9653 & 9653 & 8354 & 8354 \\
        25 & 20124 & 20124 & 16290 & 16354 & 11832 & 12152 & 8100 & 8812 \\
        26 & 26595 & 26595 & 21789 & 21789 & 16989 & 16989 & 14397 & 14397 \\
        27 & 35008 & 35008 & 28694 & 28770 & 21232 & 21640 & 14790 & 15750 \\
        28 & 45828 & 45828 & 38043 & 38043 & 29578 & 29578 & 24597 & 24597 \\
        29 & 59700 & 59700 & 49758 & 49842 & 37128 & 37632 & 26400 & 27668 \\
        30 & 77522 & 77522 & 65161 & 65161 & 50596 & 50596 & 41413 & 41413 \\\hline
    \end{tabular}
    \caption{The absolute degeneracies of unflavored and flavored Schur indices for $\SU(N)$ gauge groups with $N=2, 4, 6, 8$.}
    \label{Tab:SUEven}
\end{table}

\newpage 

\begin{table}[H]
    \centering
    \begin{tabular}{|c||cc|cc|cc|cc|}
        \hline
        $2h$ & $\SU(3)_\S$ & $\SU(3)_\FS$ & $\SU(5)_\S$ & $\SU(5)_\FS$ & $\SU(7)_\S$ & $\SU(7)_\FS$ & $\SU(9)_\S$ & $\SU(9)_\FS$ 
        \\\hline\hline
        0 & 1 & 1 & 1 & 1 & 1 & 1 & 1 & 1 \\
        1 & 0 & 0 & 0 & 0 & 0 & 0 & 0 & 0 \\
        2 & 3 & 3 & 3 & 3 & 3 & 3 & 3 & 3 \\
        3 & 0 & 4 & 0 & 4 & 0 & 4 & 0 & 4 \\
        4 & 4 & 4 & 9 & 9 & 9 & 9 & 9 & 9 \\
        5 & 0 & 8 & 0 & 8 & 0 & 8 & 0 & 8 \\
        6 & 7 & 11 & 15 & 15 & 22 & 22 & 22 & 22 \\
        7 & 0 & 16 & 0 & 16 & 0 & 16 & 0 & 16 \\
        8 & 6 & 22 & 30 & 30 & 42 & 42 & 51 & 51 \\
        9 & 0 & 32 & 0 & 28 & 0 & 32 & 0 & 32 \\
        10 & 12 & 40 & 45 & 45 & 81 & 81 & 97 & 97 \\
        11 & 0 & 64 & 0 & 52 & 0 & 60 & 0 & 64 \\
        12 & 8 & 88 & 67 & 79 & 140 & 140 & 188 & 188 \\
        13 & 0 & 120 & 0 & 88 & 0 & 112 & 0 & 120 \\
        14 & 15 & 163 & 99 & 143 & 231 & 231 & 330 & 330 \\
        15 & 0 & 228 & 0 & 160 & 0 & 192 & 0 & 224 \\
        16 & 13 & 309 & 135 & 243 & 351 & 351 & 568 & 568 \\
        17 & 0 & 420 & 0 & 272 & 0 & 328 & 0 & 388 \\
        18 & 18 & 562 & 175 & 407 & 551 & 563 & 918 & 918 \\
        19 & 0 & 748 & 0 & 480 & 0 & 528 & 0 & 664 \\
        20 & 12 & 992 & 231 & 679 & 783 & 815 & 1452 & 1452 \\
        21 & 0 & 1312 & 0 & 800 & 0 & 840 & 0 & 1104 \\
        22 & 28 & 1712 & 306 & 1114 & 1134 & 1286 & 2233 & 2233 \\
        23 & 0 & 2256 & 0 & 1356 & 0 & 1304 & 0 & 1728 \\
        24 & 14 & 2934 & 354 & 1818 & 1546 & 1934 & 3344 & 3344 \\
        25 & 0 & 3808 & 0 & 2224 & 0 & 2012 & 0 & 2720 \\
        26 & 24 & 4900 & 465 & 2961 & 2142 & 2958 & 4884 & 4884 \\
        27 & 0 & 6308 & 0 & 3652 & 0 & 3080 & 0 & 4144 \\
        28 & 24 & 8068 & 540 & 4764 & 2835 & 4431 & 7004 & 7212 \\
        29 & 0 & 10312 & 0 & 5872 & 0 & 4804 & 0 & 6220 \\
        30 & 31 & 13107 & 681 & 7597 & 3758 & 6526 & 9856 & 10420 \\\hline
    \end{tabular}
    \caption{The absolute degeneracies of unflavored and flavored Schur indices for $\SU(N)$ gauge groups with $N=3, 5, 7, 9$.}
    \label{Tab:SUOdd}
\end{table}

\newpage 

\begin{table}[H]
    \centering
    \begin{tabular}{|c||c|c|c|c|c|c|c|c|c|}
        \hline
        $2h$ & $\U(1)$ & $\U(2)$ & $\U(3)$ & $\U(4)$ & $\U(5)$ & $\U(6)$ & $\U(7)$ & $\U(8)$ & $\U(9)$ 
        \\\hline\hline
        0 & 1 & 1 & 1 & 1 & 1 & 1 & 1 & 1 & 1 \\
        1 & 2 & 2 & 2 & 2 & 2 & 2 & 2 & 2 & 2 \\
        2 & 3 & 6 & 6 & 6 & 6 & 6 & 6 & 6 & 6 \\
        3 & 2 & 8 & 12 & 12 & 12 & 12 & 12 & 12 & 12 \\
        4 & 4 & 12 & 21 & 26 & 26 & 26 & 26 & 26 & 26 \\
        5 & 4 & 16 & 30 & 42 & 48 & 48 & 48 & 48 & 48 \\
        6 & 3 & 24 & 42 & 68 & 83 & 90 & 90 & 90 & 90 \\
        7 & 6 & 24 & 60 & 96 & 130 & 148 & 156 & 156 & 156 \\
        8 & 4 & 31 & 82 & 142 & 198 & 240 & 261 & 270 & 270 \\
        9 & 6 & 40 & 102 & 186 & 290 & 362 & 412 & 436 & 446 \\
        10 & 8 & 48 & 124 & 260 & 410 & 548 & 636 & 694 & 721 \\
        11 & 6 & 52 & 164 & 332 & 564 & 780 & 952 & 1056 & 1122 \\
        12 & 9 & 68 & 204 & 426 & 766 & 1112 & 1390 & 1596 & 1716 \\
        13 & 6 & 64 & 238 & 536 & 1020 & 1524 & 1984 & 2324 & 2564 \\
        14 & 12 & 82 & 296 & 690 & 1319 & 2086 & 2786 & 3360 & 3762 \\
        15 & 10 & 96 & 340 & 832 & 1704 & 2780 & 3836 & 4732 & 5420 \\
        16 & 12 & 96 & 376 & 1040 & 2170 & 3667 & 5218 & 6608 & 7700 \\
        17 & 16 & 118 & 448 & 1236 & 2720 & 4754 & 7002 & 9044 & 10768 \\
        18 & 8 & 142 & 524 & 1470 & 3396 & 6150 & 9244 & 12294 & 14882 \\
        19 & 20 & 124 & 588 & 1736 & 4176 & 7796 & 12108 & 16440 & 20310 \\
        20 & 13 & 172 & 686 & 2058 & 5022 & 9874 & 15702 & 21800 & 27428 \\
        21 & 20 & 176 & 784 & 2398 & 6020 & 12296 & 20136 & 28572 & 36668 \\
        22 & 20 & 188 & 830 & 2862 & 7204 & 15176 & 25630 & 37210 & 48523 \\
        23 & 20 & 220 & 968 & 3304 & 8506 & 18424 & 32328 & 47888 & 63686 \\
        24 & 28 & 227 & 1096 & 3826 & 10157 & 22246 & 40342 & 61328 & 82914 \\
        25 & 22 & 256 & 1194 & 4310 & 12024 & 26696 & 49886 & 77746 & 107078 \\
        26 & 32 & 288 & 1398 & 4964 & 14120 & 31964 & 61042 & 97902 & 137332 \\
        27 & 22 & 304 & 1552 & 5588 & 16500 & 38248 & 73932 & 122148 & 174884 \\
        28 & 36 & 350 & 1621 & 6318 & 18926 & 45666 & 89160 & 151348 & 221106 \\
        29 & 26 & 320 & 1864 & 7144 & 21776 & 54080 & 107152 & 185784 & 277668 \\
        30 & 43 & 430 & 2044 & 8168 & 25122 & 63840 & 128784 & 226454 & 346300 \\\hline
    \end{tabular}
    \caption{The absolute degeneracies of the flavored MacDonald indices for $\U(N)$ gauge groups with $N=1, \dotsc, 9$.}
    \label{Tab:UFM}
\end{table}

\newpage 

\begin{table}[H]
    \centering
    \begin{tabular}{|c||c|c|c|c|c|c|c|c|c|}
        \hline
        $2h$ & $\SU(2)$ & $\SU(3)$ & $\SU(4)$ & $\SU(5)$ & $\SU(6)$ & $\SU(7)$ & $\SU(8)$ & $\SU(9)$ 
        \\\hline\hline
        0 & 1 & 1 & 1 & 1 & 1 & 1 & 1 & 1 \\
        1 & 0 & 0 & 0 & 0 & 0 & 0 & 0 & 0 \\
        2 & 3 & 3 & 3 & 3 & 3 & 3 & 3 & 3 \\
        3 & 4 & 4 & 4 & 4 & 4 & 4 & 4 & 4 \\
        4 & 9 & 4 & 9 & 9 & 9 & 9 & 9 & 9 \\
        5 & 12 & 8 & 10 & 16 & 16 & 16 & 16 & 16 \\
        6 & 22 & 11 & 22 & 19 & 26 & 26 & 26 & 26 \\
        7 & 36 & 16 & 26 & 36 & 40 & 48 & 48 & 48 \\
        8 & 60 & 26 & 51 & 46 & 59 & 62 & 71 & 71 \\
        9 & 88 & 32 & 66 & 64 & 96 & 112 & 118 & 128 \\
        10 & 135 & 44 & 108 & 85 & 120 & 133 & 152 & 159 \\
        11 & 204 & 68 & 148 & 116 & 208 & 220 & 266 & 288 \\
        12 & 302 & 100 & 241 & 159 & 245 & 284 & 303 & 356 \\
        13 & 432 & 128 & 326 & 180 & 404 & 400 & 554 & 580 \\
        14 & 627 & 187 & 489 & 257 & 469 & 533 & 593 & 760 \\
        15 & 900 & 252 & 676 & 308 & 780 & 688 & 1068 & 1104 \\
        16 & 1269 & 353 & 990 & 417 & 901 & 923 & 1112 & 1462 \\
        17 & 1764 & 468 & 1358 & 488 & 1436 & 1124 & 1974 & 1984 \\
        18 & 2451 & 642 & 1919 & 669 & 1664 & 1509 & 2025 & 2602 \\
        19 & 3384 & 844 & 2614 & 796 & 2600 & 1792 & 3496 & 3400 \\
        20 & 4629 & 1124 & 3660 & 1055 & 3067 & 2411 & 3466 & 4506 \\
        21 & 6268 & 1480 & 4958 & 1276 & 4620 & 2800 & 6034 & 5648 \\
        22 & 8460 & 1936 & 6759 & 1642 & 5529 & 3638 & 5803 & 7417 \\
        23 & 11376 & 2528 & 9080 & 2012 & 8124 & 4392 & 10162 & 9136 \\
        24 & 15183 & 3298 & 12288 & 2530 & 9857 & 5454 & 9762 & 11932 \\
        25 & 20124 & 4244 & 16354 & 3132 & 14040 & 6668 & 16840 & 14544 \\
        26 & 26595 & 5460 & 21789 & 3899 & 17245 & 8070 & 16243 & 18512 \\
        27 & 35008 & 7004 & 28770 & 4844 & 24152 & 9796 & 27478 & 22640 \\
        28 & 45828 & 8928 & 38043 & 5980 & 29918 & 11749 & 27027 & 28036 \\
        29 & 59700 & 11436 & 49842 & 7356 & 40920 & 14248 & 44432 & 34708 \\
        30 & 77522 & 14475 & 65161 & 9217 & 51052 & 16840 & 44471 & 41532 \\\hline
    \end{tabular}
    \caption{The absolute degeneracies of the flavored MacDonald index for $\SU(N)$ gauge groups with $N=2, \dotsc, 9$.}
    \label{Tab:SUFM}
\end{table}

\newpage
\bibliography{refs,HoloTwist}

\providecommand{\href}[2]{#2}\begingroup\raggedright\begin{thebibliography}{10}

\bibitem{Maldacena:1997re}
J.~M. Maldacena, {\it {The Large N limit of superconformal field theories and supergravity}},  {\em Adv. Theor. Math. Phys.} {\bf 2} (1998) 231--252, [\href{http://arxiv.org/abs/hep-th/9711200}{{\tt hep-th/9711200}}].

\bibitem{Beem:2013sza}
C.~Beem, M.~Lemos, P.~Liendo, W.~Peelaers, L.~Rastelli, and B.~C. van Rees, {\it {Infinite Chiral Symmetry in Four Dimensions}},  {\em Commun. Math. Phys.} {\bf 336} (2015), no.~3 1359--1433, [\href{http://arxiv.org/abs/1312.5344}{{\tt arXiv:1312.5344}}].

\bibitem{Gadde:2011uv}
A.~Gadde, L.~Rastelli, S.~S. Razamat, and W.~Yan, {\it {Gauge Theories and Macdonald Polynomials}},  {\em Commun. Math. Phys.} {\bf 319} (2013) 147--193, [\href{http://arxiv.org/abs/1110.3740}{{\tt arXiv:1110.3740}}].

\bibitem{Kinney:2005ej}
J.~Kinney, J.~M. Maldacena, S.~Minwalla, and S.~Raju, {\it {An Index for 4 dimensional super conformal theories}},  {\em Commun. Math. Phys.} {\bf 275} (2007) 209--254, [\href{http://arxiv.org/abs/hep-th/0510251}{{\tt hep-th/0510251}}].

\bibitem{Choi:2018hmj}
S.~Choi, J.~Kim, S.~Kim, and J.~Nahmgoong, {\it {Large AdS black holes from QFT}},  \href{http://arxiv.org/abs/1810.12067}{{\tt arXiv:1810.12067}}.

\bibitem{Dias:2022eyq}
O.~J.~C. Dias, P.~Mitra, and J.~E. Santos, {\it {New phases of $ \mathcal{N} $ = 4 SYM at finite chemical potential}},  {\em JHEP} {\bf 05} (2023) 053, [\href{http://arxiv.org/abs/2207.07134}{{\tt arXiv:2207.07134}}].

\bibitem{Prahar}
P.~Mitra, ``{New Phases of N=4 SYM}.'' {Talk at the \href{https://cmsa.fas.harvard.edu/event/gr_32323/}{Harvard CMSA General Relativity Seminar}}, 2023.

\bibitem{Gutowski:2004ez}
J.~B. Gutowski and H.~S. Reall, {\it {Supersymmetric AdS(5) black holes}},  {\em JHEP} {\bf 02} (2004) 006, [\href{http://arxiv.org/abs/hep-th/0401042}{{\tt hep-th/0401042}}].

\bibitem{Gutowski:2004yv}
J.~B. Gutowski and H.~S. Reall, {\it {General supersymmetric AdS(5) black holes}},  {\em JHEP} {\bf 04} (2004) 048, [\href{http://arxiv.org/abs/hep-th/0401129}{{\tt hep-th/0401129}}].

\bibitem{Chong:2005hr}
Z.~W. Chong, M.~Cvetic, H.~Lu, and C.~N. Pope, {\it {General non-extremal rotating black holes in minimal five-dimensional gauged supergravity}},  {\em Phys. Rev. Lett.} {\bf 95} (2005) 161301, [\href{http://arxiv.org/abs/hep-th/0506029}{{\tt hep-th/0506029}}].

\bibitem{Chong:2005da}
Z.~W. Chong, M.~Cvetic, H.~Lu, and C.~N. Pope, {\it {Five-dimensional gauged supergravity black holes with independent rotation parameters}},  {\em Phys. Rev. D} {\bf 72} (2005) 041901, [\href{http://arxiv.org/abs/hep-th/0505112}{{\tt hep-th/0505112}}].

\bibitem{Kunduri:2006ek}
H.~K. Kunduri, J.~Lucietti, and H.~S. Reall, {\it {Supersymmetric multi-charge AdS(5) black holes}},  {\em JHEP} {\bf 04} (2006) 036, [\href{http://arxiv.org/abs/hep-th/0601156}{{\tt hep-th/0601156}}].

\bibitem{Cabo-Bizet:2018ehj}
A.~Cabo-Bizet, D.~Cassani, D.~Martelli, and S.~Murthy, {\it {Microscopic origin of the Bekenstein-Hawking entropy of supersymmetric AdS$_{5}$ black holes}},  {\em JHEP} {\bf 10} (2019) 062, [\href{http://arxiv.org/abs/1810.11442}{{\tt arXiv:1810.11442}}].

\bibitem{Benini:2018ywd}
F.~Benini and E.~Milan, {\it {Black Holes in 4D $\mathcal{N}$=4 Super-Yang-Mills Field Theory}},  {\em Phys. Rev. X} {\bf 10} (2020), no.~2 021037, [\href{http://arxiv.org/abs/1812.09613}{{\tt arXiv:1812.09613}}].

\bibitem{Grant:2008sk}
L.~Grant, P.~A. Grassi, S.~Kim, and S.~Minwalla, {\it {Comments on 1/16 BPS Quantum States and Classical Configurations}},  {\em JHEP} {\bf 05} (2008) 049, [\href{http://arxiv.org/abs/0803.4183}{{\tt arXiv:0803.4183}}].

\bibitem{Chang:2013fba}
C.-M. Chang and X.~Yin, {\it {1/16 BPS states in $\mathcal N=$ 4 super-Yang-Mills theory}},  {\em Phys. Rev. D} {\bf 88} (2013), no.~10 106005, [\href{http://arxiv.org/abs/1305.6314}{{\tt arXiv:1305.6314}}].

\bibitem{Chang:2022mjp}
C.-M. Chang and Y.-H. Lin, {\it {Words to describe a black hole}},  {\em JHEP} {\bf 02} (2023) 109, [\href{http://arxiv.org/abs/2209.06728}{{\tt arXiv:2209.06728}}].

\bibitem{Choi:2022caq}
S.~Choi, S.~Kim, E.~Lee, and J.~Park, {\it {The shape of non-graviton operators for $SU(2)$}},  \href{http://arxiv.org/abs/2209.12696}{{\tt arXiv:2209.12696}}.

\bibitem{Choi:2023znd}
S.~Choi, S.~Kim, E.~Lee, S.~Lee, and J.~Park, {\it {Towards quantum black hole microstates}},  \href{http://arxiv.org/abs/2304.10155}{{\tt arXiv:2304.10155}}.

\bibitem{Budzik:2023xbr}
K.~Budzik, D.~Gaiotto, J.~Kulp, B.~R. Williams, J.~Wu, and M.~Yu, {\it {Semi-Chiral Operators in 4d ${N}=1$ Gauge Theories}},  \href{http://arxiv.org/abs/2306.01039}{{\tt arXiv:2306.01039}}.

\bibitem{Chang:2023zqk}
C.-M. Chang, L.~Feng, Y.-H. Lin, and Y.-X. Tao, {\it {Decoding stringy near-supersymmetric black holes}},  \href{http://arxiv.org/abs/2306.04673}{{\tt arXiv:2306.04673}}.

\bibitem{Budzik:2023vtr}
K.~Budzik, H.~Murali, and P.~Vieira, {\it {Following Black Hole States}},  \href{http://arxiv.org/abs/2306.04693}{{\tt arXiv:2306.04693}}.

\bibitem{Hatsuda:2022xdv}
Y.~Hatsuda and T.~Okazaki, {\it {$ \mathcal{N} $ = 2$^{*}$ Schur indices}},  {\em JHEP} {\bf 01} (2023) 029, [\href{http://arxiv.org/abs/2208.01426}{{\tt arXiv:2208.01426}}].

\bibitem{Bourdier:2015wda}
J.~Bourdier, N.~Drukker, and J.~Felix, {\it {The exact Schur index of $\mathcal{N}=4$ SYM}},  {\em JHEP} {\bf 11} (2015) 210, [\href{http://arxiv.org/abs/1507.08659}{{\tt arXiv:1507.08659}}].

\bibitem{Kapustin:2006hi}
A.~Kapustin, {\it {Holomorphic reduction of N=2 gauge theories, Wilson-'t Hooft operators, and S-duality}},  \href{http://arxiv.org/abs/hep-th/0612119}{{\tt hep-th/0612119}}.

\bibitem{Oh:2019bgz}
J.~Oh and J.~Yagi, {\it Chiral algebras from {$\Omega$}-deformation},  {\em Journal of High Energy Physics} {\bf 2019} (Aug., 2019) 143, [\href{http://arxiv.org/abs/1903.11123}{{\tt arXiv:1903.11123}}].

\bibitem{Jeong:2019pzg}
S.~Jeong, {\it {{SCFT}}/{{VOA}} correspondence via {$\Omega$}-deformation},  {\em Journal of High Energy Physics} {\bf 2019} (Oct., 2019) 171, [\href{http://arxiv.org/abs/1904.00927}{{\tt arXiv:1904.00927}}].

\bibitem{butsonEquivariantLocalizationFactorization}
D.~Butson, {\em Equivariant {{Localization}} in {{Factorization Homology}} and {{Vertex Algebras}} from {{Supersymmetric Gauge Theory}}}.
\newblock PhD thesis.

\bibitem{Bobev:2020vhe}
N.~Bobev, P.~Bomans, and F.~F. Gautason, {\it Comments on chiral algebras and {$\Omega$}-deformations},  {\em Journal of High Energy Physics} {\bf 2021} (Apr., 2021) 132, [\href{http://arxiv.org/abs/2010.02267}{{\tt arXiv:2010.02267}}].

\bibitem{costelloLecturesMathematicalAspects2015}
K.~Costello and C.~Scheimbauer, {\it Lectures on mathematical aspects of (twisted) supersymmetric gauge theories},  pp.~57--87.
\newblock 2015.
\newblock \href{http://arxiv.org/abs/1401.2676}{{\tt arXiv:1401.2676}}.

\bibitem{Budzik:2022mpd}
K.~Budzik, D.~Gaiotto, J.~Kulp, J.~Wu, and M.~Yu, {\it {Feynman diagrams in four-dimensional holomorphic theories and the Operatope}},  {\em JHEP} {\bf 07} (2023) 127, [\href{http://arxiv.org/abs/2207.14321}{{\tt arXiv:2207.14321}}].

\bibitem{Zwiebel:2005er}
B.~I. Zwiebel, {\it N=4 {{SYM}} to {{Two Loops}}: {{Compact Expressions}} for the {{Non-Compact Symmetry Algebra}} of the su(1,1|2) {{Sector}}},  {\em Journal of High Energy Physics} {\bf 2006} (Feb., 2006) 055--055, [\href{http://arxiv.org/abs/hep-th/0511109}{{\tt hep-th/0511109}}].

\bibitem{Nekrasov:2010ka}
N.~Nekrasov and E.~Witten, {\it {The Omega Deformation, Branes, Integrability, and Liouville Theory}},  {\em JHEP} {\bf 09} (2010) 092, [\href{http://arxiv.org/abs/1002.0888}{{\tt arXiv:1002.0888}}].

\bibitem{Costello:2016nkh}
K.~Costello, {\it {M-theory in the Omega-background and 5-dimensional non-commutative gauge theory}},  \href{http://arxiv.org/abs/1610.04144}{{\tt arXiv:1610.04144}}.

\bibitem{Cabo-Bizet:2019eaf}
A.~Cabo-Bizet and S.~Murthy, {\it {Supersymmetric phases of 4d $ \mathcal{N} $ = 4 SYM at large $N$}},  {\em JHEP} {\bf 09} (2020) 184, [\href{http://arxiv.org/abs/1909.09597}{{\tt arXiv:1909.09597}}].

\bibitem{Cabo-Bizet:2020nkr}
A.~Cabo-Bizet, D.~Cassani, D.~Martelli, and S.~Murthy, {\it {The large-$N$ limit of the 4d $ \mathcal{N} $ = 1 superconformal index}},  {\em JHEP} {\bf 11} (2020) 150, [\href{http://arxiv.org/abs/2005.10654}{{\tt arXiv:2005.10654}}].

\bibitem{Cabo-Bizet:2020ewf}
A.~Cabo-Bizet, {\it {From multi-gravitons to Black holes: The role of complex saddles}},  \href{http://arxiv.org/abs/2012.04815}{{\tt arXiv:2012.04815}}.

\bibitem{Choi:2021rxi}
S.~Choi, S.~Jeong, S.~Kim, and E.~Lee, {\it {Exact QFT duals of AdS black holes}},  \href{http://arxiv.org/abs/2111.10720}{{\tt arXiv:2111.10720}}.

\bibitem{Hosseini:2017mds}
S.~M. Hosseini, K.~Hristov, and A.~Zaffaroni, {\it {An extremization principle for the entropy of rotating BPS black holes in AdS$_{5}$}},  {\em JHEP} {\bf 07} (2017) 106, [\href{http://arxiv.org/abs/1705.05383}{{\tt arXiv:1705.05383}}].

\bibitem{hardy1918asymptotic}
G.~H. Hardy and S.~Ramanujan, {\it Asymptotic formula{\ae} in combinatory analysis},  {\em Proceedings of the London Mathematical Society} {\bf 2} (1918), no.~1 75--115.

\bibitem{Eleftheriou:2022kkv}
G.~Eleftheriou, {\it Root of unity asymptotics for {{Schur}} indices of 4d {{Lagrangian}} theories},  {\em Journal of High Energy Physics} {\bf 2023} (Jan., 2023) 81, [\href{http://arxiv.org/abs/2207.14271}{{\tt arXiv:2207.14271}}].

\bibitem{Murthy:2020rbd}
S.~Murthy, {\it {The growth of the $\frac{1}{16}$-BPS index in 4d $\mathcal{N}=4$ SYM}},  \href{http://arxiv.org/abs/2005.10843}{{\tt arXiv:2005.10843}}.

\bibitem{Agarwal:2020zwm}
P.~Agarwal, S.~Choi, J.~Kim, S.~Kim, and J.~Nahmgoong, {\it {AdS black holes and finite N indices}},  {\em Phys. Rev. D} {\bf 103} (2021), no.~12 126006, [\href{http://arxiv.org/abs/2005.11240}{{\tt arXiv:2005.11240}}].

\bibitem{GAP4}
The GAP~Group, {\em {GAP -- Groups, Algorithms, and Programming, Version 4.11.1}}, 2021.

\bibitem{Bonetti:2018fqz}
F.~Bonetti, C.~Meneghelli, and L.~Rastelli, {\it {{VOAs}} labelled by complex reflection groups and 4d {{SCFTs}}},  {\em Journal of High Energy Physics} {\bf 2019} (May, 2019) 155, [\href{http://arxiv.org/abs/1810.03612}{{\tt arXiv:1810.03612}}].

\bibitem{Arakawa:2023cki}
T.~Arakawa, T.~Kuwabara, and S.~M\"oller, {\it {Hilbert Schemes of Points in the Plane and Quasi-Lisse Vertex Algebras with $\mathcal{N}=4$ Symmetry}},  \href{http://arxiv.org/abs/2309.17308}{{\tt arXiv:2309.17308}}.

\bibitem{Costello:2018zrm}
K.~Costello and D.~Gaiotto, {\it {Twisted Holography}},  \href{http://arxiv.org/abs/1812.09257}{{\tt arXiv:1812.09257}}.

\bibitem{Thielemans:1991uw}
K.~Thielemans, {\it {A Mathematica package for computing operator product expansions}},  {\em Int. J. Mod. Phys. C} {\bf 2} (1991) 787--798.

\bibitem{Eguchi:1986sb}
T.~Eguchi and H.~Ooguri, {\it {Conformal and Current Algebras on General Riemann Surface}},  {\em Nucl. Phys. B} {\bf 282} (1987) 308--328.

\bibitem{Anderson:1987ge}
G.~Anderson and G.~W. Moore, {\it {Rationality in Conformal Field Theory}},  {\em Commun. Math. Phys.} {\bf 117} (1988) 441.

\bibitem{Mathur:1988gt}
S.~D. Mathur, S.~Mukhi, and A.~Sen, {\it {Reconstruction of Conformal Field Theories From Modular Geometry on the Torus}},  {\em Nucl. Phys. B} {\bf 318} (1989) 483--540.

\bibitem{Mathur:1988na}
S.~D. Mathur, S.~Mukhi, and A.~Sen, {\it {On the Classification of Rational Conformal Field Theories}},  {\em Phys. Lett. B} {\bf 213} (1988) 303--308.

\bibitem{Gaberdiel:2008pr}
M.~R. Gaberdiel and C.~A. Keller, {\it {Modular differential equations and null vectors}},  {\em JHEP} {\bf 09} (2008) 079, [\href{http://arxiv.org/abs/0804.0489}{{\tt arXiv:0804.0489}}].

\bibitem{Beem:2017ooy}
C.~Beem and L.~Rastelli, {\it {Vertex operator algebras, Higgs branches, and modular differential equations}},  {\em JHEP} {\bf 08} (2018) 114, [\href{http://arxiv.org/abs/1707.07679}{{\tt arXiv:1707.07679}}].

\bibitem{Pan:2021ulr}
Y.~Pan, Y.~Wang, and H.~Zheng, {\it {Defects, modular differential equations, and free field realization of N=4 vertex operator algebras}},  {\em Phys. Rev. D} {\bf 105} (2022), no.~8 085005, [\href{http://arxiv.org/abs/2104.12180}{{\tt arXiv:2104.12180}}].

\bibitem{Beem:2021zvt}
C.~Beem, S.~S. Razamat, and P.~Singh, {\it {Schur indices of class S and quasimodular forms}},  {\em Phys. Rev. D} {\bf 105} (2022), no.~8 085009, [\href{http://arxiv.org/abs/2112.10715}{{\tt arXiv:2112.10715}}].

\bibitem{Arakawa:2016hkg}
T.~Arakawa and K.~Kawasetsu, {\it {Quasi-lisse vertex algebras and modular linear differential equations}},  \href{http://arxiv.org/abs/1610.05865}{{\tt arXiv:1610.05865}}.

\bibitem{Klebanov:1998hh}
I.~R. Klebanov and E.~Witten, {\it {Superconformal field theory on three-branes at a Calabi-Yau singularity}},  {\em Nucl. Phys. B} {\bf 536} (1998) 199--218, [\href{http://arxiv.org/abs/hep-th/9807080}{{\tt hep-th/9807080}}].

\end{thebibliography}\endgroup
\bibliographystyle{JHEP}

\end{document}